\numberwithin{equation}{section}
\let\cite=\cites
\newcommand{\rmd}{\mathrm{d}}
\newcommand{\rmi}{\mathrm{i}}
\newcommand{\rme}{\mathrm{e}}
\newcommand{\vc}{u}
\DeclareMathOperator{\Tr}{\mathrm{Tr}}
\DeclareMathOperator{\Det}{\mathrm{Det}}
\DeclareMathOperator{\Ker}{\mathrm{Ker}}
\newcommand{\Ftwoone}[4]{%
\,{}_{2}F_{1}\bigg(\genfrac{}{}{0pt}{}{#1,\, #2}{#3} \bigg\vert #4\bigg)}
\newcommand{\Fthreetwo}[6]{%
\,{}_{3}F_{2}\bigg(\genfrac{}{}{0pt}{}{#1,\, #2,\, #3}{#4,\, #5} \bigg\vert #6\bigg)}
\DeclareMathOperator{\tr}{\mathrm{tr}}
\DeclareMathOperator*{\res}{\mathrm{res}}
\newtheorem{theorem}{Theorem}[section]
\newtheorem{lemma}{Lemma}[section]
\newtheorem{proposition}{Proposition}[section]
\newtheorem{corollary}{Corollary}[section]
\newtheorem{remark}{Remark}[section]
\begin{document}
\vspace*{.5in}
\title
{Emptiness formation probability of the six-vertex model and
the sixth Painlev\'e equation}

\thanks{This work is partially supported by the Russian Foundation for Basic Research,
Grant No. 13-01-00336}

\author{A. V. Kitaev}
\author{A. G. Pronko}

\address{V. A. Steklov Mathematical Institute,
Fontanka 27, 191023 St.~Petersburg, Russia}

\email{kitaev@pdmi.ras.ru}
\email{agp@pdmi.ras.ru}

\begin{abstract}

We show that the emptiness formation probability
of the six-vertex model  with domain wall boundary conditions
at its free-fermion point is
a $\tau$-function of the sixth Painlev\'e equation. Using this fact
we derive asymptotics of the emptiness formation probability
in the thermodynamic limit.

\end{abstract}

\subjclass[2000]{34E05, 82B23, 34M55, 33E17, 82B26}
\keywords{Fredholm determinants, Hankel determinants,
Fuchs pair, lattice models, correlation functions, 
asymptotic expansions, saddle-point method}
\maketitle
\tableofcontents
\section{Introduction}

One of the most intriguing facts about correlation functions
of solvable models of statistical mechanics is that in many cases
they can be described in terms of of the Painlev\'e equations or their
generalizations. Famous examples are provided by correlation functions
of the two-dimensional Ising model, related to the third
Painleve equation \cite{BMcCW-73,WMcCTB-73}, and correlation functions
of an impenetrable Bose gas, related to the fifth Painlev\'e equation \cite{SMJ-79,JM-81}.
At the same time this relationship carries mainly academic interest, whilst
asymptotic analysis of the correlation functions is usually performed via the associate integrable
structures rather than with the help of these ordinary differential equations (ODEs).

In this paper, we consider the six-vertex model with the domain wall boundary conditions
and discuss a particular correlation function, called
the emptiness formation probability (EFP). We show that, for the model
with the Boltzmann weights satisfying the free-fermion condition, this
correlation function appears to be a $\tau$-function of the
sixth Painlev\'e equation (P6). Using this connection we
derive asymptotic expansions of the EFP in the thermodynamic limit.

It is important to stress that here we consider asymptotics of P6 for the case where the coefficients
of the equation are large while
its argument is a finite parameter. Such type of asymptotics
is not thoroughly studied for P6 yet. Therefore to cope with this problem,
we initially intended to use one of the known asymptotic methods
for the Painlev\'e equations, namely, the isomonodromy deformation technique
by Jimbo-Its-Novokshenov \cite{J-82,IN-86} or Deift-Zhou \cite{DZh-93} asymptotic
analysis of the corresponding Riemann-Hilbert problem.
However, we were quite surprised that we have been able to construct
asymptotics (both the leading terms and corrections) without
any sophisticated techniques---just
with the help of the $\sigma$-form of P6. Indeed,
finding asymptotics, especially the correction terms, of solutions
of ODEs by substituting corresponding asymptotic expansions
into the ODE is the standard asymptotic technique
in the case where the asymptotics with respect to the
argument of the corresponding ODE is constructed. On this way, one
usually finds a recurrence relation for determination
of the coefficients. Our case is different: we arrive at the recurrence relation
for the first derivatives of the correction terms
so that to find the explicit formulas we need the initial data. It turn out that
for correction terms of arbitrary order these initial data can be obtained from
representations of the EFP at the critical points of P6. It seems that the same technique
may appear useful for other analogous problems.

\subsection{The model}

We recall that the six-vertex model (also known as the ice-type model)
is a statistical mechanics model defined on a square lattice.
The local states of the model are arrows placed on the edges of the lattice.
The admissible configurations of the model are those in which there are equal numbers
of incoming and outgoing arrows in each lattice vertex. The condition
of local conservation of the number of incoming and outgoing arrows
is known as the ice-rule, and it selects six possible arrow configurations
around the vertex, see Fig.~\ref{fig-sixv}.
For periodic boundary conditions the model was solved by Lieb \cite{L-67a,L-67b,L-67c}
and Sutherland \cite{S-67}
(for a review, see, e.g., \cite{B-82}).

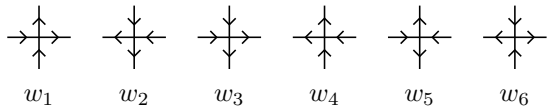
\begin{figure}
\centering

\psset{unit=12pt,linewidth=.05,dotsep=2pt}
\newcommand{\arr}{\lput{:U}{\begin{pspicture}(0,0)
\psline(-.1,.2)(.1,0)(-.1,-.2) \end{pspicture}}}

\begin{pspicture}(0,-1)(17,2)
\rput(0,0){
\pcline(0,1)(1,1)\arr \pcline(1,1)(2,1)\arr
\pcline(1,0)(1,1)\arr \pcline(1,1)(1,2)\arr}
\rput(3,0){
\pcline(1,1)(0,1)\arr \pcline(2,1)(1,1)\arr
\pcline(1,2)(1,1)\arr \pcline(1,1)(1,0)\arr}
\rput(6,0){
\pcline(0,1)(1,1)\arr \pcline(1,1)(2,1)\arr
\pcline(1,2)(1,1)\arr \pcline(1,1)(1,0)\arr}
\rput(9,0){
\pcline(2,1)(1,1)\arr \pcline(1,1)(0,1)\arr
\pcline(1,0)(1,1)\arr \pcline(1,1)(1,2)\arr}
\rput(12,0){
\pcline(0,1)(1,1)\arr \pcline(2,1)(1,1)\arr
\pcline(1,1)(1,0)\arr \pcline(1,1)(1,2)\arr}
\rput(15,0){
\pcline(1,1)(2,1)\arr \pcline(1,1)(0,1)\arr
\pcline(1,2)(1,1)\arr \pcline(1,0)(1,1)\arr}
\rput[B](1,-1){$w_1$}
\rput[B](4,-1){$w_2$}
\rput[B](7,-1){$w_3$}
\rput[B](10,-1){$w_4$}
\rput[B](13,-1){$w_5$}
\rput[B](16,-1){$w_6$}
\end{pspicture}

\caption{The six types of arrow configurations around a vertex
allowed in the six-vertex model and their Boltzmann weights.}
\label{fig-sixv}
\end{figure}

Domain wall boundary conditions (DWBC) are defined as follows.
Consider a finite square lattice formed by intersection of
$N$ horizontal and $N$ vertical lines (the so-called $N\times N$ lattice).
For such a lattice one may require that the local states on external edges are
fixed in a special way consistent with the ice-rule, namely, using the
description  of the local states in terms of arrows on edges,
that all arrows on the external
vertical edges (i.e., at the top and bottom boundaries) are incoming, while
those on the external horizontal edges (i.e., on the left and right boundaries)
are outgoing, see Fig.~\ref{fig-frsq}.
The six-vertex model on the $N\times N$ lattice with such fixed boundary
conditions is called the six-vertex model with DWBC;
it was considered for the first time by Korepin \cite{K-82}.

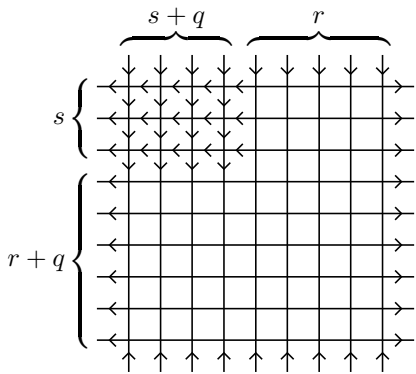
\begin{figure}
\centering

\psset{unit=12pt,linewidth=.05}
\newcommand{\arr}{\lput{:U}{\begin{pspicture}(0,0)
\psline(-.1,.2)(.1,0)(-.1,-.2) \end{pspicture}}}

\begin{pspicture}(-2.5,0)(10,11.5)
%
\multirput(1,0)(1,0){9}{\pcline(0,0)(0,1)\arr \pcline(0,10)(0,9)\arr}
\multirput(0,1)(0,1){9}{\pcline(1,0)(0,0)\arr \pcline(9,0)(10,0)\arr}
\multirput(1,8)(1,0){4}{\pcline(0,1)(0,0)\arr}
\multirput(1,7)(1,0){4}{\pcline(0,1)(0,0)\arr}
\multirput(1,6)(1,0){4}{\pcline(0,1)(0,0)\arr}

\multirput(0,9)(0,-1){3}{\pcline(2,0)(1,0)\arr}
\multirput(0,9)(0,-1){3}{\pcline(3,0)(2,0)\arr}
\multirput(0,9)(0,-1){3}{\pcline(4,0)(3,0)\arr}
\multirput(0,9)(0,-1){3}{\pcline(5,0)(4,0)\arr}

\multirput(5,7)(0,1){3}{\psline(0,0)(4,0)}
\multirput(1,1)(0,1){6}{\psline(0,0)(8,0)}

\multirput(1,1)(1,0){4}{\psline(0,0)(0,5)}
\multirput(5,1)(1,0){5}{\psline(0,0)(0,8)}


\rput{-90}(-.5,8)
  {$\underbrace{\begin{pspicture}(.75,0)(3.25,0)\end{pspicture}}$}
\rput[r](-1,8){$s$}

\rput{-90}(-.5,3.5)
  {$\underbrace{\begin{pspicture}(.25,0)(5.75,0)\end{pspicture}}$}
\rput[r](-1,3.5){$r+q$}

\rput{-180}(7,10.5)
  {$\underbrace{\begin{pspicture}(0.25,0)(4.75,0)\end{pspicture}}$}
\rput[B](7,11){$r$}

\rput{-180}(2.5,10.5)
  {$\underbrace{\begin{pspicture}(0.25,0)(3.75,0)\end{pspicture}}$}
\rput[B](2.5,11){$s+q$}

\end{pspicture}

\caption{An $N\times N$ lattice ($N=r+s+q$) with domain wall boundary conditions, with
the additional condition that the vertices of an $s\times (s+q)$
lattice at the top left corner are all of type $2$---the probability
of this configuration is $F_{r,s,q}$ (shown the case $r=5$, $s=3$, and $q=1$).}
\label{fig-frsq}
\end{figure}

To define the partition function of this model, let us denote by $\Omega_N$
the set of all arrow configurations of the $N\times N$ lattice with
DWBC and with the ice-rule obeyed in each lattice vertex.
Assigning to the $i$th arrow configuration around a vertex the Boltzmann weight
$w_i$, see Fig.~\ref{fig-sixv}, the partition function $Z_N$ is defined as
\begin{equation}
Z_N=\sum_{\mathcal{C}\in\Omega_N}\prod_{i=1,\dots,6} w_i^{n_i(\mathcal{C})}.
\end{equation}
Here $n_i(\mathcal{C})$, $i=1,\dots,6$, is the number of vertices
of type $i$ in the configuration $\mathcal{C}$, $\sum_i n_i(\mathcal{C})=N^2$.
The partition function was computed, for a more general inhomogeneous
(with position-dependent weights)
model, in terms of an $N\times N$ determinant by Izergin \cite{I-87};
for the homogeneous model
the determinant has the Hankel structure \cite{ICK-92}.
The free energy per site was derived for various regimes
and by various methods by Korepin and Zinn-Justin \cite{KZj-00,Zj-00}.
A detailed analysis of the asymptotic expansion
of the partition function in the thermodynamic limit, using
the connection of Hankel determinants and matrix models
with the Riemann-Hilbert problem
\cite{FIK-93}, was given in paper \cite{BKZ-02} and
a series of papers by Bleher and
collaborators \cite{BF-06,BL-09a,BL-09b,BL-10,BB-12}.

\subsection{Emptiness formation probability}
To define the EFP, let us
set $N=r+s+q$, where $r$, $s$, and $q$ are nonnegative integers,
and consider configurations
of six-vertex model on the $N\times N$ lattice with DWBC
in which the vertices belonging to the
$s\times (s+q)$ rectangle at the top left corner of the lattice
are all of type 2, see Figs.~\ref{fig-sixv} and \ref{fig-frsq}.
We denote the set of such configurations as $\Omega_{r,s,q}$;
obviously, $\Omega_{r,s,q}\subset \Omega_N$. We denote the EFP
by $F_{r,s,q}$ and define it as a probability of observing a configuration
belonging to $\Omega_{r,s,q}$ for a given
$r,s,q$,
\begin{equation}
F_{r,s,q}=Z_N^{-1}
\sum_{\mathcal{C}\in\Omega_{r,s,q}}\prod_{i=1,\dots,6} w_i^{n_i(\mathcal{C})}.
\end{equation}
In what follows we always assume that
$s\leq r$, since otherwise $F_{r,s,q}$ is just zero
(for $s>r$ the set $\Omega_{r,s,q}$ is empty).
A closed expression for the EFP
in the terms of a multiple ($s$-fold) integral was obtained by Colomo and the
second author in \cite{CP-07b}. As it was demonstrated in \cite{CP-09}, this
result make it possible to address phase separation phenomena in the model,
and in one of its special cases provides the limit shape of large altering-sign
matrices \cite{CP-08}.

In this paper we study the EFP of the six-vertex model with DWBC in
the case of the weights obeying the so-called
free-fermion condition, $w_1w_2+w_3w_4=w_5w_6$. Namely, we choose
the weights in the form
\begin{equation}\label{FF}
w_1=w_2=\sqrt{1-\alpha},\qquad
w_3=w_4=\sqrt{\alpha},\qquad
w_5=w_6=1,
\end{equation}
where $\alpha\in(0,1)$ is a parameter. We note that this case is
related to enumeration of domino tilings, where
the parameter $\alpha$ plays the role of ``bias'' \cite{EKLP-92,CEP-96,CP-13}.
In the main text we prove the following result.
\begin{theorem}\label{EFPasP6}
Define
\begin{equation}\label{eq:sigma-frsq}
\sigma=\alpha(\alpha-1)
\frac{\rmd}{\rmd\alpha}\log F_{r,s,q}
-\frac{(r+q+s)^2}{4}\alpha+\frac{(r+q+s)q+2rs}{4},
\end{equation}
then
\begin{multline}\label{sigmaform}
\alpha^2(\alpha-1)^2\sigma'(\sigma'')^2
+\left\{(1-2\alpha)(\sigma')^2+2\sigma\sigma'+\nu_1\nu_2\nu_3\nu_4\right\}^2
\\
=\left(\sigma'+\nu_1^2\right)
\left(\sigma'+\nu_2^2\right)
\left(\sigma'+\nu_3^2\right)
\left(\sigma'+\nu_4^2\right),\qquad
\sigma'\equiv\frac{\rmd}{\rmd\alpha}\sigma,
\end{multline}
where
\begin{equation}
\nu_1=\nu_3=-\frac{r+q+s}{2},\quad
\nu_2=-\frac{r-q-s}{2},\quad
\nu_4=\frac{r+q-s}{2}.
\end{equation}
\end{theorem}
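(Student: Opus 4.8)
The plan is to realize the normalized sum defining $F_{r,s,q}$ as a Hankel determinant of a semiclassical weight and then to invoke the isomonodromic interpretation of such determinants. Since $F_{r,s,q}=Z_N^{-1}\times(\text{restricted partition sum})$, I would first note that at the free-fermion point \eqref{FF} the partition function $Z_N$ has an explicit product form, so $\rmd\log Z_N/\rmd\alpha$ is a rational function of $\alpha$ that can be absorbed into the affine shift $-\tfrac{(r+q+s)^2}{4}\alpha+\tfrac{(r+q+s)q+2rs}{4}$ in \eqref{eq:sigma-frsq}. The genuine transcendental object is the restricted sum, and here I would specialize the multiple-integral formula of \cite{CP-07b} to the weights \eqref{FF}. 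At the free-fermion point the $s$-fold integral is expected to collapse into a single $s\times s$ determinant of moments $\mu_{j+k}=\int\lambda^{j+k}\,w(\lambda;\alpha)\,\rmd\lambda$ of an explicit weight $w(\lambda;\alpha)$ carrying three exponent parameters built from $r,s,q$ (its moments being the ${}_2F_1$ and ${}_3F_2$ functions for which the preamble reserves notation); the dependence on $\alpha$ should enter as the position of a fourth, movable regular singular point, the other three sitting at $\lambda=0,1,\infty$.

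Next I would introduce the monic orthogonal polynomials $P_n(\lambda)$ associated with $w(\lambda;\alpha)$ and assemble their Lax (Fuchs) pair: the three-term recurrence together with the differential relation in $\lambda$ coming from the Pearson equation $\rmd(\sigma_0 w)/\rmd\lambda=\tau_0\,w$ satisfied by a semiclassical weight with polynomial $\sigma_0,\tau_0$. These relations give a rank-two linear system with four regular singular points, and differentiating in the external parameter $\alpha$ produces the companion deformation equation. The compatibility (zero-curvature) condition of this pair is a Schlesinger system whose scalar reduction is the sixth Painlev\'e equation, with $\alpha$ playing the role of the cross-ratio of the four singular points and the local exponents supplying the four formal monodromy parameters.

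I would then identify $\sigma$ with the isomonodromic Hamiltonian. The standard fact is that $\alpha(\alpha-1)\,\rmd\log D_s/\rmd\alpha$, where $D_s$ is the Hankel determinant, equals the Jimbo--Miwa--Okamoto Hamiltonian $H(\alpha)$ up to an explicit rational function of $\alpha$; eliminating the dynamical variables from Hamilton's equations leaves a single second-order ODE for $H$, which is precisely the $\sigma$-form \eqref{sigmaform}. The shift terms in \eqref{eq:sigma-frsq}, combined with the $Z_N$ contribution above, are exactly what is needed to turn $\alpha(\alpha-1)\,\rmd\log F_{r,s,q}/\rmd\alpha$ into this Hamiltonian. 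What then remains is to read off the four parameters on the right-hand side of \eqref{sigmaform} from the local exponents of the Fuchs pair at $0,1,\alpha,\infty$ and to check that they collapse to $\nu_1=\nu_3=-(r+q+s)/2$, $\nu_2=-(r-q-s)/2$, $\nu_4=(r+q-s)/2$; I expect the coincidence $\nu_1=\nu_3$ to reflect a resonance forced by the free-fermion structure.

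The main obstacle I anticipate is twofold. First is the rigorous passage from the multiple integral to the single Hankel determinant and the precise determination of $w(\lambda;\alpha)$ and its Pearson data $(\sigma_0,\tau_0)$, since the exact exponents dictate everything downstream. Second is the bookkeeping of the scalar prefactors---the affine shift and the overall $\alpha$-dependent normalization coming from $Z_N$---needed to match $\sigma$ to the Hamiltonian exactly, rather than merely up to an undetermined additive constant. Once the weight and the normalization are pinned down, the passage to \eqref{sigmaform} is the standard, if lengthy, elimination within the Okamoto Hamiltonian formalism.
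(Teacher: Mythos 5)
Your route is legitimate and would prove the theorem, but it is genuinely different from the one the paper takes; in fact it is essentially the alternative the authors explicitly acknowledge and decline, namely extracting the result from the Hankel/orthogonal-polynomial side via Okamoto's theory as in Forrester--Witte (and Baik--Rains, Johansson). The paper instead works entirely with the Fredholm determinant representation \eqref{EFPasFdet}: the kernel \eqref{Klamu} is integrable in the Its--Izergin--Korepin--Slavnov sense, the resolvent is built from the solutions $f_\pm$ of \eqref{fKe} after proving $\det X(\lambda)=1$, the Fuchs pair of Proposition \ref{Fpair} is obtained by direct differentiation of the integral equations (and re-derived from the Riemann--Hilbert problem), and then $\alpha(\alpha-1)\,\rmd\log\Det(1-\hat K)/\rmd\alpha$ is matched to the Jimbo--Miwa Hamiltonian through the trace identity \eqref{dadetK}, which yields \eqref{sigmaform} with the stated $\nu_k$. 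What your approach buys is access to well-documented machinery (Andreief collapse of \eqref{oldEFP} to a Hankel determinant, semiclassical ladder relations, elimination in the Okamoto Hamiltonian system); what the paper's approach buys is self-containedness and, more importantly, reusability: the same Fredholm/IIKS structure supplies the $\alpha\to0$ expansion of Proposition \ref{Propato0} and the saddle-point analysis of Appendix B, which are needed as initial data for the asymptotic theorems. Two details of your sketch deserve correction rather than hedging: at the free-fermion point $Z_N\equiv1$ (this follows from \cite{I-87,EKLP-92} and is stated in Sect.~\ref{Sec:tworegimes}), so there is no $Z_N$ contribution to absorb into the affine shift --- that shift comes entirely from the normalization of the Jimbo--Miwa--Okamoto $\sigma$-function; and the movable singular point of the weight in \eqref{oldEFP} sits at $z=-(1-\alpha)/\alpha$ rather than at $\alpha$, so a M\"obius normalization of the four singular points is required before $\alpha$ can be identified with the Painlev\'e time, with the attendant integer shifts between the exponents of the weight and the monodromy parameters $\theta_0=s$, $\theta_1=r+q$, $\theta_\alpha=-r$, $\theta_\infty=-(s+q)$; your guess that $\nu_1=\nu_3$ is a resonance is correct --- it is the relation $\theta_\infty=-(\theta_0+\theta_1+\theta_\alpha)$ tied to the monodromy matrices being $\pm I$.
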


\begin{remark}
Note that $F_{r,s,q}$ is a polynomial in $\alpha$, see for details
representation \eqref{newEFP} obtained in Sect.~\ref{Sec:EFP}.
Thus, \eqref{eq:sigma-frsq} implies that $\sigma=\sigma(\alpha)$
is a rational solution of \eqref{sigmaform}.
The corresponding solution of the canonical form of $P6$ is also rational.
This solution defines isomonodromy deformations
of the associated linear Fuchsian ODE with the monodromy matrices $\pm I$
(see Sect.~\ref{Sec:FDR} for details).
\end{remark}

\begin{remark}
The EFP can be expressed as
\begin{equation}\label{eq:frsq-tau}
F_{r,s,q}=C_{r,s,q}\alpha^{rs}(1-\alpha)^{r(r+q)}\tau,
\end{equation}
where quantity $C_{r,s,q}$ is independent of $\alpha$ and $\tau=\tau(\alpha)$ is the
Jimbo-Miwa $\tau$-function of $P6$ \cite{JM-81},
see also Sect.~\ref{Sec:tau-f}.
The pre-factor of the $\tau$-function
in \eqref{eq:frsq-tau} is a matter of the definition
(normalization) of the $\tau$-function,
therefore $F_{r,s,q}$ is, in fact, the $\tau$-function of $P6$.
\end{remark}

We recall some basic fact about the theory of P6 in Sect.~\ref{Sec:tau-f}.
Our starting point in proving the theorem is the formula for the EFP
expressing it as a Fredholm determinant of some linear integral operator
of the so-called integrable type. This representation, among some others,
was derived from the general multiple integral representation
by specifying it to the case of the weights \eqref{FF} in \cite{P-13}.
We review some of these results in Sect.~\ref{Sec:EFP}. The
theorem is proven in Sects.~\ref{Sec:2} and \ref{Sec:3}.

As a comment to this result, we mention that appearance of a Painlev\'e
equation in the context of the six-vertex model with DWBC was already noticed
in \cite{ST-10} where it was shown that the partition function
in the ferroelectric regime serves as a $\tau$-function of the fifth
Painlev\'e equation.
In turn, P6 appeared in the study of the
EFP at zero-temperature for the Heisenberg XX0 spin chain \cite{DIZ-97},
which can be seen as the six-vertex model at its free-fermion
point with periodic boundary conditions.

In fact, the relation of $F_{r,s,q}$ with $P6$ reported
in Theorem~\ref{EFPasP6} can be extracted from the works by Johansson, Baik and Rains,
and Forrester and Witte
\cite{J-00,BR-01,FW-04} (see a discussion at the end of Sect.~\ref{Sec:EFP}).
Here we present a different proof
based on Fredholm determinants of integrable
linear integral operators, developed
in \cite{SMJ-79,IIKS-90}. This method is also
important for asymptotic study of $F_{r,s,q}$ as it can be used
to provide initial conditions in our proof of Theorem~\ref{TDL2} below.
The connection of the EFP with
P6 allows us to study behavior of the EFP in the thermodynamic limit,
in which the integers $r$, $s$, and $q$ are large, with their ratios fixed;
the variable $\alpha$, defining the weights,
is considered as a parameter. An important special case of such a limit is where the
region of the vertices of type 2 at the top left corner
has a macroscopically square shape, that corresponds to the ratios $q/s$ and $q/r$
vanishing in the limit.  In this paper we compute the asymptotic
expansion for $F_{r,s,q}$ in the case of $q=0$.

\subsection{Main result: asymptotic expansions}

Define
\begin{equation}
v\equiv\frac{s}{r},\quad
\vc\equiv\frac{1-\sqrt\alpha}{1+\sqrt\alpha},\qquad v,u\in (0,1).
\end{equation}
As discussed in Sect.~\ref{Sec:tworegimes}, the EFP has different asymptotic expansions
for $u< v < 1$ (referred to as the EFP in the disordered regime)
and for $0< v < u$ (referred to as the EFP in the ordered regime).
Our results about the EFP in the thermodynamic limit are formulated as follows.
\begin{theorem}\label{TDL}
For $s/r=v\in(\vc,1)$, as $s,r\to\infty$,
\begin{multline}\label{lowertail}
\log F_{r,s,0}
=-\varphi s^2- \frac{1}{12}\log s+
\frac{1}{8}\log\frac{(1-\vc^2)v^2}{v^2-\vc^2}-\frac{1}{12}\log\frac{1-v^2}{2}
+\zeta'(-1)
\\
+\sum_{k=1}^n a_{2k}s^{-2k} +O(s^{-2n-2}),
\end{multline}
where
\begin{equation}\label{varphi}
\varphi=\log\frac{v}{\vc}-\frac{(1-v)^2}{2v^2}\log\frac{1-v}{1-\vc}
-\frac{(1+v)^2}{2v^2}\log\frac{1+v}{1+\vc},
\end{equation}
$\zeta'(-1)=-0.1654211437\ldots$ is the derivative of the Riemann
zeta-function, and $a_{2k}=a_{2k}(u,v)$
are rational functions of $u^2$ and $v^2$ with poles at $v^2=1$ and 
$v^2=u^2$. In particular,
\begin{align}
a_2&
=\frac{u^2(1-v^2)(2v^4+5u^2v^2-u^4)}{64(v^2-u^2)^3}
-\frac{(1+v^2)(v^2-(1-v^2)^2)}{120(1-v^2)^2}-\frac{1}{64},
\notag\\
a_4&
=-\frac{u^2(1-v^2)}{256(v^2-u^2)^6}\big[10v^{10}u^2-2v^{10}-90v^8u^2
\notag\\ &\quad
+140v^8u^4+105v^6u^6-160v^6u^4-4v^4u^8
+5v^4u^6-6v^2u^8+v^2u^{10}+u^{10}\big]
\notag\\ &\quad
-\frac{v^6(v^6-4v^4+5v^2-10)}{504(1-v^2)^4}+\frac{31}{16128}.
\end{align}
The estimate $O(s^{-2n-2})$ 
of the error term in \eqref{lowertail} is uniform with respect to the parameter 
$v$ on any compact subset of the interval $(u, 1)$.
\end{theorem}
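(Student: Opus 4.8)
The plan is to derive \eqref{lowertail} directly from the $\sigma$-form of P6 established in Theorem~\ref{EFPasP6}, specialised to $q=0$ and treated as an ODE with large parameters. With $q=0$ one has $\nu_1=\nu_3=-\tfrac{r+s}{2}$ and $\nu_2=-\nu_4=-\tfrac{r-s}{2}$, all of order $s$ once $r=s/v$ is held fixed, so that $\sigma$ itself is of order $s^2$. Writing $\nu_i=s\tilde\nu_i$ with $\tilde\nu_i=O(1)$ and positing
\[
\sigma(\alpha)=\sum_{j\ge 0}\sigma_j(\alpha)\,s^{2-2j},
\]
I would substitute this into \eqref{sigmaform} and organise the identity by powers of $s$. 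Since the term $\alpha^2(\alpha-1)^2\sigma'(\sigma'')^2$ is of order $s^6$ while both $\{(1-2\alpha)(\sigma')^2+2\sigma\sigma'+\nu_1\nu_2\nu_3\nu_4\}^2$ and $\prod_i(\sigma'+\nu_i^2)$ are of order $s^8$, the dominant balance occurs at order $s^8$ and gives the first-order \emph{algebraic} relation
\[
\bigl\{(1-2\alpha)(\sigma_0')^2+2\sigma_0\sigma_0'+\tilde\nu_1\tilde\nu_2\tilde\nu_3\tilde\nu_4\bigr\}^2
=\prod_{i=1}^4\bigl(\sigma_0'+\tilde\nu_i^2\bigr),
\]
a quartic in $p:=\sigma_0'$ that determines $\sigma_0'=G(\sigma_0,\alpha)$ on each branch. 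Selecting the branch analytic on the disordered interval $\vc<v<1$ (the other branch yields the ordered-regime expansion of Sect.~\ref{Sec:tworegimes}) and integrating, I would identify $\sigma_0=-\alpha(\alpha-1)\varphi'-\tfrac{(1+v)^2}{4v^2}\alpha+\tfrac{1}{2v}$ by comparison with \eqref{eq:sigma-frsq}, thereby recovering $\varphi$ as in \eqref{varphi} up to a $v$-dependent constant of integration in $\alpha$.

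Next I would set up the correction hierarchy by linearising \eqref{sigmaform} about $s^2\sigma_0$. The crucial structural observation is that the $\sigma''$-dependence lives entirely in the subleading $s^6$ term, so the second-derivative contribution always enters as a source fed by the previous order; consequently the equation at order $s^{8-2j}$ is effectively first order and, written in terms of $w:=\tfrac{\rmd}{\rmd\alpha}\log F$, determines the expansion coefficients of $w$. These coefficients are exactly the \emph{derivatives} $\varphi'$, $C'$, $a_2'$, $a_4'$, $\dots$ of the terms appearing in \eqref{lowertail}; this is the sense in which the recurrence produces the first derivatives of the correction terms. Solving it order by order yields each $a_{2k}'$ as an explicit rational function of $\alpha$ (entering through $\vc$) and of $v$, after which one integrates in $\alpha$ to obtain $a_{2k}$ up to a further $v$-dependent constant. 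Along the way I would verify that these rational functions have poles only at $v^2=\vc^2$ and $v^2=1$, as asserted for the $a_{2k}(u,v)$; this pole structure is what will ultimately deliver uniformity of the $O(s^{-2n-2})$ error on compact subsets of $(u,1)$, together with uniform bounds on the remainder of the hierarchy.

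The crucial and hardest step is the determination of the constants that \eqref{sigmaform} cannot see. Differentiating \eqref{eq:sigma-frsq} in $\alpha$ annihilates the $-\tfrac{1}{12}\log s$ term outright, and each integration in $\alpha$ introduces a $v$-dependent constant; hence the coefficient $-\tfrac{1}{12}$ of $\log s$, the full constant term $\tfrac18\log\frac{(1-\vc^2)v^2}{v^2-\vc^2}-\tfrac1{12}\log\frac{1-v^2}{2}+\zeta'(-1)$, and the integration constants of $\varphi$ and of every $a_{2k}$ are all invisible to the ODE. I would fix them by evaluating the EFP asymptotically at a critical point of P6, either $\alpha\to1$ (where $\vc\to0$) or $\alpha\to0$ (where $\vc\to1$), using the Fredholm-determinant representation of Sect.~\ref{Sec:EFP} obtained in \cite{P-13}. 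At such a point the integrable kernel degenerates and the Fredholm determinant reduces to an explicitly evaluable Hankel/Toeplitz-type determinant whose large-$s$ asymptotics are governed by the Barnes $G$-function; the standard expansion of $\log G$ is precisely the origin of both the $-\tfrac{1}{12}\log s$ term and the universal constant $\zeta'(-1)$, while its $v$-dependent prefactors furnish the integration constants. Matching this independent evaluation to the ODE-generated series then pins down every undetermined constant and completes the proof. I expect the genuine labour to reside here: producing the critical-point asymptotics not merely to leading order but to all orders in $1/s^2$ with sufficient precision and uniformity, so that they can serve as initial data for the recurrence at arbitrary $n$, and confirming that no logarithmic or fractional-power terms beyond those displayed in \eqref{lowertail} can arise.
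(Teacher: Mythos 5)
Your proposal follows the paper's route in all essentials: the even-power ansatz \eqref{sigmas} substituted into the $\sigma$-form of P6, a first-order leading balance (the paper's \eqref{reducedODE}), a correction hierarchy that determines only the derivatives $a_{2k}'$, and initial data supplied by critical-point asymptotics of the EFP, with the Barnes $G$-function producing the $-\tfrac{1}{12}\log s$ term and $\zeta'(-1)$ --- this is precisely Sect.~\ref{Sec:Disorder}. Two of your steps, however, would fail as stated. First, branch selection ``by analyticity on the disordered interval'' does not discriminate: all four candidate leading terms \eqref{gensol}, \eqref{partsol} are analytic for $\alpha\in(\beta,1)$, and the paper shows that the candidates $(\varphi)_{\mathrm{II}}$ and $(\varphi)_{-}$ even share the $-\log(1-\alpha)$ and $\tfrac{1-v}{2v}(1-\alpha)$ terms of \eqref{phitoone}, differing first at order $(1-\alpha)^2$. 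The selection therefore requires the $\alpha\to1$ expansion of $F_{r,s,0}$ through the $(1-\alpha)^2$ coefficient, which is exactly what the Hahn-polynomial computation of Proposition~\ref{prop:al-to-1} delivers; a leading-order evaluation of the determinant at $\alpha=1$ is not enough. Your hedge ``either $\alpha\to1$ or $\alpha\to0$'' also cannot work for this regime: as $\alpha\to0$ one has $u\to1$, so the disordered interval $(u,1)$ collapses, and the coefficients $\sigma_{-2k}$ have poles at $\alpha=\beta$ by \eqref{recurr}, blocking continuation of the expansion to $\alpha=0$; the point $\alpha=0$ serves only the ordered regime of Theorem~\ref{TDL2}.

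Second, you pass over the step from a formal series solving the ODE order by order to a genuine asymptotic expansion of the actual solution, appealing only to unspecified ``uniform bounds on the remainder of the hierarchy.'' This is where the paper invests real work (Remark~\ref{RemMethod} and Appendix~\ref{Appendix-A}): Wasow's existence theorem does not apply to the $\sigma$-form directly, because the equation is quadratic in $\sigma''$, so one must pass to Okamoto's Hamiltonian form of P6 via the birational correspondence between the canonical variables and $(\sigma,\sigma',\sigma'')$, verify assumptions (A) and (B) there (the nondegeneracy $\det\mathcal{A}\neq0$ holding on closed segments of $(\beta,1)$ is precisely what yields the uniformity of the error term on compact subsets of $(u,1)$ claimed in the theorem), and then identify the genuine solution so obtained with the $\sigma$-function of $F_{r,s,0}$ by matching the $\alpha\to1$ data, using that asymptotics through order $(1-\alpha)^2$ determines the P6 solution uniquely, so that possible exponentially small corrections cannot spoil the identification. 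Without these two ingredients --- second-order critical-point data at $\alpha=1$ and the Wasow/Hamiltonian existence-and-matching argument --- your substitution scheme produces the correct formal series but not a proof of Theorem~\ref{TDL}.
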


\begin{theorem}\label{TDL2}
For $s/r=v\in(0,u)$, as $s,r\to\infty$,
\begin{multline}\label{uppertail}
\log \left(1-F_{r,s,0}\right)
=-\chi s- \log s
+\log\frac{v^2(1-\vc^2)^2}{32\pi\sqrt{\vc}[(\vc-v) (1-\vc v)]^{3/2}}
\\
+\sum_{k=1}^n b_{k}s^{-k} +O(s^{-n-1}),
\end{multline}
where
\begin{equation}\label{chiuv}
\chi=\frac{4}{v}
\log\left(\frac{\sqrt{1-v\vc}+\sqrt{\vc(\vc-v)}}{\sqrt{1-\vc^2}}\right)
-4\log \left(\frac{\sqrt{\vc(1-v\vc)}+\sqrt{\vc-v}}{\sqrt{\left(1-\vc^2\right)v}}
\right)
\end{equation}
and $b_k=b_k(u,v)$. In particular,
\begin{align}
b_1&=-\frac{9v^2(1+\vc^4)-36v\vc(1+v^2)(1+\vc^2)-(2-142v^2+2v^4)\vc^2}
{48\sqrt{\vc}[(\vc-v) (1-\vc v)]^{3/2}},
\notag\\
b_2&=\frac{v^2}{64\vc(\vc-v)^3 (1-v\vc)^3}
\big[3v^2(\vc^8+1)-8v \vc(v^2+1)(\vc^6+1)
\notag\\ &\quad
+4\vc^2(10v^4+v^2+10)(\vc^4+1)-120v\vc^3(v^2+1)(\vc^2+1)
\\ &\quad
-(16v^4-370v^2+16)\vc^4\big].
\end{align}
The estimate $O(s^{-n-1})$ of the error term in \eqref{uppertail} is uniform
with respect to the parameter $v$ on any compact subset of the 
interval $(0, u)$.
\end{theorem}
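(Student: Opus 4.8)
The plan is to extract \eqref{uppertail} from the $\sigma$-form \eqref{sigmaform} of Theorem~\ref{EFPasP6}, regarding $r$ and $s$ as large with $v=s/r$ fixed, and to fix the resulting constants of integration from the Fredholm determinant representation of the EFP recalled in Sect.~\ref{Sec:EFP}. Setting $q=0$ gives $\nu_1=\nu_3=-(r+s)/2$ and $\nu_2=-\nu_4=-(r-s)/2$, and a direct check shows that the polynomial part of \eqref{eq:sigma-frsq}, namely $\sigma_{\mathrm p}=-\tfrac14(r+s)^2\alpha+\tfrac12 rs$, is an exact rational solution of \eqref{sigmaform}: with these $\nu_i$ the braced expression and the right-hand side of \eqref{sigmaform} vanish identically on $\sigma_{\mathrm p}$. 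The ordered regime $v\in(0,\vc)$ is exactly where $F_{r,s,0}\to1$, so that $\delta:=\sigma-\sigma_{\mathrm p}=\alpha(\alpha-1)\tfrac{\rmd}{\rmd\alpha}\log F_{r,s,0}$ is exponentially small, and the whole computation is a controlled expansion of \eqref{sigmaform} about the trivial solution $\sigma_{\mathrm p}$.

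Inserting $\sigma=\sigma_{\mathrm p}+\delta$, I would exploit the two algebraic simplifications that make the analysis tractable: since $\sigma_{\mathrm p}'=-\nu_1^2=-\nu_3^2$ one has $\sigma'+\nu_1^2=\sigma'+\nu_3^2=\delta'$, while $\sigma'+\nu_2^2=\sigma'+\nu_4^2=-rs+\delta'$, so the right-hand side of \eqref{sigmaform} is exactly $(\delta')^2(-rs+\delta')^2$. Because $\sigma_{\mathrm p}$ annihilates both sides, every surviving term is of second order in $\delta$, and the leading balance is a homogeneous quadratic relation among $\delta$, $\delta'$, $\delta''$. Writing $\log(1-F_{r,s,0})=-\chi s-\log s+c_0+\sum_{k\ge1}b_k s^{-k}$ and translating into $\delta$ through $\log F_{r,s,0}=-(1-F_{r,s,0})+O((1-F_{r,s,0})^2)$, this relation reduces, after dividing by $\delta^2$, to an algebraic eikonal equation for $\chi'(\alpha)$. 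Solving it and integrating in $\alpha$ produces the closed form \eqref{chiuv} once one passes to the variable $\vc=(1-\sqrt\alpha)/(1+\sqrt\alpha)$; the turning point of this eikonal sits precisely at $v=\vc$, where $\chi$ vanishes, which is both the boundary of the regime and the source of the non-uniformity as $v\uparrow\vc$. Collecting the lower powers of $s$ then yields a \emph{linear} recurrence for the $\alpha$-derivatives $c_0'(\alpha)$ and $b_k'(\alpha)$ in terms of $\chi$ and the previously determined coefficients---exactly the situation described in the Introduction, in which the ODE fixes the corrections only up to additive constants.

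The decisive and most laborious step is to pin down those constants. Here I would invoke the Fredholm determinant formula of Sect.~\ref{Sec:EFP}: in the ordered regime $F_{r,s,0}=\Det(1-K)$ with an integrable kernel $K$ of small norm, so $1-F_{r,s,0}=\Tr K+O(\Tr K^2)$, and a saddle-point (Laplace) evaluation of $\Tr K$ reproduces both the $s^{-1}$ prefactor responsible for the $-\log s$ term and the full $s$-independent constant of \eqref{uppertail}. The factor $\pi$ appearing there is the fingerprint of the Gaussian integration and cannot arise from rational integration of the ODE; matching this leading Fredholm asymptotics to the WKB solution fixes $c_0$, and pushing the saddle-point expansion to successive orders near a critical point of \eqref{sigmaform} such as $\alpha\to0$ (where $\vc\to1$, $F_{r,s,0}\to1$, and $K$ is explicitly controllable) supplies each constant $b_k$, completing the derivation of \eqref{uppertail}.

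The main obstacle is precisely this last matching: because \eqref{sigmaform} returns only the derivatives $b_k'(\alpha)$, the constants of integration must come from an independent computation, and obtaining them to all orders requires controlling the \emph{full} asymptotic expansion of the Fredholm determinant, not merely its leading term. The second delicate point is the claimed uniformity of the remainder on compact subsets of $(0,\vc)$: the eikonal turning point at $v=\vc$ makes the WKB expansion non-uniform as $v\uparrow\vc$, so the error bound must be established away from that boundary, where both the recurrence scheme and the saddle-point matching are regular.
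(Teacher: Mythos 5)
Your proposal follows essentially the same route as the paper: substitute an exponentially small transseries correction to the exact polynomial solution $\sigma_{\mathrm p}$ into the $\sigma$-form, obtain the eikonal equation for $\chi$ and linear first-order recursions for the higher coefficients, and fix the integration constants from the Fredholm determinant (the paper does this via the exact residue evaluation of $\Tr\hat K$ as $\alpha\to0$, giving binomial coefficients expanded by Stirling, with the saddle-point analysis of $\Tr\hat K$ at fixed $\alpha$ kept as an independent justification in Appendix~B). Your algebraic observations --- that $\sigma_{\mathrm p}$ annihilates both sides of \eqref{sigmaform} and that the right-hand side collapses to $(\delta')^2(-rs+\delta')^2$ --- are correct and match the paper's reduction to equation \eqref{omegaeq}.
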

\begin{remark}
Since $s$ and $r$ are integers $v$ is a rational number,
$v=s_0/r_0$, where $0<s_0<r_0$ are co-prime natural numbers.
Asymptotic expansions presented in Theorems~\ref{TDL} and \ref{TDL2}
are understood in the natural way: put $s\equiv s_0\,n$ and $r\equiv r_0\,n$,
where $n$ is a positive integer and $n\to\infty$.
As a further comment, we note that
one can modify the problem and ask about asymptotics for irrational values of $v$.
Our results also deliver the answer to this question.
Consider rational approximation, $v=s_n/r_n+\delta_n$,
where $s_n$ and $r_n$ are co-prime natural numbers, and $\delta_n\to0$
is the corresponding error.
With the help of the Farey series \cite{HW-08} we can find such approximations
that $\delta_n=\kappa_n/r_n^2$,
where $|\kappa_n|<1$. For the irrational $v$, Theorems~\ref{TDL} and \ref{TDL2}
remain valid but with
$v$ substituted by $v-\delta_n$. In particular, for
the optimal rational approximations (obtained with the help of the
continued fractions or the Farey series)
it is convenient first to rewrite the asymptotics in terms
of the large parameter $r$ instead of $s$ by using the relation $s=vr$,
then substitute $v$ with $v-\kappa_n/r_n^2$, and, finally,
re-expand asymptotics with respect to $r=r_n$.
\end{remark}

\begin{remark}
Existence of the asymptotic expansions \eqref{uppertail} and \eqref{lowertail}
are proved in Sect.~\ref{Sec:Disorder} and Sect.~\ref{Sec:Order}, respectively,
and the coefficients $a_{2k}$'s and $b_k$'s can be obtained by recurrent procedures via
substitution of these expansions into the $\sigma$-form of P6, see 
Theorem \ref{EFPasP6}.
The coefficients are derived successively with the linear differential
equations of the first order whose initial data are given in
these sections.
\end{remark}

The asymptotic formulas \eqref{lowertail} and \eqref{uppertail} confirms
and extends the result of \cite{CP-13}, where
only the leading $O(s^2)$-term of the asymptotic expansion of $\log F_{r,s,0}$
was obtained. Formula \eqref{chiuv} for $\chi$ reproduces the result obtained by
Johansson~\cite{J-00}, see, for details, Sect.~\ref{Sec:Order}.

As a comment to the asymptotic expansion of the EFP, we mention that
the value $v=\vc$ is the critical value at which the third-order
phase transition of the leading term takes place. This phase transition has
a combinatorial interpretation in the context the
of random domino tilings on Aztec diamonds with a cut-off
corner \cite{CP-13}.
Recently, a possible third-order phase transition
in the thermal correlations of XX0 spin chain, although for a
peculiar choice of the parameters, was pointed out in
\cite{PgT-14}. Similar phenomena can be seen
when the EFPs of quantum spin chains and
vertex models are treated by conformal field theory methods \cite{S-14}
and when the vector Chern-Simons theory is considered on $S^2\times S^1$
\cite{JMSTWY-13}.
A detailed discussion of the third order phase transition phenomena
from the random matrix models point-of-view can be found in \cite{MS-14}.
We discuss aspects of the third-order phase transition related to the EFP
in terms of the associated Painlev\'e equations in Conclusion.

\section{EFP and integrable structures}\label{Sec:2}

The aim of this section is  to present an overview of the results
which lead to formulation of the EFP as a special solution of an
integrable system (in our case P6). We start with recalling presentation of
$F_{r,s,q}$ as a Fredholm determinant of a
linear integral operator, next we discuss the properties of the corresponding
resolvent kernel, and finally we show how to convert
the associated system of integral equations to
Fuchs (a Lax-type) pair for P6.

\subsection{EFP as a Fredholm determinant}\label{Sec:EFP}

We mention here
three representations for the EFP, valid
for the six-vertex model with the weights \eqref{FF}:
in terms of a multiple integral, in terms of Hankel determinant, and
in terms a Fredholm determinant of a linear integral operator acting on
a closed contour in the complex plane. Others can be found in \cite{P-13}
where interrelations between these representations are proved.
In our study of the EFP we mostly rely on the last form, in terms of the
Fredholm determinant.

In \cite{CP-07b}, using the method of Yang-Baxter algebra
(the algebra of the quantum monodromy matrix) \cite{KBI-93} and the methods
of orthogonal polynomials theory, the EFP was evaluated in the form of a multiple
integral. In the case of the weights \eqref{FF} the result can be formulated
as follows.

\begin{proposition}
The EFP of the six-vertex model with DWBC with the weights \eqref{FF} is represented
in the form
\begin{equation}\label{oldEFP}
F_{r,s,q}=
\frac{(-1)^{s(s+1)/2}}{s!}
\oint_{C_0}\cdots\oint_{C_0}\prod_{1\leq j<k\leq s}(z_j-z_k)^2
\prod_{j=1}^{s}
\frac{(\alpha z_j+1-\alpha)^{r+q}}{z_j^r (z_j-1)^s} \frac{\rmd^s z}{(2\pi\rmi)^s},
\end{equation}
where $C_0$ is a simple, closed, counter-clockwise oriented contour around
the point $z=0$ of the complex plane, and lying  in its small vicinity.
\end{proposition}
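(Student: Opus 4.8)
The plan is to obtain \eqref{oldEFP} by specializing to the free-fermion weights \eqref{FF} the general multiple-integral representation of the EFP established by Colomo and the second author in \cite{CP-07b}. That representation is derived within the quantum inverse scattering method: the constrained sum defining $F_{r,s,q}$ is rewritten through the quantum monodromy-matrix operators, the Yang--Baxter commutation relations are used to resum the resulting expressions, and the theory of orthogonal polynomials converts the outcome into a normalized $s$-fold contour integral carrying a Vandermonde-squared factor $\prod_{1\le j<k\le s}(z_j-z_k)^2$ together with a product of identical one-variable weights $\prod_{j=1}^{s}\phi(z_j)$, where $\phi$ depends on the Boltzmann weights $w_1,\dots,w_6$ and on $r,s,q$. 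I would take this as the starting point.

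First I would insert the parametrization \eqref{FF}, for which $w_1=w_2=\sqrt{1-\alpha}$, $w_3=w_4=\sqrt{\alpha}$, $w_5=w_6=1$, and record that these satisfy the free-fermion condition $w_1w_2+w_3w_4=w_5w_6$, i.e. $(1-\alpha)+\alpha=1$. Substituting into the general weight $\phi$ and simplifying, the free-fermion relation collapses it to the single rational expression
\[
\phi(z)=\frac{(\alpha z+1-\alpha)^{r+q}}{z^{r}\,(z-1)^{s}} ,
\]
and the power counting of the factors $z^{-r}$, $(z-1)^{-s}$, and $(\alpha z+1-\alpha)^{r+q}$ (with $r+q=N-s$) would be verified as part of the same substitution. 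The remaining work is to match the overall normalization: the $1/s!$ reflecting the symmetrization of the $s$ integration variables, and the sign $(-1)^{s(s+1)/2}$ arising from reordering and from the orientation of the single-variable contours. Finally I would check that all contours may be deformed to one and the same small counter-clockwise loop $C_0$ about $z=0$: the specialized $\phi$ has its only relevant singularities at $z=0$, $z=1$, and infinity, and the requirement that the integral reproduce the constrained partition function selects the residue at the origin.

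As a self-contained alternative I would derive \eqref{oldEFP} directly. Writing $F_{r,s,q}=Z_{r,s,q}/Z_N$ with $Z_{r,s,q}$ the partition function restricted to $\Omega_{r,s,q}$, I would use that fixing every vertex of the $s\times(s+q)$ corner to type $2$ freezes, by the ice rule, all arrows along the boundary of that rectangle; this decouples a frozen block and reduces the sum to that of a smaller lattice with modified, but still domain-wall-type, boundary data. I would evaluate the reduced sum with the Izergin--Korepin determinant, observe that at the free-fermion point the relevant determinant is of Hankel type, and invoke the Heine/Andr\'eief identity
\[
\det_{0\le j,k\le s-1}\!\Big[\oint_{C_0} z^{j+k}\phi(z)\,\frac{\rmd z}{2\pi\rmi}\Big]
=\frac{1}{s!}\oint_{C_0}\!\!\cdots\!\oint_{C_0}\prod_{1\le j<k\le s}(z_j-z_k)^2\prod_{j=1}^{s}\phi(z_j)\,\frac{\rmd^s z}{(2\pi\rmi)^s}
\]
to pass from the Hankel determinant to the symmetric multiple integral \eqref{oldEFP}.

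In either route the genuinely delicate part is the bookkeeping rather than any single hard idea: keeping track of the combinatorial prefactor, the sign $(-1)^{s(s+1)/2}$, and the exact power counting---in particular why $\alpha z+1-\alpha$ appears precisely to the power $r+q$ and why the admissible contour is the small circle about the origin. In the direct derivation the most demanding conceptual step is justifying rigorously that freezing the corner decouples the lattice and that the resummation of the monodromy-matrix expressions is exact; in the specialization route the main effort is the purely algebraic simplification of the general integrand of \cite{CP-07b} under the free-fermion constraint.
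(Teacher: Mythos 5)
Your primary route---quoting the general multiple-integral representation of the EFP from \cite{CP-07b} and specializing it to the free-fermion weights \eqref{FF}---is exactly what the paper does: the paper offers no proof of this proposition at all, only the sentence that the result of \cite{CP-07b} ``in the case of the weights \eqref{FF} can be formulated as follows.'' So your first paragraph already matches the paper's treatment, and the bookkeeping you flag (the $1/s!$, the sign $(-1)^{s(s+1)/2}$, the exponent $r+q$, the choice of $C_0$) is precisely the content of that specialization.

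Your ``self-contained alternative,'' however, contains a genuine gap. The claim that fixing the $s\times(s+q)$ corner to type~$2$ ``decouples a frozen block and reduces the sum to that of a smaller lattice with modified, but still domain-wall-type, boundary data'' is false for $s<r$. The ice rule does force the arrows on the edges of the frozen rectangle, but the induced boundary data on the complementary region are \emph{not} of domain-wall type, and the restricted partition function $Z_{r,s,q}$ is not an Izergin determinant for a smaller lattice; the paper itself points out that a clean factorization, $F_{r,s,q}=(w_1w_2)^{s(s+q)}Z_rZ_{r+q}/Z_N$, occurs only in the degenerate case $s=r$, where the frozen rectangle reaches the anti-diagonal. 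This is why the derivation in \cite{CP-07b} has to go through the Yang--Baxter algebra and orthogonal polynomials rather than a naive lattice decoupling. The Heine/Andr\'eief identity you invoke at the end is the correct bridge between the Hankel determinant \eqref{newEFP} and the symmetric multiple integral, but the reduction you propose to feed into it does not hold, so the ``alternative'' route as stated would fail.
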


The representation \eqref{oldEFP} was used in the study of the so-called arctic curve
of the six-vertex model with DWBC, i.e., the curve which describes
the spacial separation of order and disorder in the thermodynamic limit
\cite{CP-07a,CP-09}. However, the
integral \eqref{oldEFP} can hardly be studied by the random
matrix model methods, since the problem of finding of an equilibrium measure
cannot be solved explicitly (see discussion in \cite{CP-07a}).
This fact stimulated to look for another, simple representations.
In particular, evaluating the integrals it can be shown
that the EFP is given in terms of a Hankel determinant.

\begin{proposition}
The EFP admits the representation:
\begin{equation}\label{newEFP}
F_{r,s,q}=\frac{(q!)^s}{\prod_{k=0}^{s-1} (q+k)! k!}
\frac{(1-\alpha)^{s (s+q)}}{\alpha^{s(s-1)/2}}
\det_{1\leq j,k\leq s}\left[ \sum_{m=0}^{r-1}m^{j+k-2} \binom{m+q}{m} \alpha^m\right].
\end{equation}
\end{proposition}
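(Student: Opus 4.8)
The plan is to derive \eqref{newEFP} from the multiple-integral representation \eqref{oldEFP} by collapsing the $s$-fold integral to a single determinant and then evaluating the resulting scalar moments. Writing the squared Vandermonde as $\prod_{1\le j<k\le s}(z_j-z_k)^2=\big(\det_{1\le j,k\le s}[z_k^{\,j-1}]\big)^2$ and setting $W(z)=(\alpha z+1-\alpha)^{r+q}z^{-r}(z-1)^{-s}$, the classical Andreief identity turns \eqref{oldEFP} into a Hankel determinant of moments,
\[
F_{r,s,q}=(-1)^{s(s+1)/2}\det_{1\le j,k\le s}\big[\,I_{j+k-2}\,\big],\qquad
I_n=\oint_{C_0}z^{n}W(z)\,\frac{\rmd z}{2\pi\rmi}=[z^{\,r-1-n}]\,\frac{(\alpha z+1-\alpha)^{r+q}}{(z-1)^s},
\]
each $I_n$ being the residue at the order-$r$ pole $z=0$, i.e.\ a single coefficient. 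This reduction is purely formal.

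Next I would interpret the determinant in \eqref{newEFP} through the Cauchy--Binet (Gram) formula. Since $m^{j+k-2}=m^{j-1}m^{k-1}$, its $(j,k)$ entry is the Gram product of the powers $m^{j-1}$ against the discrete weight $\mu_m=\binom{m+q}{m}\alpha^m$ on $\{0,1,\dots,r-1\}$, so that
\[
\det_{1\le j,k\le s}\Big[\sum_{m=0}^{r-1}m^{j+k-2}\mu_m\Big]
=\sum_{0\le m_1<\cdots<m_s\le r-1}\ \prod_{1\le i<i'\le s}(m_{i'}-m_i)^2\ \prod_{i=1}^{s}\mu_{m_i}.
\]
The lowest power of $\alpha$ here comes from the minimal index set $(0,1,\dots,s-1)$ and equals $\alpha^{s(s-1)/2}$; this accounts for the factor $\alpha^{-s(s-1)/2}$ in \eqref{newEFP} and matches the fact that $F_{r,s,q}$ is a polynomial in $\alpha$. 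Thus \eqref{newEFP} is equivalent to the assertion that the continuous-moment Hankel determinant $\det[I_{j+k-2}]$ coincides, up to the explicit prefactor, with this discrete ``log-gas'' sum.

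The crux is that the two moment sequences are \emph{not} proportional term by term: $I_n$ has the binomial form $\sum_l\binom{r+q}{l}(1-\alpha)^{r+q-l}\alpha^l$, while the discrete moment is the partial sum $\sum_{m=0}^{r-1}m^{n}\binom{m+q}{m}\alpha^m$, their Hankel determinants agreeing only as a whole. To prove this I would establish the underlying summation identity; already for $s=1$ it reads $\sum_{l=0}^{r-1}\binom{r+q}{l}(1-\alpha)^{r+q-l}\alpha^l=(1-\alpha)^{q+1}\sum_{m=0}^{r-1}\binom{m+q}{m}\alpha^m$, a classical incomplete-beta relation provable by induction on $r$ via Pascal's rule. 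For general $s$ I would expand $(z-1)^{-s}=(-1)^s\sum_p\binom{p+s-1}{s-1}z^p$ and show that the resulting degree-$(s-1)$ polynomial factor alters the moment sequence only by a transformation leaving the determinant unchanged apart from the global power $(1-\alpha)^{s(s+q)}$. I expect this passage from the coefficient form to the partial-sum form to be the main obstacle. Finally, the constant $(q!)^s/\prod_{k=0}^{s-1}(q+k)!\,k!$ and the sign I would fix by tracking leading coefficients in the Cauchy--Binet step and, independently, by letting $\alpha\to0$: there only the minimal index set survives, and the surviving data $\prod_{0\le a<b\le s-1}(b-a)^2$ and $\prod_{m=0}^{s-1}\binom{m+q}{m}$ reproduce the stated normalisation, with $\alpha^{-s(s-1)/2}$ cancelling the minimal power and positivity of $F_{r,s,q}$ fixing the sign.
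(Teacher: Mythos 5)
The paper does not actually prove this proposition: it asserts that the Hankel form follows from the multiple integral \eqref{oldEFP} ``by evaluating the integrals'' and defers all details to \cite{P-13}, so there is no internal proof to compare against. Judged on its own terms, your setup is correct and sensible: the Andreief identity does give $F_{r,s,q}=(-1)^{s(s+1)/2}\det_{1\le j,k\le s}[I_{j+k-2}]$ with $I_n=\oint_{C_0}z^nW(z)\,\rmd z/(2\pi\rmi)$, the Cauchy--Binet expansion of the discrete determinant is standard, and your $\alpha\to0$ normalization check (only the index set $\{0,\dots,s-1\}$ survives and the constants cancel to give $F\to1$) is a genuine consistency test.

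The gap is exactly the step you flag as the crux, and the mechanism you propose for it cannot work as described. You want to pass from the coefficient moments $I_n$ to the partial-sum moments $M_n=\sum_{m=0}^{r-1}m^n\binom{m+q}{m}\alpha^m$ by arguing that the expansion of $(z-1)^{-s}$ ``alters the moment sequence only by a transformation leaving the determinant unchanged.'' But the entrywise transformations of a moment sequence that preserve a Hankel determinant (up to explicit factors) are those induced by a monic change of polynomial basis, i.e.\ binomial shifts $c_n\mapsto L[(x-a)^n]$ composed with rescalings $c_n\mapsto\lambda^{2n}\mu\,c_n$, and no such transformation connects the two sequences here. Already for $s=2$, $q=0$, $r=2$ one finds $I_0=2(1-\alpha)$, $I_1=(1-\alpha)^2$, $I_2=0$ versus $M_0=1+\alpha$, $M_1=M_2=\alpha$: the vanishing of $I_2$ together with $I_0\neq0$ rules out any shift-and-rescale relation, yet the two $2\times2$ Hankel determinants do agree after multiplication by $(1-\alpha)^4/\alpha$. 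The identity is therefore a determinant-level one, and the missing idea is a manipulation that acts on the whole log-gas rather than on the moment sequence --- for instance the M\"obius substitution $z_j=(1-\alpha)w_j/(1-\alpha w_j)$ performed directly in the $s$-fold integral, which transforms the Vandermonde and measure covariantly, produces precisely the factor $(1-\alpha)^{s(s+q)}$, and converts the weight to $(1-\alpha w)^{-(s+q)}w^{-r}(w-1)^{-s}$, whose Taylor coefficients carry the binomials $\binom{m+q}{m}\alpha^m$ once the residual degree-$(s-1)$ polynomial factors are absorbed by simultaneous row and column operations (equivalently, a biorthogonal-polynomial argument, or the resummation through the Fredholm form \eqref{detexp} that the paper sketches). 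Your $s=1$ incomplete-beta identity is correct but does not generate the general case, so the proposal as written is not yet a proof.
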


Using a peculiar structure of the matrix in \eqref{newEFP},
the finite-size determinant here can converted into Fredholm determinants of
integral operators of various types. Indeed, the matrix in \eqref{newEFP}
can be represented as a difference of two matrices,
one involving the summation over $m\in\mathbb{Z}_{\geq 0}$
and second over $m\in\mathbb{Z}_{\geq r}$; the first matrix
is related with the ensemble of Meixner polynomials and hence it can be
explicitly inverted, while the second one gives rise to a Fredholm determinant
structure of the finite-size determinant. This finite-size determinant can be further
transformed by keeping the Fredholm structure of the determinant.
In particular, it can be written
as a Fredholm determinant of a linear integral operator acting on a closed contour
in the complex plane. Specifically, here we deal with the integral operator $\hat K$
acting on trial functions by the formula
\begin{equation}
\left(\hat K f\right)(\lambda)=\oint_{C_0} K(\lambda,\mu) f(\mu)\, \rmd \mu.
\end{equation}
The function $K(x,y)$ is the kernel of the operator,
and $C_0$ is the contour given as a circle of small radius around the point
$\lambda=0$, with counter-clockwise orientation, i.e., the same contour
appeared above in the multiple integral representation for the EFP.
Besides this contour, the functions defining the kernel
involve integrals over the contour $C_\infty$, which is
a circle of large radius, counter-clockwise oriented around the origin
(the circle of a small radius around the point $\lambda=\infty$,
clockwise oriented around it).

We also recall that the Fredholm determinant
of a  linear integral operator is defined as
\begin{equation}\label{Fredholm}
\Det\left(1-\hat K\right)=1+\sum_{n=1}^\infty \frac{(-1)^n}{n!}
\int \det_{1\leq i,j\leq n}\left[ K(\lambda_i,\lambda_j)\right]\, \rmd^n\lambda.
\end{equation}
Using that $\Det(1-\hat K)=\exp\{\Tr\log(1-\hat K)\}$, and defining the function
$\log(1-\hat K)$ by its power series expansion in powers of $\hat K$,
we can also define the Fredholm determinant of the integral operator $\hat K$ by
the formula
\begin{equation}\label{detexp}
\Det\left(1-\hat K\right)=\exp\left\{-\sum_{n=1}^\infty \frac{1}{n}
\int K(\lambda_1,\lambda_2)\cdots K(\lambda_n,\lambda_1)\, \rmd^n\lambda\right\}.
\end{equation}
We now formulate the Fredholm determinant representation for the EFP,
derived in \cite{P-13}.

\begin{proposition}\label{prop1}
The EFP admits representation in terms of the Fredholm determinant
\begin{equation}\label{EFPasFdet}
F_{r,s,q}=\Det\left(1-\hat K\right),
\end{equation}
where $\hat K=\hat K_{r,s,q}$ is the integral operator acting
on the contour $C_0$ and possessing the kernel
\begin{equation}\label{Klamu}
K(\lambda,\mu)=\frac{e_{+}(\lambda)e_{-}(\mu)-e_{-}(\lambda)e_{+}(\mu)}
{2\pi\rmi(\lambda-\mu)},
\qquad
\lambda,\mu\in C_0,
\end{equation}
where the functions $e_\pm(\lambda)$, which depend on the integers
$r$, $s$, and $q$, i.e., $e_\pm(\lambda)=e_{\pm,r,s,q}(\lambda)$, are
\begin{equation}\label{em}
e_{-}(\lambda)=\frac{(\lambda-\alpha)^{r/2}}{(\lambda-1)^{(r+q)/2}\lambda^{s/2}}
\end{equation}
and
\begin{equation}\label{ep}
e_{+}(\lambda)=e_{-}(\lambda)E(\lambda),
\end{equation}
where the function $E(\lambda)$ is
\begin{equation}\label{Ela}
E(\lambda)=\oint_{C_\infty}\frac{(\nu-1)^{r+q}\nu^s}{(\nu-\alpha)^r(\nu-\lambda)}
\frac{\rmd \nu}{2\pi\rmi}
\end{equation}
and $C_0$ and $C_\infty$ are counter-clockwise oriented circles of small and large radii,
respectively.
\end{proposition}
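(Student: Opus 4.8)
The plan is to start from the Hankel-determinant representation \eqref{newEFP} and to recognise the underlying discrete orthogonal-polynomial ensemble. The entries of the $s\times s$ matrix in \eqref{newEFP} are the truncated power moments $\sum_{m=0}^{r-1}m^{j+k-2}w_m$ of the weight $w_m=\binom{m+q}{m}\alpha^m$ on the nonnegative integers; since $\binom{m+q}{m}=(q+1)_m/m!$, this is (up to total mass) the Meixner weight with parameters $\beta=q+1$, $c=\alpha$, exactly as indicated in the text. Thus, modulo the explicit prefactor, the EFP is the gap probability of the Meixner ensemble, namely the probability that all $s$ particles avoid $\{r,r+1,\dots\}$ — which under the standing assumption $s\le r$ is nonzero — and the natural target is to write this gap probability as a Fredholm determinant of integrable type.

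First I would split each moment as $\sum_{m=0}^{r-1}=\sum_{m\ge 0}-\sum_{m\ge r}$, so that the matrix in \eqref{newEFP} is $C-D$, where $C_{jk}=\sum_{m\ge 0}m^{j+k-2}w_m$ is the full Hankel moment matrix (convergent for $\alpha\in(0,1)$) and $D_{jk}=\sum_{m\ge r}m^{j+k-2}w_m$ is the tail. The matrix $C$ is positive definite and $\Det C=\prod_{l=0}^{s-1}h_l$ is the product of the squared norms of the monic Meixner polynomials, known in closed form. Writing $D=\sum_{m\ge r}w_m\,\phi_m\phi_m^{\top}$ with $\phi_m=(m^{j-1})_{j=1}^{s}$, using $\Det(C-D)=\Det C\cdot\Det(I-C^{-1}D)$, and then applying the Weinstein–Aronszajn identity, I would recast $\Det(I-C^{-1}D)$ as a genuinely discrete Fredholm determinant $\det_{m,n\ge r}\bigl(\delta_{mn}-\sqrt{w_m w_n}\,\mathcal{K}_s(m,n)\bigr)$, where $\mathcal{K}_s(x,y)=\sum_{j,k}(C^{-1})_{jk}\,x^{j-1}y^{k-1}=\sum_{l=0}^{s-1}h_l^{-1}p_l(x)p_l(y)$ is the Christoffel–Darboux (reproducing) kernel of the Meixner ensemble.

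The remaining, and main, step is to convert this discrete determinant on $\{r,r+1,\dots\}$ into the continuous Fredholm determinant on $C_0$ with the integrable kernel \eqref{Klamu}. Here I would use the contour-integral representations of the Meixner polynomials and of their Cauchy transforms: the finite sum defining $\mathcal{K}_s$ is written as an integral over $C_0$, while the tail summation $\sum_{m\ge r}$ is implemented as a residue sum and re-expressed as an integral over the large circle $C_\infty$. Carrying the weight $\alpha^m\binom{m+q}{m}$ through these manipulations produces precisely the algebraic factors $(\lambda-\alpha)^{r/2}$, $(\lambda-1)^{-(r+q)/2}$, $\lambda^{-s/2}$ of $e_-$ in \eqref{em}, while the tail contribution is collected into the second-kind function $E(\lambda)$ of \eqref{Ela}, so that $e_+=e_-E$ as in \eqref{ep}. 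The bilinear Christoffel–Darboux structure $\sum_l p_l(x)p_l(y)/h_l$, with its telescoping $(x-y)^{-1}$ form, is exactly what turns into the integrable numerator $e_+(\lambda)e_-(\mu)-e_-(\lambda)e_+(\mu)$ over $\lambda-\mu$; the antisymmetry of that numerator guarantees, as required, that the kernel is regular on the diagonal.

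The hard part is this final passage from the discrete to the continuous determinant: one must track the interchange of the finite Christoffel–Darboux sum with the infinite tail summation, justify rewriting $\sum_{m\ge r}$ as a residue sum and hence as the contour integrals over $C_0$ and $C_\infty$ (this is where the assumption $s\le r$ enters), and verify that the series for $\Det(1-\hat K)$ built from \eqref{Fredholm} with the kernel \eqref{Klamu} reproduces term by term the expansion of the discrete Fredholm determinant. A careful accounting of the scalar normalisation — checking that $\Det C$ cancels the explicit prefactor of \eqref{newEFP}, leaving coefficient exactly $1$ in \eqref{EFPasFdet} — then completes the proof.
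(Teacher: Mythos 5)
Your proposal is correct in strategy, but it runs in the opposite direction to the computation the paper actually sketches (the full details of which are deferred to \cite{P-13}). What the paper displays as ``the idea of the proof'' starts from the Fredholm determinant side: it writes $\Tr(\hat K^n)$ as a $2n$-fold contour integral after the substitution $\nu\mapsto 1/\nu$, Taylor-expands the factors $(1-\lambda_i\nu_i)^{-1}$ and $(1-\lambda_{i+1}\nu_i)^{-1}$, and reorders sums and integrals so that the whole expansion \eqref{detexp} collapses back onto the finite-size Hankel determinant \eqref{newEFP}; the Meixner structure you exploit is mentioned there only as motivation and is not used in the displayed argument. Your forward route --- truncated moments split as full moments minus tail, $\Det(C-D)=\Det C\cdot\Det(I-C^{-1}D)$, Weinstein--Aronszajn to pass to the discrete Fredholm determinant of the Christoffel--Darboux kernel on $\{r,r+1,\dots\}$, then contour representations of the polynomials and their second-kind functions --- is sound step by step: the weight $\binom{m+q}{m}\alpha^m$ is indeed the Meixner weight with parameters $q+1$ and $\alpha$, the reproducing-kernel identity for $C^{-1}$ in the monomial basis is standard, and the normalisation check works out exactly, since the monic Meixner norms are $h_n=n!\,(q+n)!\,\alpha^n/\bigl(q!\,(1-\alpha)^{q+2n+1}\bigr)$ and $\prod_{n=0}^{s-1}h_n$ is precisely the reciprocal of the prefactor in \eqref{newEFP}. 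What your route buys is conceptual transparency: it explains where the integrable kernel \eqref{Klamu} comes from and why $e_+=e_-E$ with $E$ the Cauchy-type transform \eqref{Ela}. What it costs is the discrete-to-continuous passage, which you correctly identify as the crux and which still has to be executed --- the residue rewriting of $\sum_{m\ge r}$ over $C_\infty$ (where $s\le r$ and $\alpha\in(0,1)$ enter, and where the geometric decay of the weight gives the trace-class control needed for the infinite determinant), and the term-by-term matching of the two Fredholm expansions. The paper's backward route avoids that passage entirely by verifying the identity directly at the level of traces, at the price of obscuring the orthogonal-polynomial origin of the kernel; both are legitimate proofs of \eqref{EFPasFdet}.
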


The details of the proof can be found in \cite{P-13}. Here we just mention the idea
of the proof: upon change of the variable  $\nu\mapsto 1/\nu$, which replaces
the contour $C_\infty$ by the contour $C_0$, the trace of the $n$th power
of the operator $\hat K$ reads
\begin{multline}
\Tr\left(\hat K^n\right)=(-1)^{nq}
\oint_{C_0}\cdots\oint_{C_0}
\prod_{i=1}^n
\frac{(\alpha-\lambda_i)^{r} (1-\nu_i)^{r+q}}{(1-\lambda_i)^{r+q}(1-\alpha \nu_i)^{r}}
\\ \times
\frac{1}{\lambda_{i+1}^{s}\nu_i^{s+q}(1-\lambda_i \nu_i)(1-\lambda_{i+1}\nu_i)}
\frac{\rmd^n \nu\, \rmd^p \lambda}{(2\pi\rmi)^{2n}}
\end{multline}
where $\lambda_{p+1}:=\lambda_1$. Since all $\lambda_i$'s and $\nu_i$ are
integrated over the contours of small radii, the factors
$(1-\lambda_i \nu_i)^{-1}$ and $(1-\lambda_{i+1}\nu_i)^{-1}$ can be expanded
in the Taylor series. Re-ordering the summations and integrations here, various
Fredholm-type determinant representations can be obtained, in particular,
in terms of a finite-size matrix, which leads to the Hankel determinant
representation \eqref{newEFP}.

It is necessary to mention that
the representation \eqref{newEFP}, up to a redefinition of the discrete parameters
and when rewritten in its random matrix model integral form
(with a discrete measure), appears to be coinciding with certain quantity
discussed in a different context by Johansson \cite{J-00} (see Proposition 1.3 therein).
In turn, Baik and Rains \cite{BR-01,B-03} showed that this quantity can be
represented as certain average over circular unitary ensemble.
Using the Okamoto theory \cite{O-87}, Forrester and Witte \cite{FW-04} next showed
this average serves as $\tau$-function corresponding to
a classical solution of P6. In what follows we show that the this result,
which can be summarized in the form of Theorem \ref{EFPasP6},
can be established directly from
the Fredholm determinant representation \eqref{EFPasFdet}, using the notion
of the integrable integral operators introduced by Its, Izergin, Korepin, and Slavnov
\cite{IIKS-90}.

\subsection{The resolvent operator}

Consider Fredholm integral equations
\begin{equation}
f_{\pm}(\lambda)-\oint_{C_0} K(\lambda,\mu)f_\pm(\mu)\,\rmd \mu= e_\pm(\lambda),
\end{equation}
which can be written in operator form as
\begin{equation}\label{fKe}
\left[\left(1-\hat K\right)f_\pm\right](\lambda)=e_\pm(\lambda).
\end{equation}
For brevity we omit dependence of all functions on the parameter $\alpha$ and
discrete parameters $r$, $s$, and $q$.
We consider solutions of \eqref{fKe} belonging
to the class of analytic functions in $\lambda$-plane
which can be presented in a vicinity of the
point $\lambda=0$ in the form
\begin{equation}
f_\pm(\lambda)=\frac{h_1(\lambda)}{e_{-}(\lambda)}+h_2(\lambda)e_{-}(\lambda),
\end{equation}
where $h_1(\lambda)$ and $h_2(\lambda)$ are analytic functions at the origin.
For given values of $r$, $s$, and $q$  there exists
a finite number of points $\alpha=\alpha_1,\ldots,\alpha_p$, where $p=r(s+q)$, such
that $\dim\Ker(1-\hat K)=0$ for $\alpha\in\mathbb{C}\setminus \{\alpha_1,\ldots,\alpha_p\}$,
and $\dim\Ker(1-\hat K)=1$ at $\alpha\in\{\alpha_1,\ldots,\alpha_p\}$.
It is important to note that $\alpha_1,\ldots,\alpha_p\notin (0,1)$
that is the operator $1-\hat K$ is invertible for the case of our interest,
$\alpha\in(0,1)$. Therefore,
one can define the resolvent $\hat R$ of the operator $\hat K$,
\begin{equation}\label{KR}
\left(1-\hat K\right)\left(1+\hat R\right)=1.
\end{equation}
As we show below,
the kernel the operator $\hat R$ can be written
in terms of the functions $f_\pm(\lambda)$, exactly in the same way as
the kernel of the operator $\hat K$ is given in terms of the functions
$e_\pm(\lambda)$ \cite{IIKS-90}.

It is easy to see that \eqref{fKe} can be rewritten in the form
\begin{equation}\label{Xfe}
X(\lambda)\,\vec e(\lambda) = \vec f(\lambda),
\end{equation}
where
\begin{equation}
\vec e(\lambda)=
\begin{pmatrix}
e_{+}(\lambda) \\ e_{-}(\lambda)
\end{pmatrix},
\qquad
\vec f(\lambda)=
\begin{pmatrix}
f_{+}(\lambda) \\ f_{-}(\lambda)
\end{pmatrix},
\end{equation}
and
\begin{equation}\label{Xla}
X(\lambda)=
\begin{pmatrix}
1-H_{-+}(\lambda) &  H_{++}(\lambda) \\
-H_{--}(\lambda) & 1 + H_{+-}(\lambda)
\end{pmatrix},
\end{equation}
where the functions $H_{mn}(\lambda)$ are
\begin{equation}\label{Hmn}
H_{mn}(\lambda)=\oint_{C_0} \frac{e_m(\mu)f_n(\mu)}{\mu-\lambda}\,
\frac{\rmd \mu}{2\pi\rmi},
\qquad
m,n=\{+,-\}.
\end{equation}
It turns out that the matrix $X(\lambda)$ can be easily inverted, due to the following
remarkable property.
\begin{proposition}
The determinant of the matrix \eqref{Xla} is equal to one,
\begin{equation}
\det X(\lambda)=1,\qquad \lambda\in \mathbb{C}.
\end{equation}
\begin{proof}
To prove the result we use the same trick as in
\cite{KBI-93}, Chap.~XIV, App.~C, but we
will rely only the known structure of the kernel of
the operator $\hat K$, instead of that for the resolvent.
Note that the
functions $H_{+-}(\lambda)$ and $H_{-+}(\lambda)$
can be written in the form
\begin{equation}
H_{+-}(\lambda)=\oint_{C_0} \frac{\left[\left(1-\hat K\right) f_{+}\right](\mu)f_{-}(\mu)}{\mu-\lambda}
\frac{\rmd \mu}{2\pi\rmi}
\end{equation}
and
\begin{equation}
H_{-+}(\lambda)=\oint_{C_0} \frac{f_{+}(\nu)
\left[\left(1-\hat K\right) f_{-}\right](\nu)}{\nu-\lambda}\frac{\rmd \nu}{2\pi\rmi},
\end{equation}
respectively.
Subtracting the second expression from the first one and using the fact that
the kernel is symmetric, $K(\mu,\nu)=K(\nu,\mu)$, we have
\begin{equation}
H_{+-}(\lambda)-H_{-+}(\lambda)
=\oint_{C_0} \frac{\rmd \mu}{2\pi\rmi}  \oint_{C_0}
f_{-}(\mu) K(\mu,\nu) f_{+}(\nu)
\left(\frac{1}{\nu-\lambda}-\frac{1}{\mu-\lambda}\right) \rmd \nu.
\end{equation}
Since
\begin{equation}
\frac{1}{\nu-\lambda}-\frac{1}{\mu-\lambda}=
\frac{\mu-\nu}{(\mu-\lambda)(\nu-\lambda)}
\end{equation}
and
\begin{equation}
(\mu-\nu)K(\mu,\nu)=e_{+}(\mu)e_{-}(\nu)-e_{-}(\mu)e_{+}(\nu),
\end{equation}
it is fairly easy to see that in fact we have the relation
\begin{equation}
H_{+-}(\lambda)-H_{-+}(\lambda)
=H_{-+}(\lambda)H_{+-}(\lambda)-H_{--}(\lambda)H_{++}(\lambda).
\end{equation}
Obviously, the last relation implies that $\det X(\lambda)=1$.
\end{proof}
\end{proposition}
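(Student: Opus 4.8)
The plan is to compute $\det X(\lambda)$ directly from the explicit form \eqref{Xla} and thereby reduce the claim to a single quadratic identity among the four functions $H_{mn}(\lambda)$. Expanding the $2\times2$ determinant gives
\[
\det X(\lambda)=1+H_{+-}(\lambda)-H_{-+}(\lambda)+H_{++}(\lambda)H_{--}(\lambda)-H_{-+}(\lambda)H_{+-}(\lambda),
\]
so that the assertion $\det X\equiv1$ is equivalent to
\[
H_{+-}(\lambda)-H_{-+}(\lambda)=H_{-+}(\lambda)H_{+-}(\lambda)-H_{++}(\lambda)H_{--}(\lambda).
\]
Everything then hinges on establishing this relation, and it is here that I would use the structure of the problem rather than the explicit functions \eqref{em}--\eqref{ep}.

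First I would invoke the defining integral equations \eqref{fKe}, which permit replacing $e_m$ by $(1-\hat K)f_m$ inside the definition \eqref{Hmn} of $H_{mn}$. Writing $H_{+-}$ with $e_+=(1-\hat K)f_+$ sitting in the kernel slot and $H_{-+}$ with $e_-=(1-\hat K)f_-$ in the kernel slot, the two ``identity'' contributions of the form $\oint_{C_0} f_+(\mu)f_-(\mu)/(\mu-\lambda)\,\rmd\mu$ are literally equal after relabelling the integration variable, and hence cancel in the difference $H_{+-}-H_{-+}$. What survives is a single double integral over $C_0\times C_0$ whose integrand contains the kernel $K$ explicitly.

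The decisive step is to exploit the integrable structure of $K$. Since the kernel is symmetric, $K(\mu,\nu)=K(\nu,\mu)$, the two surviving terms combine into one double integral weighted by $\frac{1}{\nu-\lambda}-\frac{1}{\mu-\lambda}=\frac{\mu-\nu}{(\mu-\lambda)(\nu-\lambda)}$. The numerator factor $\mu-\nu$ then cancels the pole of $K$ through the rank-one identity $(\mu-\nu)K(\mu,\nu)=e_+(\mu)e_-(\nu)-e_-(\mu)e_+(\nu)$, which is precisely the defining feature of the integrable kernel \eqref{Klamu}. With the integrand so factorized, the double integral splits as a difference of products of single integrals over $\mu$ and over $\nu$, and each single integral is recognized from \eqref{Hmn} as one of the $H_{mn}(\lambda)$. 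This yields exactly $H_{+-}-H_{-+}=H_{+-}H_{-+}-H_{--}H_{++}$, completing the reduction.

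The main obstacle I anticipate is purely bookkeeping: one must keep careful track of which of $e_\pm,f_\pm$ occupies the ``kernel slot'' in each occurrence of $H_{mn}$, so that after the rank-one factorization the four single integrals pair up with the correct subscripts and signs. Once the factorization is in place, no analytic input about the explicit form \eqref{em}--\eqref{ep} is needed; the identity is a formal consequence of the integrable form of $K$ together with the invertibility of $1-\hat K$ established above, which guarantees that the $f_\pm$ used in \eqref{fKe} exist. This is the same mechanism employed in \cite{KBI-93}, but arranged so as to rely only on the structure of $K$ rather than on any prior knowledge of the resolvent kernel.
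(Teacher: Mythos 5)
Your proposal is correct and follows essentially the same route as the paper: rewrite $H_{+-}$ and $H_{-+}$ via $e_\pm=(1-\hat K)f_\pm$, cancel the identity contributions, use the symmetry of $K$ together with the partial-fraction splitting and the rank-one identity $(\mu-\nu)K(\mu,\nu)=e_+(\mu)e_-(\nu)-e_-(\mu)e_+(\nu)$ to factorize the double integral into products of the $H_{mn}$, arriving at the quadratic relation equivalent to $\det X\equiv1$. The only cosmetic difference is that you spell out the determinant expansion explicitly at the outset, which the paper leaves implicit.
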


\begin{corollary}
The inverse matrix has the form
\begin{equation}\label{Xinv}
X^{-1}(\lambda)=
\begin{pmatrix}
1+H_{+-}(\lambda) &  -H_{++}(\lambda) \\
H_{--}(\lambda) & 1 - H_{-+}(\lambda)
\end{pmatrix}.
\end{equation}
\end{corollary}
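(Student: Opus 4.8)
The plan is to invoke the classical adjugate (cofactor) formula for inverting a $2\times 2$ matrix, specialized to the case of unit determinant already secured by the preceding Proposition. For any invertible $2\times 2$ matrix $\begin{pmatrix} a & b \\ c & d \end{pmatrix}$ one has
\begin{equation}
\begin{pmatrix} a & b \\ c & d \end{pmatrix}^{-1}
=\frac{1}{ad-bc}\begin{pmatrix} d & -b \\ -c & a \end{pmatrix},
\end{equation}
and since the Proposition gives $\det X(\lambda)=1$ identically in $\lambda\in\mathbb{C}$, the scalar prefactor is $1$ and the inverse coincides with the adjugate. Thus no separate computation of $\det X(\lambda)$ is needed here; it is precisely the output of the Proposition that makes this step immediate.

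Reading off the entries of $X(\lambda)$ from \eqref{Xla}, namely $a=1-H_{-+}(\lambda)$, $b=H_{++}(\lambda)$, $c=-H_{--}(\lambda)$, and $d=1+H_{+-}(\lambda)$, the adjugate becomes
\begin{equation}
\begin{pmatrix} d & -b \\ -c & a \end{pmatrix}
=\begin{pmatrix} 1+H_{+-}(\lambda) & -H_{++}(\lambda) \\ H_{--}(\lambda) & 1-H_{-+}(\lambda)\end{pmatrix},
\end{equation}
which is exactly the claimed expression \eqref{Xinv}.

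As a sanity check I would verify $X(\lambda)X^{-1}(\lambda)=I$ by direct multiplication: the two diagonal entries of the product each reproduce $\det X(\lambda)$, hence equal $1$ by the Proposition, while the two off-diagonal entries cancel purely algebraically, without invoking any analytic property of the functions $H_{mn}(\lambda)$. There is no genuine obstacle in this Corollary---all of the analytic content resides in the determinant identity $\det X(\lambda)=1$ established in the Proposition, and the form of $X^{-1}(\lambda)$ is a one-line consequence of that identity together with the standard $2\times 2$ inversion formula.
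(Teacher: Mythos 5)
Your proof is correct and is exactly the intended argument: the paper states this corollary without proof precisely because, given $\det X(\lambda)=1$ from the preceding Proposition, the $2\times 2$ adjugate formula yields \eqref{Xinv} immediately. Nothing is missing.
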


\begin{corollary}
The functions $e_\pm(\lambda)$ solve the
following integral equations
\begin{equation}\label{fRelong}
f_{\pm}(\lambda)= e_\pm(\lambda)+\oint_{C_0} R(\lambda,\mu)e_\pm(\mu)\,\rmd \mu,
\end{equation}
where
\begin{equation}\label{kernelR}
R(\lambda,\mu)=
\frac{f_{+}(\lambda)f_{-}(\mu)-f_{-}(\lambda)f_{+}(\mu)}
{2\pi\rmi(\lambda-\mu)}.
\end{equation}
\end{corollary}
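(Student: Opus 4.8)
The plan is to establish \eqref{fRelong} by a direct substitution, reducing the right-hand side to the auxiliary functions $H_{mn}$ of \eqref{Hmn} and then invoking the inverse matrix \eqref{Xinv} that has just been obtained. I would not verify the abstract resolvent identity \eqref{KR} from scratch: since the explicit inverse of $X(\lambda)$ is already in hand, a short direct computation is the most economical route, and it is exactly the Its--Izergin--Korepin--Slavnov mechanism by which the resolvent of an integrable operator is again integrable.

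First I would insert the explicit kernel \eqref{kernelR} into the integral on the right of \eqref{fRelong}. Because $f_+(\lambda)$ and $f_-(\lambda)$ do not depend on the integration variable, they factor out of the $\mu$-integral, giving
\begin{equation*}
\oint_{C_0} R(\lambda,\mu)e_\pm(\mu)\,\rmd\mu
= f_+(\lambda)\oint_{C_0}\frac{f_-(\mu)e_\pm(\mu)}{2\pi\rmi(\lambda-\mu)}\,\rmd\mu
- f_-(\lambda)\oint_{C_0}\frac{f_+(\mu)e_\pm(\mu)}{2\pi\rmi(\lambda-\mu)}\,\rmd\mu.
\end{equation*}
Comparing with the definition \eqref{Hmn} and accounting for the sign flip from $1/(\lambda-\mu)$ to $1/(\mu-\lambda)$, each of the two remaining integrals equals $-H_{mn}(\lambda)$ for the appropriate indices $m,n\in\{+,-\}$. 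Thus the right-hand side of \eqref{fRelong} collapses to $-f_+H_{+-}+f_-H_{++}$ for the upper sign and to $-f_+H_{--}+f_-H_{-+}$ for the lower sign.

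Second, I would read off $e_\pm$ in terms of $f_\pm$ from the inverse relation $\vec e=X^{-1}\vec f$ supplied by \eqref{Xinv}, namely $e_+=(1+H_{+-})f_+-H_{++}f_-$ and $e_-=H_{--}f_++(1-H_{-+})f_-$. Subtracting these from $f_+$ and $f_-$ respectively yields $f_+-e_+=-H_{+-}f_++H_{++}f_-$ and $f_--e_-=-H_{--}f_++H_{-+}f_-$, which are precisely the expressions produced in the first step. Hence \eqref{fRelong} holds for both signs, and the functions $e_\pm$ solve the stated integral equations with the resolvent kernel \eqref{kernelR}.

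The computation itself is brief, and the only point requiring genuine care is the bookkeeping of signs and orientations. One must track consistently the Cauchy kernel $1/(\lambda-\mu)$ against the convention $1/(\mu-\lambda)$ built into \eqref{Hmn}, and keep $\lambda$ on the correct side of the contour $C_0$ so that the splitting of the integral and its identification with $H_{mn}(\lambda)$ are legitimate. Since $e_\pm$ and $f_\pm$ belong to the class of functions analytic across $C_0$ considered throughout, these manipulations are justified, and I expect no obstacle beyond this careful sign-matching.
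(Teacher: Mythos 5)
Your proposal is correct and follows essentially the same route as the paper: the paper's proof consists of the single remark that $X^{-1}(\lambda)\,\vec f(\lambda)=\vec e(\lambda)$, rewritten via \eqref{Hmn}, gives \eqref{fRelong} with kernel \eqref{kernelR}, and your computation simply carries out that rewriting explicitly (your sign bookkeeping, $\oint_{C_0}\frac{f_n(\mu)e_m(\mu)}{2\pi\rmi(\lambda-\mu)}\,\rmd\mu=-H_{mn}(\lambda)$, is right). The only cosmetic slip is calling $-f_+H_{+-}+f_-H_{++}$ ``the right-hand side of \eqref{fRelong}'' when you mean just its integral term.
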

\begin{proof}
Using \eqref{Hmn} it is straightforward to rewrite the formula
\begin{equation}
X^{-1}(\lambda)\, \vec f (\lambda)=\vec e(\lambda)
\end{equation}
in the form of the integral equations
\eqref{fRelong} with the kernel \eqref{kernelR}.
\end{proof}

Note that \eqref{fRelong} can be written in the following operator form
\begin{equation}\label{fRe}
f_\pm(\lambda)=\left[\left(1+\hat R\right)e_\pm\right](\lambda).
\end{equation}
Comparing \eqref{fRe} with \eqref{fKe}, we conclude that the operator
$\hat R$ is the resolvent of the operator $\hat K$, defined by the formula
\eqref{KR}.
In other words, we have just showed that if the operator
$\hat K$ possesses the kernel \eqref{Klamu}, then
the functions $f_\pm(\lambda)$, defined as the solutions
of the equations \eqref{fKe}, determine the kernel
of the resolvent by formula \eqref{kernelR}.

In a similar manner one can calculate
action of the operator $(1-\hat K)^{-1}=1+\hat R$ on various
functions. In what follows we often use the following result.
\begin{lemma}
Let the vector $g(\lambda;\nu)$ to be
\begin{equation}\label{gpm}
\vec g(\lambda;\nu)=
\begin{pmatrix}
g_{+}(\lambda;\nu) \\ g_{-}(\lambda;\nu)
\end{pmatrix},
\qquad
g_\pm (\lambda;\nu)= \frac{e_\pm(\lambda)}{\lambda-\nu},
\end{equation}
then
\begin{equation}\label{gvecf}
\left[\left(1+\hat R\right)\vec g(\cdot;\nu)\right] (\lambda)
= \frac{1}{\lambda-\nu}X^{-1}(\nu) \vec f(\lambda).
\end{equation}
\end{lemma}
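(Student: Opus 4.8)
The plan is to compute the left-hand side of \eqref{gvecf} componentwise, straight from the definition of the resolvent action, and to reduce it to the right-hand side by an elementary partial-fraction split followed by the integral equations \eqref{fRelong}. Since $\hat R$ has the kernel \eqref{kernelR}, for each sign I would start from
\[
\left[\left(1+\hat R\right)g_\pm(\cdot;\nu)\right](\lambda)
=\frac{e_\pm(\lambda)}{\lambda-\nu}
+\oint_{C_0}\frac{f_+(\lambda)f_-(\mu)-f_-(\lambda)f_+(\mu)}{2\pi\rmi(\lambda-\mu)}
\,\frac{e_\pm(\mu)}{\mu-\nu}\,\rmd\mu .
\]
The whole calculation hinges on the identity $\frac{1}{(\lambda-\mu)(\mu-\nu)}=\frac{1}{\lambda-\nu}\bigl(\frac{1}{\lambda-\mu}+\frac{1}{\mu-\nu}\bigr)$, which factors the remaining integral into two pieces sharing the common prefactor $1/(\lambda-\nu)$.

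In the piece carrying $1/(\lambda-\mu)$ the integrand reassembles exactly the resolvent kernel $R(\lambda,\mu)$ acting on $e_\pm$, so by \eqref{fRelong} it equals $\frac{1}{\lambda-\nu}\bigl(f_\pm(\lambda)-e_\pm(\lambda)\bigr)$. Adding the inhomogeneous term $e_\pm(\lambda)/(\lambda-\nu)$, the functions $e_\pm(\lambda)$ cancel and this contribution collapses to $f_\pm(\lambda)/(\lambda-\nu)$.

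In the piece carrying $1/(\mu-\nu)$ the factors $f_+(\lambda)$ and $f_-(\lambda)$ pull out of the integral, and the remaining contour integrals $\oint_{C_0}\frac{e_\pm(\mu)f_n(\mu)}{\mu-\nu}\frac{\rmd\mu}{2\pi\rmi}$, with $n\in\{+,-\}$, are by the definition \eqref{Hmn} precisely the functions $H_{\pm,n}$ evaluated at the parameter $\nu$. This piece therefore contributes $\frac{1}{\lambda-\nu}\bigl[f_+(\lambda)H_{\pm,-}(\nu)-f_-(\lambda)H_{\pm,+}(\nu)\bigr]$, so that the full $\pm$ component becomes $\frac{1}{\lambda-\nu}\bigl[f_\pm(\lambda)+f_+(\lambda)H_{\pm,-}(\nu)-f_-(\lambda)H_{\pm,+}(\nu)\bigr]$.

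Finally I would recognize the result as a matrix--vector product: writing the $+$ and $-$ components together, the coefficients of $f_+(\lambda)$ and $f_-(\lambda)$ are respectively the rows $\bigl(1+H_{+-}(\nu),\,-H_{++}(\nu)\bigr)$ and $\bigl(H_{--}(\nu),\,1-H_{-+}(\nu)\bigr)$, which are exactly the rows of $X^{-1}(\nu)$ from \eqref{Xinv}. Hence the two sides of \eqref{gvecf} agree. The argument is essentially algebraic and I expect no genuine analytic obstacle; the only delicate point is the index-and-sign bookkeeping when identifying the leftover integrals with the correct $H_{mn}(\nu)$ and matching them against the rows of $X^{-1}(\nu)$—that is where an error is easiest to make. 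This is clean provided $\nu$ lies off the contour $C_0$, so that the $H_{mn}(\nu)$ are the same boundary values already used in establishing \eqref{Xfe}.
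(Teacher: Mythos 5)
Your proof is correct and follows essentially the same route as the paper: the partial-fraction identity you use is the paper's splitting $\frac{1}{\mu-\nu}=\frac{1}{\lambda-\nu}\bigl(1+\frac{\lambda-\mu}{\mu-\nu}\bigr)$ in disguise, the first piece collapses via \eqref{fRelong} exactly as in the paper, and the leftover integrals are identified with $H_{\pm\mp}(\nu)$ and matched to the rows of $X^{-1}(\nu)$ in the same way. No gaps.
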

\begin{proof}
A direct calculation gives us
\begin{align}\label{Rgpm}
\left[\left(1+\hat R\right)g_\pm(\cdot;\nu)\right] (\lambda)
&
=\frac{e_\pm(\lambda)}{\lambda-\nu}
+\oint_{C_0}R(\lambda,\mu)  \frac{e_\pm(\mu)}{\mu-\nu}\,\rmd\mu
\notag\\  &
=\frac{e_\pm(\lambda)}{\lambda-\nu}
+\frac{1}{\lambda-\nu}\oint_{C_0}R(\lambda,\mu) e_\pm(\mu)\left(1+ \frac{\lambda-\mu}{\mu-\nu}
\right)\rmd\mu
\notag\\  &
=\frac{f_\pm(\lambda)}{\lambda-\nu}
+\frac{1}{\lambda-\nu}\oint_{C_0}R(\lambda,\mu) (\lambda-\mu)
e_\pm(\mu)\rmd\mu
\notag\\  &
=\frac{f_\pm(\lambda)}{\lambda-\nu}
+H_{\pm-}(\nu)\frac{f_{+}(\lambda)}{\lambda-\nu}
-H_{\pm+}(\nu)\frac{f_{-}(\lambda)}{\lambda-\nu},
\end{align}
where relations \eqref{fRe} and \eqref{kernelR} have been used.
Comparing the result in \eqref{Rgpm} with \eqref{Xinv}, we arrive at
\eqref{gvecf}.
\end{proof}

\subsection{The Fuchs pair}

One of the main properties of the functions $f_\pm(\lambda)$ is that that obey
linear first-order differential equations
in the variables $\alpha$ and $\lambda$.  These equations appear to be a vector
form of the famous Fuchs pair for P6.

Note that there exist also difference equations with respect to the discrete parameters
$r$, $s$, and $q$, shifting them by $\pm 1$. These equations, together with
the one with respect to $\alpha$, constitute the set of the Lax-type equations
of our problem. They can be viewed as the so-called Schlesinger transformations
for the Fuchs pair given below, or derived in a straightforward way using the
method presented in the proof.

\begin{proposition}\label{Fpair}
The functions $f_\pm(\lambda)$ is the solution of the system of linear
differential equations
\begin{equation}
\frac{\rmd}{\rmd \lambda}\vec f(\lambda)
=A(\lambda)\vec f(\lambda),
\qquad
\frac{\rmd}{\rmd \alpha}\vec f(\lambda)
=B(\lambda)\vec f(\lambda),
\end{equation}
with
\begin{equation}\label{AB}
A(\lambda)=
\frac{A_0}{\lambda}+\frac{A_1}{\lambda-1}+\frac{A_\alpha}{\lambda-\alpha},
\qquad
B(\lambda)=-\frac{A_\alpha}{\lambda-\alpha},
\end{equation}
where the matrices $A_0$, $A_1$, and $A_\alpha$ are given by
\begin{equation}\label{Anu}
A_\nu=\frac{\theta_\nu}{2}\, X(\nu) S(\nu) X^{-1}(\nu)
\qquad (\nu=0,1,\alpha).
\end{equation}
Here
\begin{equation}\label{eq:def-theta-finite}
\theta_0=s,\qquad
\theta_1=r+q,\qquad
\theta_\alpha=-r,
\end{equation}
and the matrix $X(\lambda)$ is given in
\eqref{Xla} and the matrix $S(\lambda)$ is
\begin{equation}\label{Sla}
S(\lambda)
=
\begin{pmatrix}
1 & -2E(\lambda) \\ 0 & -1
\end{pmatrix}.
\end{equation}
\end{proposition}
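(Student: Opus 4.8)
The plan is to first establish a pair of first-order linear ODEs (in $\lambda$ and in $\alpha$) for the ``bare'' vector $\vec e(\lambda)$ directly from the explicit formulas \eqref{em}--\eqref{Ela}, and then to transfer these to $\vec f(\lambda)$ by ``dressing'' with the resolvent, using $\vec f=X\vec e$ from \eqref{Xfe} together with the Lemma. For the bare equations, note first that \eqref{em} and \eqref{eq:def-theta-finite} give $\frac{\rmd}{\rmd\lambda}\log e_-(\lambda)=-\sum_{a\in\{0,1,\alpha\}}\frac{\theta_a/2}{\lambda-a}$, so that $\frac{\rmd}{\rmd\lambda}e_-=-\sum_a\frac{\theta_a}{2}\,g_-(\lambda;a)$ with $g_-$ as in \eqref{gpm}. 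For $e_+=e_-E$ one needs $E'(\lambda)$: writing $W(\nu)=e_-(\nu)^{-2}=(\nu-1)^{r+q}\nu^s/(\nu-\alpha)^r$, which is \emph{rational} in $\nu$ because $r,s,q$ are integers, one integrates by parts on the closed contour $C_\infty$ to replace $E'(\lambda)=\oint W(\nu)(\nu-\lambda)^{-2}$ by $\oint W'(\nu)(\nu-\lambda)^{-1}$; using $W'/W=\sum_a\theta_a/(\nu-a)$ and partial fractions then yields $E'(\lambda)=\sum_a\frac{\theta_a}{\lambda-a}\left[E(\lambda)-E(a)\right]$. Assembling the two components, the $E(a)$ reorganize exactly into the matrix $S(a)$ of \eqref{Sla}, giving the closed ``bare'' system
\begin{equation}
\frac{\rmd}{\rmd\lambda}\vec e(\lambda)=\mathcal{A}(\lambda)\vec e(\lambda),\qquad
\mathcal{A}(\lambda)=\sum_{a\in\{0,1,\alpha\}}\frac{\theta_a}{2(\lambda-a)}S(a),
\end{equation}
and, by the same computation with $\partial_\alpha W=W\,r/(\nu-\alpha)$, the $\alpha$-equation $\partial_\alpha\vec e=\mathcal{B}\vec e$ with $\mathcal{B}(\lambda)=-\frac{\theta_\alpha}{2(\lambda-\alpha)}S(\alpha)$.

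The dressing step is where $X$ enters. Since $\vec f=X\vec e$, the gauge-transformation identity gives $\frac{\rmd}{\rmd\lambda}\vec f=A\vec f$ and $\partial_\alpha\vec f=B\vec f$ automatically, with $A=X\mathcal{A}X^{-1}+X'X^{-1}$ and $B=X\mathcal{B}X^{-1}+(\partial_\alpha X)X^{-1}$; it remains to identify these with \eqref{AB}--\eqref{Anu}. I would do this by a Liouville argument in $\lambda$. The entries $H_{mn}$ of $X$ are Cauchy integrals \eqref{Hmn} of the \emph{single-valued} products $e_mf_n$ (single-valued because $e_-f_\pm=h_1+h_2e_-^2$ with $e_-^2$ rational, and $E$ analytic inside $C_\infty$), hence $X$ and $X^{-1}$ are analytic off $C_0$, $X\to I$ as $\lambda\to\infty$, and $X'X^{-1}$, $(\partial_\alpha X)X^{-1}$ are analytic at $\lambda=0,1,\alpha$. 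Therefore $A$ and $B$ are analytic at these points except for the simple poles inherited from $\mathcal{A},\mathcal{B}$, whose residues are precisely $\mathrm{Res}_{\lambda=a}A=\frac{\theta_a}{2}X(a)S(a)X^{-1}(a)=A_a$ and $\mathrm{Res}_{\lambda=\alpha}B=-A_\alpha$; since $A,B\to0$ as $\lambda\to\infty$, Liouville's theorem yields $A=\sum_a A_a/(\lambda-a)$ and $B=-A_\alpha/(\lambda-\alpha)$, which is exactly \eqref{AB}--\eqref{Anu}. The Lemma enters as the shortcut computing $\left[(1+\hat R)\vec g(\cdot;a)\right](\lambda)=\frac{1}{\lambda-a}X^{-1}(a)\vec f(\lambda)$, which is what makes the residues come out in the conjugated form $X(a)S(a)X^{-1}(a)$ rather than the bare $S(a)$.

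The main obstacle is to prove that $A$ and $B$ carry \emph{no jump} across $C_0$, so that they are genuinely rational rather than merely sectionally meromorphic. The matrices $X_\pm$ differ across $C_0$ by the rank-one Plemelj jump built from $\vec e(\lambda)\vec f(\lambda)^{\mathrm T}$, which is \emph{not} constant in $\lambda$, so the cancellation $X_+'X_+^{-1}=X_-'X_-^{-1}$ is not automatic. The resolution is exactly the integrable-operator structure: combining $X$ with the bare fundamental matrix $\Sigma=\bigl(\begin{smallmatrix}e_-E & 1/e_-\\ e_- & 0\end{smallmatrix}\bigr)$, whose columns solve $\Sigma'=\mathcal{A}\Sigma$, the product $Y=X\Sigma$ has a \emph{constant} jump $Y_+=Y_-G_0$ on $C_0$; then $A=Y'Y^{-1}$ is manifestly continuous across $C_0$ because $G_0'=0$. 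Equivalently, and closer to the tools assembled above, one verifies the cancellation directly by computing the correction term $\oint_{C_0}(\partial_\lambda+\partial_\mu)R(\lambda,\mu)\vec e(\mu)\,\rmd\mu$ from the integrable form of $(\partial_\lambda+\partial_\mu)R$ and checking, via the Lemma, that it is precisely what upgrades $S(a)X^{-1}(a)$ to $X(a)S(a)X^{-1}(a)$. Establishing this continuity is the one nontrivial point; everything else is the routine bookkeeping sketched above, with the $\alpha$-equation being a verbatim repeat using $\mathcal{B}$ in place of $\mathcal{A}$.
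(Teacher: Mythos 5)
Your proposal is correct in outline, but it is worth noting that the paper contains \emph{two} proofs of Proposition~\ref{Fpair}, and yours reproduces the second one (given in Sect.~\ref{sec:RH}) rather than the one printed immediately after the statement. The paper's primary proof differentiates the Fredholm equation \eqref{fKe} in $\alpha$ and $\lambda$, computes $\partial_\alpha K$ and $(\partial_\lambda+\partial_\mu)K$ explicitly, and then applies the resolvent $1+\hat R$ via the Lemma on $\vec g(\cdot;\nu)$; the conjugated residues $X(\nu)S(\nu)X^{-1}(\nu)$ emerge there precisely from that Lemma. Your route --- bare system $\vec e'=\mathcal{A}\vec e$ with $\mathcal{A}=\sum_a\tfrac{\theta_a}{2(\lambda-a)}S(a)$, gauge transformation by $X$ using \eqref{Xfe}, and a Liouville argument --- is exactly the paper's alternative proof based on the analytic properties of $Y(\lambda)$ in \eqref{matY} (your $X\Sigma$ is that $Y$ up to a constant column permutation). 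You correctly isolate the one nontrivial point, namely that $A=X'X^{-1}+X\mathcal{A}X^{-1}$ has no jump across $C_0$, and your resolution via the constancy of the jump matrix in \eqref{YYG} is the same device the paper uses (both you and the paper leave the Sokhotski--Plemelj computation of that jump to the reader, so this is not a gap relative to the paper's own level of detail). Two small corrections to your framing: first, in the Liouville version the Lemma is not actually what produces the conjugated form $X(a)S(a)X^{-1}(a)$ --- that comes for free from $X\mathcal{A}X^{-1}$ together with the analyticity of $X'X^{-1}$ at $\lambda=0,1,\alpha$; the Lemma is the engine of the \emph{other} proof. Second, your observation that $(X_+-X_-)\vec e=0$ on $C_0$ (so that $\vec f=X\vec e$ is branch-independent) is true but insufficient by itself, since Liouville needs equality of the full matrices $A_\pm$, not just of their action on the single vector $\vec f$; this is why the two-column fundamental matrix $\Sigma$ is genuinely needed, as you correctly set it up. What each approach buys: the direct differentiation is self-contained and stays entirely inside the integral-operator calculus, while your (and the paper's second) approach is shorter once the constant-jump structure is granted and, as the paper notes, also yields the formula \eqref{dlX} for $X'(\lambda)$ needed later in Sect.~\ref{Sec:FDR}.
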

\begin{proof}
Here we consider a proof based on the straightforward differentiation of the
functions $f_\pm(\lambda)$; a simpler proof,
based on analytical properties of the matrix
$X(\lambda)$, is given in Sect. \ref{Sec:tau-f}.

We first consider the derivatives of functions $f_\pm(\lambda)$ with
respect to the parameter $\alpha$.
We begin with calculating the derivatives of functions $e_\pm(\lambda)$.
{}From \eqref{em}, for the function $e_{-}(\lambda)$ we have
\begin{equation}\label{daem}
\frac{\rmd}{\rmd\alpha}e_{-}(\lambda)=-\frac{r}{2(\lambda-\alpha)} e_{-}(\lambda).
\end{equation}
To obtain the similar relation for the function $e_{+}(\lambda)$, let
us consider first the function $E(\lambda)$, see \eqref{ep}, for which
we have
\begin{equation}
\frac{\rmd}{\rmd\alpha}E(\lambda)=
r\oint_{C_\infty}\frac{(\nu-1)^{r+q}\nu^s}{(\nu-\alpha)^{r+1}(\nu-\lambda)}
\frac{\rmd \nu}{2\pi\rmi}.
\end{equation}
Using
\begin{equation}
\frac{1}{(\nu-\alpha)(\nu-\lambda)}=
\frac{1}{\lambda-\alpha}\left(\frac{1}{\nu-\lambda}
-\frac{1}{\nu-\alpha}\right),
\end{equation}
we can rewrite the relation above as
\begin{equation}
\frac{\rmd}{\rmd\alpha}E(\lambda)
=\frac{r}{\lambda-\alpha}\left(E(\lambda)-E(\alpha)\right)
\end{equation}
and therefore
\begin{equation}\label{daep}
\frac{\rmd}{\rmd\alpha}e_{+}(\lambda)=
\frac{r}{2(\lambda-\alpha)} e_{+}(\lambda)
-\frac{rE(\alpha)}{\lambda-\alpha}e_{-}(\lambda),
\end{equation}
where we have also used \eqref{daem}.

To obtain derivatives of the functions $f_\pm(\lambda)$ with respect to $\alpha$,
we differentiate the defining relation \eqref{fKe}, that gives
\begin{multline}\label{diffeq}
\frac{\rmd}{\rmd\alpha}  f_{\pm}(\lambda)-
\oint_{C_0}\left[ \frac{\rmd}{\rmd\alpha} K(\lambda,\mu)\right]f_\pm(\mu)\,\rmd \mu
-\oint_{C_0}K(\lambda,\mu) \frac{\rmd}{\rmd\alpha} f_\pm(\mu)\,\rmd \mu
\\
=\frac{\rmd}{\rmd\alpha} e_\pm(\lambda).
\end{multline}
Using \eqref{daem} and \eqref{daep}, the derivative of
$K(\lambda,\mu)$ can be computed, with the result
\begin{equation}\label{daK}
\frac{\rmd}{\rmd\alpha}
K(\lambda,\mu)
= -r \frac{e_{+}(\lambda)e_{-}(\mu) +e_{-}(\lambda)e_{+}(\mu)
-2E(\alpha)e_{-}(\lambda)e_{-}(\mu)}{4\pi\rmi(\lambda-\alpha)(\mu-\alpha)}.
\end{equation}
Thus, moving the second term in the left-hand side of \eqref{diffeq}
to the right, we can rewrite \eqref{diffeq} as follows:
\begin{multline}
\left[\left(1-\hat K\right)\frac{\rmd}{\rmd\alpha}f_\pm\right](\lambda)
= \frac{\rmd}{\rmd\alpha} e_\pm(\lambda)
\\
- \frac{r}{2(\lambda-\alpha)}
\big[H_{-\pm}(\alpha) e_{+}(\lambda) +H_{+\pm}(\alpha) e_{-}(\lambda)
-2 E(\alpha) H_{-\pm}(\alpha) e_{-}(\lambda)\big].
\end{multline}
In the case of the plus-sign subscript, using \eqref{daep}, we thus obtain
\begin{multline}\label{withplus}
\left[\left(1-\hat K\right)\frac{\rmd}{\rmd\alpha}f_{+}\right](\lambda)
=  \big(1-H_{-+}(\alpha)\big) \frac{r e_{+}(\lambda)}{2(\lambda-\alpha)}
\\
-\left[H_{++}(\alpha)+2 E(\alpha) \big(1- H_{-+}(\alpha)\big)\right]
\frac{r e_{-}(\lambda)}{2(\lambda-\alpha)}.
\end{multline}
Similarly, in the case of the minus-sign subscript, using \eqref{daem}, we get
\begin{multline}\label{withminus}
\left[\left(1-\hat K\right)\frac{\rmd}{\rmd\alpha}f_{-}\right](\lambda)
=  - H_{--}(\alpha)\frac{re_{+}(\lambda)}{2(\lambda-\alpha)}
\\
-\big[1+H_{+-}(\alpha)+2 E(\alpha) H_{--}(\alpha)\big]
\frac{re_{-}(\lambda)}{2(\lambda-\alpha)}.
\end{multline}
Now the desired result can be obtained by acting with the operator
$(1-\hat K)^{-1}=1+\hat R$ on these two relations.

Comparing \eqref{withplus} and \eqref{withminus} with \eqref{Xla}, one can note that
these two relations can also be written as
\begin{equation}
\left[\left(1-\hat K\right)\frac{\rmd}{\rmd\alpha}\vec f\right](\lambda)
=\frac{r}{2}\, X(\alpha) S(\alpha)
\vec g(\lambda;\alpha),
\end{equation}
where the matrix $S(\lambda)$ is defined in \eqref{Sla}.
Specifying $\nu=\alpha$ in \eqref{gvecf}, we obtain
\begin{equation}\label{daf}
\frac{\rmd}{\rmd\alpha}\vec f(\lambda) =
\frac{r}{2(\lambda-\alpha)} X(\alpha) S(\alpha) X^{-1}(\alpha)\, \vec f(\lambda).
\end{equation}
Clearly, the matrix standing in the right-hand in \eqref{daf}
side is exactly the matrix $B(\lambda)$ appearing in \eqref{AB}.

Let us now find the derivatives of the functions $f_\pm(\lambda)$
with respect to the variable $\lambda$. We first obtain
those of the functions $e_\pm(\lambda)$. It is
straightforward for $e_{-}(\lambda)$,
\begin{equation}\label{dlem}
\frac{\rmd}{\rmd\lambda}e_{-}(\lambda)= -\frac{1}{2}
\left(\frac{s}{\lambda}
+\frac{r+q}{\lambda-1}-\frac{r}{\lambda-\alpha}\right)e_{-}(\lambda).
\end{equation}
To obtain that of $e_{+}(\lambda)$, we note that
the derivative of the function $E(\lambda)$, after integration by parts, reads
\begin{align}
\frac{\rmd}{\rmd\lambda}E(\lambda)
&
=\oint_{C_\infty}\frac{\nu^s(\nu-1)^{r+q}}{(\nu-\alpha)^r(\nu-\lambda)^2}
\frac{\rmd \nu}{2\pi\rmi}
\notag\\ &
=\oint_{C_\infty}\left(\frac{\rmd}{\rmd\nu}\frac{\nu^s(\nu-1)^{r+q}}{(\nu-\alpha)^r}
\right)\frac{1}{\nu-\lambda}
\frac{\rmd \nu}{2\pi\rmi}
\notag\\ &
=\frac{s}{\lambda}(E(\lambda)-E(0))+\frac{r+q}{\lambda-1}(E(\lambda)-E(1))
-\frac{r}{\lambda-\alpha} (E(\lambda)-E(\alpha)).
\end{align}
Hence,
\begin{multline}\label{dlep}
\frac{\rmd}{\rmd\lambda}e_{+}(\lambda)=
\frac{1}{2}\left(\frac{s}{\lambda}+\frac{r+q}{\lambda-1}
-\frac{r}{\lambda-\alpha}\right) e_{+}(\lambda)
\\
-\left(\frac{sE(0)}{\lambda}+\frac{(r+q)E(1)}{\lambda-1}
-\frac{rE(\alpha)}{\lambda-\alpha}\right) e_{-}(\lambda),
\end{multline}
where \eqref{dlem} have been used.

Now we are prepared for calculations with the functions $f_\pm(\lambda)$.
Differentiate the defining relation \eqref{fKe} to obtain
\begin{multline}\label{diffeq2}
\frac{\rmd}{\rmd\lambda}f_\pm(\lambda)
-\oint_{C_0}
\left[\left(\frac{\rmd}{\rmd\lambda}+\frac{\rmd}{\rmd\mu}\right)K(\lambda,\mu) \right]
f_\pm(\mu)\,
\rmd \mu
-\oint_{C_0}K(\lambda,\mu) \frac{\rmd}{\rmd\mu}f_\pm(\mu)\,
\rmd \mu
\\
=\frac{\rmd}{\rmd\lambda}e_\pm(\lambda).
\end{multline}
Here the second term can be expressed in terms of the functions $e_\pm(\lambda)$.
Indeed, since
\begin{equation}
\left(\frac{\rmd}{\rmd\lambda}+\frac{\rmd}{\rmd\mu}\right)\frac{1}{\lambda-\mu}=0,
\end{equation}
differentiation of the kernel affects only the numerator,
\begin{equation}
\left(\frac{\rmd}{\rmd\lambda}+\frac{\rmd}{\rmd\mu}\right)K(\lambda,\mu)
=\frac{e_{+}'(\lambda)e_{-}(\mu)-e_{-}'(\lambda)e_{+}(\mu)
+e_{+}(\lambda)e_{-}'(\mu)-e_{-}(\lambda)e_{+}'(\mu)}{2\pi\rmi(\lambda-\mu)},
\end{equation}
where prime denotes the derivative with respect to the argument of the function.
Using \eqref{dlem} and \eqref{dlep}, we obtain
\begin{align}
&
\left(\frac{\rmd}{\rmd\lambda}+\frac{\rmd}{\rmd\mu}\right)K(\lambda,\mu)
\notag\\ & \qquad
=-\frac{1}{2}\left[\frac{s}{\lambda\mu}
+\frac{r+q}{(\lambda-1)(\mu-1)}-\frac{r}{(\lambda-\alpha)(\mu-\alpha)}
\right]\frac{e_{+}(\lambda)e_{-}(\mu)+e_{-}(\lambda)e_{+}(\mu)}{2\pi\rmi}
\notag\\ & \qquad\qquad
+\left[
\frac{sE(0)}{\lambda\mu}
+\frac{(r+q)E(1)}{(\lambda-1)(\mu-1)}
-\frac{rE(\alpha)}{(\lambda-\alpha)(\mu-\alpha)}
\right]\frac{e_{-}(\lambda)e_{-}(\mu)}{2\pi\rmi}.
\end{align}
Thus, moving the second term in the left-hand side of \eqref{diffeq2} to the right,
we arrive at the following relation:
\begin{multline}\label{longuy}
\left[\left(1-\hat K\right)f_\pm'\right](\lambda)
\\
=\frac{\rmd}{\rmd\lambda}e_\pm(\lambda)
-\frac{s}{2\lambda}
\big[H_{-\pm}(0)e_{+}(\lambda)+\big(H_{+\pm}(0)-2E(0)H_{-\pm}(0)\big)
e_{-}(\lambda)\big]
\\
-\frac{r+q}{2(\lambda-1)}
\big[H_{-\pm}(1)e_{+}(\lambda)+\big(H_{+\pm}(1)-2E(1)H_{-\pm}(1)\big)
e_{-}(\lambda)\big]
\\
+\frac{r}{2(\lambda-\alpha)}
\big[H_{-\pm}(\alpha)e_{+}(\lambda)+\big(H_{+\pm}(\alpha)-2E(\alpha)H_{-\pm}(\alpha)\big)
e_{-}(\lambda)\big].
\end{multline}
Using \eqref{dlem} and \eqref{dlep} and switching to the vector form, one can see that
\eqref{longuy} is just the relation
\begin{multline}
\left[\left(1-\hat K\right)\vec{f'}\right](\lambda)
\\
=\frac{s}{2}\,X(0)S(0)\vec g(\lambda;0)
+\frac{r+q}{2}\,X(1)S(1)\vec g(\lambda;1)
-\frac{r}{2}\,X(\alpha)S(\alpha)\vec g(\lambda;\alpha),
\end{multline}
where functions $g_\pm(\lambda;\nu)$ are defined in \eqref{gpm}.

Finally, acting with the operator $(1-\hat K)^{-1}=1+\hat R$ on the last relation
by making use of \eqref{Rgpm} we obtain
\begin{multline}\label{dlf}
\frac{\rmd}{\rmd\lambda}\vec f(\lambda)
=
\bigg\{\frac{s}{2\lambda} X(0) S(0) X^{-1}(0)+
\frac{r+q}{2(\lambda-1)} X(1) S(1) X^{-1}(1)
\\
-\frac{r}{2(\lambda-\alpha)} X(\alpha) S(\alpha) X^{-1}(\alpha)
\bigg\}\, \vec e(\lambda).
\end{multline}
Clearly, the matrix appearing in the right-hand side of \eqref{dlf} coincides with
the matrix $A(\lambda)$ given in \eqref{AB}.
\end{proof}

\section{EFP and P6}\label{Sec:3}

The main goal of this section is to prove Theorem \ref{EFPasP6}.
We begin with recalling some basic facts of the theory of P6.
Next, we discuss analytic properties of the matrix $X(\lambda)$
in terms of the related objects of the theory of integrable systems. Finally,
we use all these ingredients to
show that the Fredholm determinant
of the operator $\hat K$ serves as a $\tau$-function of P6.

\subsection{The sixth Painlev\'e equation}\label{Sec:tau-f}

The canonical form of P6 (cf.~\cite{I-44}) reads,
\begin{multline}\label{P6}
\frac{\rmd^2 y}{\rmd\alpha^2}
=\frac{1}{2}\left(\frac{1}{y}+\frac{1}{y-1}+\frac{1}
{y-\alpha}\right)\left(\frac{\rmd y}{\rmd\alpha}\right)^2
-\left(\frac{1}{\alpha}+\frac{1}{\alpha-1}+\frac{1}{y-\alpha}\right)\frac{\rmd y}
{\rmd\alpha}
\\
+\frac{y(y-1)(y-\alpha)}{\alpha^2(\alpha^2-1)}
\left(a+b\frac{\alpha}{y^2}+c\frac{\alpha-1}{(y-1)^2}
+d\frac{\alpha(\alpha-1)}{(y-\alpha)^2}\right).
\end{multline}
For generic choice of the coefficients $a$, $b$, $c$, and $d$ solutions $y=y(\alpha)$
are transcendental functions called the sixth Painlev\'e transcendents. These functions
cannot be expressed by a finite number of operations through the known
transcendental (hypergeometric-type, elliptic, etc), algebraic, and elementary
functions (for a more precise formulation, see \cite{U-95,U-06}).

At the same time, it is known (cf.~\cite{O-87}) that for some particular choices of
coefficients there exist special, and even general, solutions which can be expressed
in terms of Gauss hypergeometric, algebraic, elementary functions
(and composition of elliptic and hypergeometric functions).
In Sect.~\ref{Sec:FDR} we show that the EFP can be expressed in terms
of a rational solution of P6.
Classification of rational solutions of P6 via its representation
as the isomonodromy deformation of a linear $2\times2$ matrix Fuchsian ODE
of the first order is given in~\cite{M-01}. We do not use
this classification result, however,
we need the above mentioned isomonodromy representation of P6.

Consider the $2\times 2$ matrix linear Fuchsian ODE
\begin{equation}\label{eqlambda}
\frac{\rmd}{\rmd\lambda}Y(\lambda)=A(\lambda)Y(\lambda),\qquad
A(\lambda)=\frac{A_0}{\lambda}
+\frac{A_1}{\lambda-1}+\frac{A_\alpha}{\lambda-\alpha},
\end{equation}
where matrices $A_0$, $A_1$, and $A_\alpha$ are independent
of $\lambda$.
In generic situation the isomonodromy deformations coincide with
the Schlesinger deformations, and they are
governed  by the following equation with respect to $\alpha$:
\begin{equation}\label{eqalpha}
\frac{\rmd}{\rmd\alpha}Y(\lambda)=B(\lambda)Y(\lambda),\qquad
B(\lambda)=-\frac{A_\alpha}{\lambda-\alpha}.
\end{equation}
The compatibility condition of \eqref{eqlambda} and \eqref{eqalpha},
which is called the Schlesinger system, is equivalent,
according to Jimbo and Miwa \cite{JM-81}, to P6 equation \eqref{P6}.
More precisely, choose normalization of \eqref{eqlambda} such that
the matrices $A_0$, $A_1$, and $A_\alpha$ are traceless, and
the matrix $A_\infty:=-(A_0+A_1+A_\alpha)$ is diagonal
\begin{equation}
A_\infty
=\begin{pmatrix}
\theta_\infty/2 & 0 \\ 0 & -\theta_\infty/2
\end{pmatrix}, \qquad \theta_\infty\ne 0.
\end{equation}
Denote the eigenvalues of the matrices $A_k$ as $\pm\theta_k/2$, $k=0,1,\alpha,\infty$.
Actually, while matrices $A_k$ are functions of $\alpha$, their
eigenvalues $\theta_k$ are in fact independent
of $\alpha$, since they are the first integrals of the Schlesinger system.

Denoting by $(A_k)_{ij}$, $i,j=1,2$, the matrix elements of $A_k$, and noting the relation
\begin{equation}
(A_0)_{12}+(A_1)_{12}+(A_\alpha)_{12}=0,
\end{equation}
one can verify that the equation
\begin{equation}
\frac{(A_0)_{12}}{y}+\frac{(A_1)_{12}}{y-1}+\frac{(A_\alpha)_{12}}{y-\alpha}=0
\end{equation}
has in the generic situation (that is, for $(A_1)_{ij}+\alpha (A_\alpha)_{ij}\ne 0$)
unique solution $y$, provided that the coefficients in \eqref{P6} are chosen as follows:
\begin{equation}
a=\frac{(\theta_\infty-1)^2}{2},\qquad
b=-\frac{\theta_0^2}{2},\qquad
c=\frac{\theta_1^2}{2},\qquad
d=\frac{1-\theta_\alpha^2}{2}.
\end{equation}
One can associate with isomodromy deformations of \eqref{eqlambda}
the so-called $\tau$-function, which is defined modulo a constant
factor by the relation
\begin{align}\label{eq:def-tau-JM}
\frac{\rmd}{\rmd\alpha}\log \tau
&\equiv\underset{\lambda=\alpha}\res\tr\left(
\frac{A_0+\theta_0/2}{\lambda}+\frac{A_1+\theta_1/2}{\lambda-1}
+\frac{A_\alpha+\theta_\alpha/2}{\lambda-\alpha}
\right)^2\\
&=\tr\left[A_\alpha\left(\frac{A_0}{\alpha}+\frac{A_1}{\alpha-1}\right)\right]
+\frac{\theta_\alpha}{2}\left(\frac{\theta_0}{\alpha}+\frac{\theta_1}{\alpha-1}\right).
\nonumber
\end{align}
We note that our defining relation for the $\tau$-function~\eqref{eq:def-tau-JM}
(with the right-hand side being the Hamiltonian of P6) exactly reproduce
that of Jimbo and Miwa \cite{JM-81}, since we use the traceless normalization
of the matrices $A_k$, $k=0,1,\alpha$.

The function
\begin{multline}
\sigma=\alpha(\alpha-1)\frac{\rmd}{\rmd\alpha}
\log\tau
\\
+(\nu_1\nu_2+\nu_1\nu_3+\nu_2\nu_3) \alpha-
\frac{\nu_1\nu_2+\nu_1\nu_3+\nu_2\nu_3+\nu_1\nu_4+\nu_2\nu_4+\nu_3\nu_4}{2},
\end{multline}
where
\begin{equation}\label{nuks}
\nu_1=\frac{\theta_\alpha+\theta_\infty}{2},\quad
\nu_2=\frac{\theta_\alpha-\theta_\infty}{2},\quad
\nu_3=-\frac{\theta_0+\theta_1}{2},\quad
\nu_4=-\frac{\theta_0-\theta_1}{2},
\end{equation}
satisfies the equation (the so-called $\sigma$-form of P6):
\begin{multline}\label{eqsigma}
\alpha^2(\alpha-1)^2\sigma'(\sigma'')^2
+\left\{(1-2\alpha)(\sigma')^2+2\sigma\sigma'+\nu_1\nu_2\nu_3\nu_4\right\}^2
\\
=\left(\sigma'+\nu_1^2\right)
\left(\sigma'+\nu_2^2\right)
\left(\sigma'+\nu_3^2\right)
\left(\sigma'+\nu_4^2\right).
\end{multline}
The prime denotes differentiation with respect to $\alpha$.
We note the $\sigma$-function can also be written in the form
\begin{equation}\label{sigmatr}
\sigma=\alpha(\alpha-1)\frac{\rmd}{\rmd\alpha}
\tr\left[A_\alpha\left(\frac{A_0}{\alpha}+\frac{A_1}{\alpha-1}\right)\right]
+\nu_1\nu_2 \alpha-\frac{\nu_1\nu_2+\nu_3\nu_4}{2},
\end{equation}
which will be convenient for our purposes below.

\subsection{Riemann-Hilbert problem}\label{sec:RH}

Our result stated in Proposition \ref{Fpair} yields a special vector solution
of the system of ODEs \eqref{eqlambda}, \eqref{eqalpha} in terms of the solution
of the integral equation \eqref{fKe}, where the matrices $A_0$, $A_1$, and $A_\alpha$  are
given by \eqref{Anu}--\eqref{Sla}, \eqref{Xla}, and \eqref{Hmn}.
Here we construct a matrix fundamental solution corresponding to the
vector one, calculate its monodromy data and, finally, reformulate the problem of solving
of the Fredholm integral equation as the solution of a proper Riemann-Hilbert
problem or as a
solution of the inverse monodromy problem for ODE \eqref{eqlambda}.
As a byproduct of our considerations here we give another proof of Proposition~\ref{Fpair}
and obtain a formula for $X'(\lambda)$ which will appear important in what follows.
It should be stressed that this formula, just like the statement  of
Proposition~\ref{Fpair}, can be proved by the direct differentiation, while
their derivations based on the analytic properties of $X(\lambda)$
are easier modulo the considerations given below.

First, we discuss the structure of the matrix $Y(\lambda)$ in our case.
Looking at \eqref{Anu} and taking into account that the matrix $S(\lambda)$ admits factorization
\begin{equation}\label{Sfactor}
S(\lambda)=
\begin{pmatrix}
1 & E(\lambda) \\ 0 & 1
\end{pmatrix}
\begin{pmatrix}
1 & 0 \\ 0 & -1
\end{pmatrix}
\begin{pmatrix}
1 & -E(\lambda) \\ 0 & 1
\end{pmatrix},
\end{equation}
we observe that the matrices $A_0$, $A_1$, and $A_\alpha$ are in fact residues of
the logarithmic derivative
$[\partial_\lambda Y(\lambda)]Y(\lambda)^{-1}$ of the matrix
\begin{equation}\label{matY}
Y(\lambda)=X(\lambda)
\begin{pmatrix}
1 & E(\lambda) \\ 0 & 1
\end{pmatrix}
\begin{pmatrix}
1/e_{-}(\lambda) &  0\\
0 &  e_{-}(\lambda)
\end{pmatrix},
\end{equation}
where the functions $E(\lambda)$ and $e_{-}(\lambda)$ are given by
\eqref{Ela} and \eqref{em}, respectively. Below we show that $Y(\lambda)$ is
a fundamental solution of the Fuchs pair \eqref{eqlambda}, \eqref{eqalpha}.
Note that the second column of $Y(\lambda)$ is nothing but our vector solution
$\vec f(\lambda)$.

The standard approach of proving that $Y(\lambda)$ is the solution of the
Fuchs pair~\eqref{eqlambda}, \eqref{eqalpha}
consists in inspecting its analytic properties.
Equation~\eqref{matY} defines the matrix $Y(\lambda)$ as analytic
function on $\mathbb C\setminus\{C_0\cup C_\infty\}$,
i.e., piecewise analytic in $\mathbb C$,
with singular points at $\lambda=0,\alpha, 1$, and $\infty$. In case some numbers
$r$, $s$, or $q$ are odd one has to
make one or two cuts  in $\mathbb C$ connecting the singular points.
If so, then one should consider $\mathbb C$ with the cuts,
that makes no substantial changes in the following.

The boundary values of the functions $X(\lambda)$ and $E(\lambda)$
suffer jumps on the oriented contours $C_0$ and $C_\infty$ (recall that
both contours are counter-clockwise oriented with respect to the origin).
These jumps can be found with the help of the Sokhotski-Plemelj formulas.
Namely, let $\Gamma$ be an oriented contour; denoting
$Y^{\pm}_{\Gamma}(\lambda)=Y(\lambda\pm\epsilon)$
as $\epsilon\to 0$ with $\lambda\in\Gamma$, such that the point
$\lambda+\epsilon$  (respectively, $\lambda-\epsilon$) lies in the domain on the left-hand
(right-hand) side of the oriented contour $\Gamma$, we have
\begin{equation}\label{YYG}
Y^{+}_{\Gamma}(\lambda)=Y^{-}_{\Gamma}(\lambda)G_{\Gamma}^{}, \qquad
\lambda \in \Gamma, \qquad \Gamma=\{C_0, C_\infty\},
\end{equation}
where
\begin{equation}\label{GG}
G_{C_0}=
\begin{pmatrix}
1& 0 \\ -1 & 1
\end{pmatrix},
\qquad
G_{C_\infty}=
\begin{pmatrix}
1& 1 \\ 0 & 1
\end{pmatrix}.
\end{equation}
Using equations \eqref{matY}, \eqref{Xla}, \eqref{Hmn}, and \eqref{Ela} one confirms
the following asymptotic behavior at the singular points:
\begin{multline}\label{eq:Y-asympt-finite}
Y(\lambda)\underset{\lambda\to\nu}
=\left\{\Phi_0(\nu)+\sum_{k=1}^\infty\Phi_k(\nu)(\lambda-\nu)^k\right\}
\begin{pmatrix}
(\lambda-\nu)^{\theta_\nu/2} & 0\\
0 & (\lambda-\nu)^{-\theta_\nu/2}
\end{pmatrix},
\\
\nu=0,\alpha,1,
\end{multline}
where $\det\Phi_0(\nu)=1$ and the normalization condition at the point $\lambda=\infty$ is
\begin{equation}\label{eq:Y-asympt-infty}
Y(\lambda)\underset{\lambda\to\infty}
=\left\{I+\sum_{k=1}^\infty\Phi_k(\infty)\left(\frac1{\lambda}\right)^k\right\}
\begin{pmatrix}
\lambda^{-\theta_\infty/2} & 0 \\
0 & \lambda^{\theta_\infty/2}
\end{pmatrix},
\quad\theta_\infty=-(s+q).
\end{equation}
The series in \eqref{eq:Y-asympt-finite} and \eqref{eq:Y-asympt-infty}
are convergent in some neighborhoods of the corresponding
singular points, and $\theta_\nu$'s are given by \eqref{eq:def-theta-finite}.

We can summarize the analytic properties of $Y(\lambda)$ as the following singular
Riemann-Hilbert problem:
\begin{enumerate}
\item
The $2\times2$ matrix function $Y(\lambda)$ is analytic
in $\mathbb C\setminus\{C_0\cup C_\infty\}$, where circles $C_0$ and $C_\infty$
are centered at $0$ and $\infty$, respectively;
\item
The boundary values on the contours $C_0$ and $C_\infty$ satisfy the
jump conditions ~\eqref{YYG}, \eqref{GG};
\item
The function $Y(\lambda)$ has singular points at $\lambda=0,\alpha,1,\infty$
where it has asymptotic expansions~\eqref{eq:Y-asympt-finite}
and \eqref{eq:Y-asympt-infty};
\item
The points $\lambda=\alpha$ and $\lambda=1$ belong to the annulus
domain between the circles $C_0$ and $C_\infty$.
\end{enumerate}

The existence of the solution of the Riemann-Hilbert problem can be established by
the reference to the integral equation~\eqref{fKe} and
representation \eqref{matY} for $Y(\lambda)$; the uniqueness can be proven
by standard arguments based on the Liouville theorem. In fact, the
Riemann-Hilbert problem is a reformulation of the analytic properties
of the matrix-function $Y(\lambda)$, which thus can be seen as its implicit solution.
Note that this Riemann-Hilbert problem can solved explicitly
with the help of the theory Schlesinger transformations.

The solution of the Riemann-Hilbert problem delivers
the generalized solution of the Fuchs
pair \eqref{eqlambda}, \eqref{eqalpha}.
This can be established in the standard way by noticing that the logarithmic
derivatives $\partial_\lambda Y(\lambda)\,Y(\lambda)^{-1}$ and
$\partial_\alpha Y(\lambda)\,Y(\lambda)^{-1}$ are rational functions on
$\mathbb{C}$ with the first order poles at $0$, $\alpha$, $1$, and $\infty$.
Calculation of the residues at these points with the help of representation
\eqref{matY} provides us an alternative proof of Proposition~\ref{Fpair},
since our vector solution
$\vec f(\lambda)$ is nothing but the second column of $Y(\lambda)$.
Note that, if one does not assume representation  \eqref{matY}, then
the residue-matrices $A_\nu$, $\nu=0,\alpha,1$, can be calculated in terms of the entries
the matrix $\Phi_1(\infty)$.

Using the solution of the Riemann-Hilbert problem we can construct
the classical solution to the pair  \eqref{eqlambda}, \eqref{eqalpha}
just by making analytic continuation on the whole complex plane of any
piece of the function $Y(\lambda)$; since we have three pieces
we obtain three fundamental solutions related to each other by
constant matrices.  We take the canonical solution, i.e., the one
obtained by the analytic continuation from the disk centered at infinity
with the asymptotics \eqref{eq:Y-asympt-infty}.
Our purpose is to characterize this solution by its monodromy data.
If we denote this solution as $Y_\infty(\lambda)$ then it can be
characterized by the following asymptotic expansions.
\begin{proposition}
The matrix-function $Y_\infty(\lambda)$
is the unique fundamental solution of
\eqref{eqlambda} with the residue matrices $A_\nu$
defined by \eqref{Anu} and \eqref{eq:def-theta-finite}
with the following expansions at the regular singular points:
\begin{multline}\label{Yinfty}
Y_\infty(\lambda)
\underset{\lambda\to\nu}=
\left\{\Phi_0(\nu)+\sum_{k=1}^\infty\Phi_k(\nu)(\lambda-\nu)^k\right\}
\begin{pmatrix}
(\lambda-\nu)^{\theta_\nu/2} & 0 \\
0 & (\lambda-\nu)^{-\theta_\nu/2}
\end{pmatrix}
\mathcal{C}_\nu,
\\
\qquad
\nu=0,\alpha,1,
\end{multline}
with the normalization condition
\begin{equation}
Y_\infty(\lambda)\underset{\lambda\to\infty}
=\left\{I+\sum_{k=1}^\infty\Phi_k(\infty)\left(\frac1{\lambda}\right)^k\right\}
\begin{pmatrix}
\lambda^{-\theta_\infty/2} & 0 \\
0 & \lambda^{\theta_\infty/2}
\end{pmatrix},
\end{equation}
where $\theta_\infty=-(s+q)$, $\det\Phi_0(\nu)=1$, and the matrices
$\mathcal{C}_\nu$, $\nu=0,1,\alpha$, are
\begin{equation}
\mathcal{C}_0=G_{C_0}^{-1}G_{C_\infty}^{-1},
\qquad \mathcal{C}_\alpha=\mathcal{C}_1=G_{C_\infty}^{-1}.
\end{equation}
\end{proposition}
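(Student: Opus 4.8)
The plan is to read off the connection matrices $\mathcal{C}_\nu$ directly from the piecewise-analytic structure of $Y(\lambda)$ in \eqref{matY}, by tracking which of the auxiliary circles one crosses when passing from the region at infinity to the region containing a given regular singular point. The two facts that make this work are: first, the coefficient matrix $A(\lambda)$ in \eqref{eqlambda} is a globally defined rational function whose only poles are $0,\alpha,1,\infty$, so $A(\lambda)$ has no jump across $C_0$ or $C_\infty$; and second, the jumps \eqref{YYG} of $Y(\lambda)$ across these circles are the \emph{constant} matrices \eqref{GG}. Consequently each of the three analytic pieces of $Y(\lambda)$ (inside $C_0$, in the annulus between $C_0$ and $C_\infty$, and outside $C_\infty$) is a local fundamental solution of \eqref{eqlambda}, and their global continuations differ only by right multiplication by $G_{C_0}$ and $G_{C_\infty}$.

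First I would fix $Y_\infty(\lambda)$ as the analytic continuation of the outer piece, i.e.\ the one valid outside $C_\infty$. Its behaviour at $\lambda=\infty$ is inherited verbatim from \eqref{eq:Y-asympt-infty}, which already carries leading coefficient $I$ and $\theta_\infty=-(s+q)$; this is precisely the stated normalization, so nothing needs to be computed there. By item (4) of the Riemann--Hilbert problem above, $\alpha$ and $1$ lie in the annulus, so reaching them from the outer region crosses $C_\infty$ exactly once, whereas reaching $0$ crosses first $C_\infty$ and then $C_0$.

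Next I would compute the connection matrices. Writing the jump \eqref{YYG} in the form ``(interior piece)$\,=\,$(exterior piece)$\,G_\Gamma$'' (valid since for a counter-clockwise circle the $+$-side is the interior), continuation of $Y_\infty$ across $C_\infty$ gives, near $\nu=\alpha,1$, where \eqref{eq:Y-asympt-finite} is the local expansion of the annulus piece,
\[
Y_\infty(\lambda)=\Big\{\Phi_0(\nu)+\cdots\Big\}
\begin{pmatrix}(\lambda-\nu)^{\theta_\nu/2} & 0\\ 0 & (\lambda-\nu)^{-\theta_\nu/2}\end{pmatrix}G_{C_\infty}^{-1},
\]
whence $\mathcal{C}_\alpha=\mathcal{C}_1=G_{C_\infty}^{-1}$. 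Crossing in addition $C_0$ to reach $\nu=0$, where \eqref{eq:Y-asympt-finite} is the expansion of the inner piece, produces the extra right factor $G_{C_0}^{-1}$, giving $\mathcal{C}_0=G_{C_0}^{-1}G_{C_\infty}^{-1}$. The residue matrices at $0,\alpha,1$ are, by construction, those of \eqref{Anu}, because $Y_\infty$ solves the very same equation \eqref{eqlambda} as $Y$.

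Finally, for uniqueness I would invoke the standard Liouville-type argument already used for the Riemann--Hilbert problem: two fundamental solutions of \eqref{eqlambda} sharing the normalization \eqref{eq:Y-asympt-infty} at infinity and the prescribed local data differ by an entire matrix that is bounded at infinity, hence by a constant, which the normalization forces to be $I$. The one delicate point, which I expect to be the main obstacle, is resonance: the exponents $\theta_\nu$ and $\theta_\infty$ are integers here, so a priori the local solutions could contain logarithmic terms and the normalization at infinity need not determine the solution uniquely. This is resolved by appeal to the explicit construction \eqref{matY}: the expansions \eqref{eq:Y-asympt-finite} and \eqref{eq:Y-asympt-infty} are genuinely logarithm-free with $\det\Phi_0(\nu)=1$, so the formal monodromy is trivial and the connection matrices $\mathcal{C}_\nu$ constitute the complete monodromy data, making the Liouville argument applicable without modification.
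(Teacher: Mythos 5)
Your proposal is correct and follows essentially the same route as the paper: the paper likewise defines $Y_\infty$ as the analytic continuation of the piece normalized at infinity and reads off $\mathcal{C}_\alpha=\mathcal{C}_1=G_{C_\infty}^{-1}$ and $\mathcal{C}_0=G_{C_0}^{-1}G_{C_\infty}^{-1}$ from the constant jump matrices crossed en route to each singular point, with uniqueness by the Liouville-type argument. The resonance caveat you raise is exactly what the paper records in Remark~\ref{Rem:triang} (the $\mathcal{C}_\nu$ are fixed only up to the triangular left factors \eqref{triang}), so your treatment is consistent with the paper's.
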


\begin{remark}\label{Rem:triang}
It is clear that the monodromy matrices
\begin{equation}
M_\nu=e^{i\pi\theta_\nu}I,\qquad \nu=0, \alpha, 1,\infty,
\end{equation}
do not uniquely define $Y_\infty(\lambda)$. So, in this specific case we call the
monodromy data $\mathcal{C}_\nu$. Note that
because integers $\theta_0>0$, $\theta_1>0$, and $\theta_\alpha<0$,
the matrices $\mathcal{C}_\nu$ are defined by asymptotic expansions
\eqref{Yinfty} up to the left multiplication:
\begin{equation}\label{triang}
\mathcal{C}_0\to
\begin{pmatrix}
a_0& b_0 \\ 0 & 1/a_0
\end{pmatrix}\mathcal{C}_0,
\quad
\mathcal{C}_1\to
\begin{pmatrix}
a_1& b_1 \\ 0 & 1/a_1
\end{pmatrix}\mathcal{C}_1,
\quad
\mathcal{C}_\alpha\to\begin{pmatrix}
a_\alpha& 0 \\ b_\alpha & 1/a_\alpha
\end{pmatrix}\mathcal{C}_\alpha,
\end{equation}
where $a_\nu,b_\nu\in\mathbb{C}$, $\nu=0,1,\alpha$.
\end{remark}

Note that our singular matrix Riemann-Hilbert
problem on two contours can be reformulated in various ways
as different singular and regular Riemann-Hilbert problems.
In particular, one can formulate the Riemann-Hilbert problem
on an oriented simple closed contour $\gamma$ on $\mathbb{C}$,
which surrounds a domain containing the points
$\lambda=0$ and $\lambda=\alpha$ but excluding the point $\lambda=1$,
and with the jump matrix given by $G_\gamma=G_{C_0}G_{C_\infty}$. This can be
done in two ways: first, by performing analytic continuation of the solutions
$Y_\infty(\lambda)$ and $Y_0(\lambda)$ (the solution which is
analytic inside the disc surrounded by the contour $C_0$), and, second, just by
observing that in the Fredholm kernel $\hat K$ one can merge
the contours $C_0$ and $C_\infty$ into one contour $\gamma$
and then repeat the whole construction of this section literally.
The equivalence of these two procedures can be established
by using \eqref{triang}.

As a byproduct of our discussion above, one immediately obtains, with the help of
\eqref{matY}, the following formula
\begin{equation}\label{dlX}
\frac{\rmd}{\rmd \lambda} X(\lambda)=
A(\lambda)X(\lambda)-X(\lambda)\left(\frac{s}{2\lambda}S(0)
+\frac{r+q}{2(\lambda-1)}S(1)-\frac{r}{2(\lambda-\alpha)}S(\alpha)\right).
\end{equation}
This formula is crucial in derivation of relation \eqref{dadetK} below.
Note that \eqref{dlX} also be derived directly,
just by using the expressions
for the derivatives of functions $e_\pm(\lambda)$ and $f_\pm(\lambda)$
entering the definition of the matrix $X(\lambda)$.
Then expression \eqref{dlX} yields an alternative
prove of the system \eqref{eqlambda}, \eqref{eqalpha}, by noticing
that the second term
would not appear if instead of $X(\lambda)$ there would stand
the matrix $X(\lambda)$ times some
proper matrix factor, multiplied from the right.
Taking into account that the matrix $S(\lambda)$ given by \eqref{Sla}
admits factorization \eqref{Sfactor}, it is easy to see
that such a matrix factor can be chosen as
in the expression \eqref{matY} for the matrix $Y(\lambda)$.

\subsection{The Fredholm determinant as the $\sigma$-function}
\label{Sec:FDR}

To identify connection of the Fredholm determinant \eqref{EFPasFdet} with
P6, we consider the logarithmic
derivative of the determinant with respect to the parameter $\alpha$,
\begin{align}
\frac{\rmd}{\rmd\alpha}\log\Det \left(1-\hat K\right)
&=
-\Tr\left[\left(1-\hat K\right)^{-1}\frac{\rmd}{\rmd\alpha}\hat K\right]
\notag\\
&=-\Tr\left[\left(1+\hat R\right) \frac{\rmd}{\rmd\alpha}\hat K\right]
\notag\\
&=-\oint_{C_0}
\left[\left(1+\hat R\right)\frac{\rmd}{\rmd\alpha}K(\cdot,\mu)\right](\mu)\,
\rmd\mu.
\end{align}
To evaluate these integrals, we first note that the $\alpha$-derivative of the
kernel $K(\lambda,\mu)$, see \eqref{daK},
can be written in terms of the functions $g_\pm(\lambda;\alpha)$,
defined in \eqref{gpm}, as follows
\begin{multline}\label{daKgg}
\frac{\rmd}{\rmd\alpha}K(\lambda,\mu)
=-\frac{r}{4\pi\rmi}
\big[g_{+}(\lambda;\alpha)g_{-}(\mu;\alpha)
\\
+g_{-}(\lambda;\alpha)g_{+}(\mu;\alpha)
-2E(\alpha)g_{-}(\lambda;\alpha)g_{-}(\mu;\alpha)\big].
\end{multline}
Introducing a notation (see also \eqref{gpm})
\begin{equation}
\vec g_{*}(\lambda;\nu)=
\begin{pmatrix}
g_{-}(\lambda;\nu) \\ - g_{+}(\lambda;\nu)
\end{pmatrix},
\end{equation}
we note that the linear combination of the functions
standing in the brackets in \eqref{daKgg} can also be
written as
\begin{align}
\frac{\rmd}{\rmd\alpha}K(\lambda,\mu)
&=-\frac{r}{4\pi\rmi}\vec g_{*}^T(\mu;\alpha)
S(\alpha)\vec g(\lambda;\alpha)
\notag\\
&=-\frac{r}{4\pi\rmi}\tr\left[\vec g(\lambda;\alpha)
\vec g_{*}^T(\mu;\alpha)
S(\alpha)\right],
\end{align}
where the matrix $S(\lambda)$ is given in \eqref{Sla}.
Then, using the result of action of the operator $1+\hat R$ on the
functions $g_\pm(\lambda;\nu)$ in its vector form  \eqref{gvecf}, we have
\begin{equation}
\left[\left(1+\hat R\right)\frac{\rmd}{\rmd\alpha}K(\cdot,\mu)\right](\lambda)
=-\frac{r}{4\pi\rmi(\mu-\alpha)}
\tr\left[X^{-1}(\alpha)\vec f(\lambda;\alpha)\vec g_{*}^T(\mu;\alpha)S(\alpha)\right].
\end{equation}
Setting here $\lambda=\mu$ and integrating over $\mu$ gives
\begin{align}
\oint_{C_0}
\frac{1}{\mu-\alpha}
\vec f(\mu;\alpha)\vec g_{*}^T(\mu;\alpha)\frac{\rmd \mu}{2\pi\rmi}
&=
\oint_{C_0}
\frac{1}{(\mu-\alpha)^2}
\begin{pmatrix}
f_{+}(\mu)e_{-}(\mu) & - f_{+}(\mu)e_{+}(\mu) \\
f_{-}(\mu)e_{-}(\mu) & - f_{-}(\mu)e_{+}(\mu)
\end{pmatrix}
\frac{\rmd \mu}{2\pi\rmi}
\notag\\ &
=\frac{\rmd}{\rmd \lambda}
\begin{pmatrix}
H_{-+}(\lambda) & - H_{++}(\lambda) \\
H_{--}(\lambda) & - H_{+-}(\lambda)
\end{pmatrix}\bigg|_{\lambda=\alpha}
\notag\\ &
=\frac{\rmd}{\rmd \lambda} X(\lambda)\bigg|_{\lambda=\alpha}.
\end{align}
Hence, we obtain the following expression for  logarithmic derivative of the
Fredholm determinant:
\begin{equation}\label{tr2}
\frac{\rmd}{\rmd\alpha}\log\Det \left(1-\hat K\right)=
-\frac{r}{2}
\tr\left[ S(\alpha) X^{-1}(\alpha) \frac{\rmd}{\rmd \lambda} X(\lambda)
\right]\bigg|_{\lambda=\alpha}.
\end{equation}

Using the $\lambda$-derivative equation for the
formal matrix solution, one can express the matrix $X'(\lambda)$
in terms of matrices $X(\lambda)$ and $A(\lambda)$, and evaluate
the limit $\lambda\to\alpha$ for the trace. In doing so we
use the fact that even though the matrix $X'(\lambda)$
has a pole at $\lambda=\alpha$, the trace in \eqref{tr2}
does not contain the $(\lambda-\alpha)^{-1}$ term. To see this, it is
useful to rewrite \eqref{dlX} in somewhat more explicit form as
\begin{equation}
\frac{\rmd}{\rmd \lambda}X(\lambda)=\sum_{\nu=0,1,\alpha}\frac{\theta_\nu}{2(\lambda-\nu)}
\left(X(\nu)S(\nu)X^{-1}(\nu)X(\lambda)-X(\lambda)S(\nu)\right).
\end{equation}
Keeping $\lambda$ arbitrary, one can see that the $\nu=\alpha$ term does
not contribute to the trace, since
\begin{multline}
\tr\left[S(\alpha)X^{-1}(\alpha)
\left(X(\alpha)S(\alpha)X^{-1}(\alpha)X(\lambda)-X(\lambda)S(\alpha)\right)\right]
\\
=\tr\left[S^2(\alpha)X^{-1}(\alpha)X(\lambda)-S(\alpha)X^{-1}(\alpha)X(\lambda)S(\alpha)\right]=0.
\end{multline}
Setting $\lambda=\alpha$ for the remaining two terms and writing
$\theta_\alpha/2$ for the pre-factor in \eqref{tr2}, we have
\begin{align}
&
\frac{\rmd}{\rmd\alpha}\log\Det \left(1-\hat K\right)
\notag\\ & \qquad
=\frac{\theta_\alpha}{2}
\tr\bigg[S(\alpha)X^{-1}(\alpha)
\bigg(\frac{A_0 X(\alpha)}{\alpha}-\frac{\theta_0 X(\alpha)S(0)}{2\alpha}
+\frac{A_1 X(\alpha)}{\alpha-1}-\frac{\theta_1 X(\alpha)S(1)}{2(\alpha-1)}\bigg)
\bigg]
\notag\\ & \qquad
=\tr\left[A_\alpha\left(\frac{A_0}{\alpha}+\frac{A_1}{\alpha-1}\right)\right]
-\frac{\theta_\alpha}{4}\tr\left[S(\alpha)\left(\frac{\theta_0 S(0)}{\alpha}+
\frac{\theta_1 S(1)}{\alpha-1}\right)\right].
\end{align}
Taking into account that the matrix $S(\lambda)$ is upper triangular, see \eqref{Sla},
we obtain
\begin{equation}\label{dadetK}
\frac{\rmd}{\rmd\alpha}\log\Det \left(1-\hat K\right)
=\tr\left[A_\alpha\left(\frac{A_0}{\alpha}+\frac{A_1}{\alpha-1}\right)\right]
-\frac{\theta_\alpha}{2}\left(\frac{\theta_0}{\alpha}+\frac{\theta_1}{\alpha-1}\right).
\end{equation}
Using \eqref{nuks}, thus we have for the trace
\begin{equation}
\tr\left[A_\alpha\left(\frac{A_0}{\alpha}
+\frac{A_1}{\alpha-1}\right)\right]
=\frac{\rmd}{\rmd\alpha}\log\Det \left(1-\hat K\right)
+\frac{(\nu_1+\nu_2)(-2\nu_3\alpha+\nu_3+\nu_4)}{2\alpha(\alpha-1)},
\end{equation}
where $\nu_k$'s are defined in \eqref{nuks}.

Substituting the expression for the trace above into \eqref{sigmatr}, we arrive at
the following expression for the $\sigma$-function
\begin{multline}
\sigma=\alpha(\alpha-1)\frac{\rmd}{\rmd\alpha}\log\Det\left(1-\hat K\right)
+(\nu_1\nu_2-\nu_1\nu_3-\nu_2\nu_3)\alpha
\\
-\frac{\nu_1\nu_2+\nu_3\nu_4-(\nu_1+\nu_2)(\nu_3+\nu_4)}{2}.
\end{multline}
Taking into account that from our previous discussion it follows that
\begin{equation}
\theta_0=s,\quad
\theta_1=r+q,\quad
\theta_\alpha=-r,\quad
\theta_\infty=-(\theta_0+\theta_1+\theta_\alpha)=-(s+q),
\end{equation}
the $\nu_k$'s are therefore identified to be
\begin{equation}\label{ournuks}
\nu_1=\nu_3=-\frac{r+q+s}{2},\quad
\nu_2=-\frac{r-q-s}{2},\quad
\nu_4=\frac{r+q-s}{2}.
\end{equation}
Using these values, for the $\sigma$-function we obtain the following
expression
\begin{equation}
\sigma=\alpha(\alpha-1)
\frac{\rmd}{\rmd\alpha}\log\Det \left(1-\hat K\right)
-\frac{(r+q+s)^2}{4}\alpha+\frac{(r+q+s)q+2rs}{4}.
\end{equation}
It satisfies \eqref{eqsigma} with the $\nu_k$'s
as specified in \eqref{ournuks}. Since the EFP is represented in
as the Fredholm  determinant, $F_{r,s,q}=\Det(1-\hat K)$, we conclude
that we arrive at the statement of Theorem~\ref{EFPasP6},
which is thus proved.

\section{Asymptotic expansions of the EFP in the thermodynamic limit}

Under the thermodynamic limit of a physical quantity given on a lattice
one usually understands behavior of this quantity as the size of the lattice becomes large.
Specifically, in the case of the EFP besides the size of the lattice, $N=r+s+q$,
there are geometric parameters,  $s$ and $q$, which describe the frozen
rectangle at the top left corner of the lattice.
Therefore it is interesting to consider the thermodynamic limit $N\to\infty$
preserving the geometry of the problem, i.e., keeping the ratios
$s/N$ and $q/N$ fixed.  Here we consider solution
of this problem in the important particular case where $q=0$;
it will be convenient for us to fix the ratio $s/r$ and perform the limit
$s\to\infty$.
We show that this problem can be solved in a systematic way
using the $\sigma$-form of P6 and the asymptotic behavior of $F_{r,s,q}$
at the singular points of this equation.

We begin with discussion of physical interpretation
of leading terms of asymptotic expansions in two different regimes.
The details of calculations of the asymptotic behavior of $F_{r,s,q}$ in each regime follow next.
In Appendices we give an alternative derivation of the asymptotics in one of the regimes.

\subsection{Two asymptotic regimes}\label{Sec:tworegimes}

Recall that the EFP is defined as a probability
of observing the certain arrow configuration in the six-vertex model with DWBC.
It can be seen as a ratio of two partition functions:
$F_{r,s,q}=Z_{r,s,q}/Z_N$, where  $Z_{r,s,q}$ and $Z_N$ are
the partition functions of the model
on the lattice with the frozen corner, and on the original, unmodified lattice,
respectively. Since both partition functions in the thermodynamic
limit behave as $\exp(-V f)$ where $V$ is the volume (the number of lattice sites)
and $f$ is the free-energy per site, we conclude that in the limit, where
$r,s,q\to \infty$, we should have
\begin{equation}\label{s2decay}
F_{r,s,q}= \exp\left\{-\varphi s^2 +o(s^2)\right\}.
\end{equation}
The quantity $\varphi$ is the function of the ratios $r/s$ and
$q/s$, and the parameter $\alpha$; moreover, $\varphi\geq0$
since $Z_{r,s,q}\leq Z_N$.
This function has the meaning of
the change of the free-energy per site
due to the freezing of a macroscopically large rectangle at the corner of the lattice.
It is clear that if $s>r$, i.e., the bottom-right corner of the frozen rectangle
lies below the counter-diagonal of the square, then, as follows from the ice-rule,
$F_{r,s,q}=0$ and hence $\varphi=\infty$.
At $s=r$, it can be shown that for generic
weights $F_{r,s,q}=(w_1w_2)^{s(s+q)}Z_rZ_{r+q}/Z_N$; in our case of the free-fermion
weights \eqref{FF} $Z_n=1$ for all $n\in\mathbb{N}$, that follows from the results
of papers \cite{I-87,EKLP-92}, hence $F_{r,s,q}=(1-\alpha)^{s(s-1)}$.
At $s=0$, obviously, $F_{r,s,q}=1$, hence $\varphi=0$. Thus the problem
consists in studying the EFP in the thermodynamic limit for $0<s<r$.

It is clear that $\varphi$ is monotonically non-decreasing function for $s\in(0,r)$.
Thus, there may exist a region of values of
$s$, adjacent to zero, where $\varphi\equiv 0$. In this case the quantity
$o(s^2)$ in \eqref{s2decay} decays much faster than $s^2$, namely,
it is exponentially small as $s\to\infty$. In this region
the EFP is close to $1$, and we have the following behavior
\begin{equation}\label{s1decay}
F_{r,s,q}=1-\exp\left\{-\chi s+o(s)\right\},\qquad
\chi\geq0.
\end{equation}
Note that this kind of behavior can be expected if we recall that the EFP can be treated
as an $s$-point quantum correlation function \cite{CP-07a} and that
ferroelectric correlations in the six-vertex model are exponentially decaying.

It is to be also mentioned that our EFP has the meaning of a
cumulative distribution function in the context of the random growth model,
considered by Johansson in \cite{J-00}, who proved the existence of the
two asymptotic regimes \eqref{s2decay} and \eqref{s1decay}.
The functions $\varphi$ and $\chi$ are
called there as the lower and upper tail deviation functions, respectively.

The existence of the two asymptotic regimes \eqref{s2decay} and \eqref{s1decay}
is a manifestation of the spacial separation of phases of the six-vertex model
with DWBC in the thermodynamic limit, the so-called arctic ellipse phenomenon.
Exactly the same phenomenon is often mentioned in the literature in relation to the
dimer problem on an Aztec diamond graph (or domino tilings of an Aztec diamond)
\cite{EKLP-92,JPS-98}.

The phenomenon consists in a peculiar form of statistically dominating configurations
of the six-vertex model with DWBC on the $N\times N$ lattice as $N\to\infty$.
The arctic ellipse arises when $N\times N$ lattice is scaled
to a square of unit side length, and it divides this square on two types of domains:
the interior of the ellipse and the four corner domains, see Fig.~\ref{fig-aell}.
The local states (arrows) are disordered in the
interior of the ellipse, while they are ordered in the each of the four
corner domains due to strong influence of the boundary conditions.
Recall that the EFP describes the probability that all arrows are
ordered (frozen) in the rectangle at the top left corner of the lattice.
There are two principal cases, whether the rectangle intersects the arctic
ellipse or not. When the rectangle intersects the arctic ellipse one
may expect that the EFP behaves as given by \eqref{s2decay}.
Since in this case the rectangle contains a portion of the domain of
the disorder, we will say that then the EFP is in the \emph{disordered} regime.
If the rectangle does not intersects the arctic ellipse
then one may expect that the EFP behaves as given by \eqref{s1decay}.
Since in this case the rectangle encloses only a region of order
we will say that the EFP is in the \emph{ordered} regime.
Indeed, when the rectangle does not intersects the arctic ellipse
it does not essentially influence on the number
of the states of the system; thus the EFP is close to $1$ for large $N$, and, in fact,
equals to $1$ in the thermodynamic limit.
Contrary, when it intersects the arctic ellipse it starts to order the
disordered domain and reduces significantly the number of states, so that the EFP
should be small for large $N$, and, in fact, vanishes in the limit.
In Fig.~\ref{fig-aell} these two situations are shown in the case $q=0$, i.e., when
the rectangle is just square.

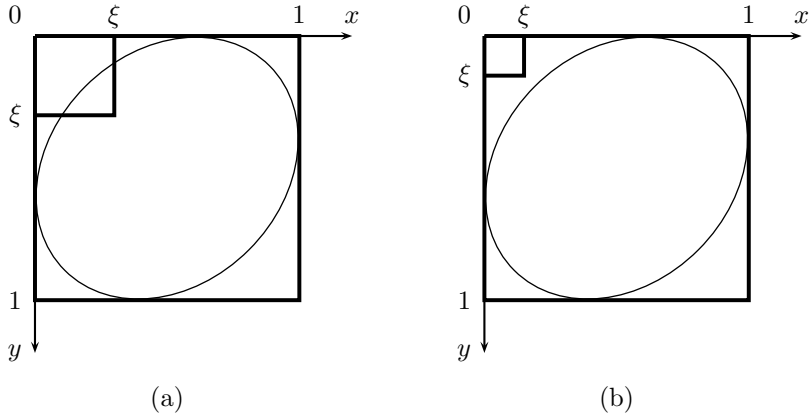
\begin{figure}
\centering

\psset{unit=10pt}
\begin{pspicture}(-1,-4)(29,11)
\psline(10,10)(10,0)(0,0)
\psline{->}(0,10)(12,10)
\psline{->}(0,10)(0,-2)
\rput[r](-.5,-2){$y$}
\rput[B](12,10.5){$x$}
\rput{45}(5,5){\psellipse[linewidth=.05](0,0)(5.5,4.4)}
\rput[rB](-.5,10.5){$0$}
\rput[B](10,10.5){$1$}
\rput[r](-.5,0){$1$}
\rput[B](3,10.5){$\xi$}
\rput[r](-.5,7){$\xi$}
\psline[linewidth=.15](0,7)(3,7)(3,10)
\psline[linewidth=.15](0,10)(0,0)(10,0)(10,10)(0,10)
\rput[B](5,-4){(a)}

\rput(17,0){
\psline(10,10)(10,0)(0,0)
\psline{->}(0,10)(12,10)
\psline{->}(0,10)(0,-2)
\rput[r](-.5,-2){$y$}
\rput[B](12,10.5){$x$}
\rput{45}(5,5){\psellipse[linewidth=.05](0,0)(5.5,4.4)}
\rput[rB](-.5,10.5){$0$}
\rput[B](10,10.5){$1$}
\rput[r](-.5,0){$1$}
\rput[B](1.5,10.5){$\xi$}
\rput[r](-.5,8.5){$\xi$}
\psline[linewidth=.15](0,8.5)(1.5,8.5)(1.5,10)
\psline[linewidth=.15](0,10)(0,0)(10,0)(10,10)(0,10)
\rput[B](5,-4){(b)}
}
\end{pspicture}

\caption{The $N\times N$ lattice scaled to the unit square in the large $N$ limit,
the frozen square of side length $\xi=\frac{s}{N}=\frac{v}{1+v}$,
and the arctic ellipse $\frac{(1-x-y)^2}{1-\alpha}+\frac{(x-y)^2}{\alpha}=1$
separating the phases: (a) For $v\in(\vc,1)$ the frozen square overlaps
the interior of the ellipse, in this case the EFP is in the disordered regime;
(b) For $v\in(0,\vc)$ the frozen square lies completely outside of the ellipse,
in this case the EFP is in the ordered regime.}
\label{fig-aell}
\end{figure}

In calculations we limit ourselves here to the case $q=0$.
In this case the analytic results appear considerably simpler
without altering the main ingredients of the method.
We take the two remaining integer parameters, $s$ and $r$, $0< s< r$
to be large in the limit, with their ratio being a non-vanishing parameter
\begin{equation}
v\equiv\frac{s}{r},\qquad v\in (0,1).
\end{equation}
The condition of intersection of the square with the arctic ellipse
reads
\begin{equation}\label{vc}
v=\vc,\qquad \vc\equiv\frac{1-\sqrt\alpha}{1+\sqrt\alpha}.
\end{equation}
For $v \in (\vc,1)$ the EFP is in the disordered regime, and for
$v(0,\vc)$ it is in the ordered one. Sometimes instead of natural
variables $v$ and $\vc$, it is useful to use the variables $\alpha$
and $\beta$, in which the condition of intersection of the ellipse
reads
\begin{equation}\label{beta}
\alpha=\beta,\qquad \beta\equiv\left(\frac{1-v}{1+v}\right)^2.
\end{equation}
For $\alpha\in (\beta,1)$ the EFP is in the disordered regime, and for
$\alpha\in(0,\beta)$ it is in the ordered one.

\subsection{Asymptotic expansion in the disordered regime}
\label{Sec:Disorder}

We begin with the study of the EFP in the disordered regime,
which corresponds to $v\in(\vc,1)$ or $\alpha\in(\beta,1)$,
by considering behavior of $F_{r,s,q}$ at the critical point
$\alpha=1$ of P6.

It was noticed in \cite{CP-13} that
$F_{r,s,0}\sim C_{r,s} (1-\alpha)^{s^2}$ where
the constant $C_{r,s}$ is essentially given by the value
of the Hankel determinant in \eqref{newEFP} at $\alpha=1$.
However, this asymptotics does not uniquely
characterize the corresponding solution of the $\sigma$-form of P6,
and therefore we need a more
elaborated result.

\begin{proposition}\label{prop:al-to-1}
As $\alpha\to 1$, the EFP behaves as follows
\begin{multline}\label{Frs}
F_{r,s,0}=C_{r,s} (1-\alpha)^{s^2}
\bigg(1-\frac{s(r-s)}{2}(1-\alpha)
\\
+
\frac{s(r-s)\left(2s^3r-2s^4-3s^2+1\right)}{4(4s^2-1)}(1-\alpha)^2+O
\left((1-\alpha)^3\right)\bigg),
\end{multline}
where the constant $C_{r,s}$ has the explicit form
\begin{equation}\label{Crs}
C_{r,s}=
\prod_{j=0}^{s-1}
\frac{(j+r)!(j!)^2}{(r-j-1)! (2j)! (2j+1)!}.
\end{equation}
\end{proposition}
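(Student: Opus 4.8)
The plan is to read the whole statement directly off the Hankel determinant representation \eqref{newEFP}, which at $q=0$ becomes
\[
F_{r,s,0}=\frac{(1-\alpha)^{s^2}}{\alpha^{s(s-1)/2}\prod_{k=0}^{s-1}(k!)^2}\,D(\alpha),\qquad D(\alpha)=\det_{1\le j,k\le s}\big[c_{j+k-2}(\alpha)\big],\quad c_n(\alpha)=\sum_{m=0}^{r-1}m^n\alpha^m.
\]
Thus $D(\alpha)$ is the Hankel determinant of the moments of the discrete measure $\mu_\alpha=\sum_{m=0}^{r-1}\alpha^m\delta_m$, and the entire proposition reduces to expanding $D(\alpha)$ to second order about the critical point $\alpha=1$ (the prefactor $\alpha^{-s(s-1)/2}$ expands trivially). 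Since $\alpha^{-s(s-1)/2}\to1$, the leading constant is $C_{r,s}=D(1)\big/\prod_{k=0}^{s-1}(k!)^2$.

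To pin down $C_{r,s}$ I would invoke Heine's formula $D(1)=\prod_{j=0}^{s-1}h_j$, where $h_j$ is the squared norm of the $j$-th \emph{monic} polynomial orthogonal with respect to the uniform weight on $\{0,\dots,r-1\}$, i.e.\ the discrete Chebyshev (Gram) polynomial. These norms are classical, $h_j=\frac{(r+j)!\,(j!)^4}{(r-j-1)!\,(2j)!\,(2j+1)!}$, and dividing by $\prod_{j}(j!)^2$ reproduces \eqref{Crs} termwise. Equivalently one may evaluate the symmetric sum $D(1)=\sum_{0\le m_1<\dots<m_s\le r-1}\prod_{i<j}(m_i-m_j)^2$ by the standard orthogonal-polynomial argument.

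For the two correction terms, Heine's identity also gives $D(\alpha)=\sum_{S}\alpha^{|S|}V_S^2$, the sum running over $s$-element subsets $S\subset\{0,\dots,r-1\}$ with $|S|=\sum_{m\in S}m$ and $V_S=\prod_{i<j}(m_i-m_j)$; hence $D(\alpha)/D(1)=\mathbb{E}\big[\alpha^{|S|}\big]$, the expectation taken in the determinantal (discrete Chebyshev) ensemble at $\alpha=1$. Expanding $\log\big(D(\alpha)/D(1)\big)$ then needs only the first two cumulants of the linear statistic $|S|$, since $\tfrac{d}{d\alpha}\log D\big|_{1}=\mathbb{E}[|S|]$ and $\tfrac{d^2}{d\alpha^2}\log D\big|_{1}=-\mathbb{E}[|S|]+\operatorname{Var}[|S|]$. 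The mean follows immediately from the reflection symmetry $m\mapsto r-1-m$, which preserves $V_S^2$, giving $\mathbb{E}[|S|]=\tfrac12 s(r-1)$. For the variance I would use that in a determinantal ensemble with Christoffel--Darboux kernel $K$ one has $\operatorname{Var}[|S|]=\tfrac12\sum_{m,m'}(m-m')^2K(m,m')^2$; the factor $(m-m')K(m,m')$ telescopes by Christoffel--Darboux and orthonormality collapses the double sum to the single recurrence coefficient $b_s^2=(k_{s-1}/k_s)^2=h_s/h_{s-1}=\frac{s^2(r^2-s^2)}{4(4s^2-1)}$.

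Finally I would assemble the bracket from $\log B=-\tfrac{s(s-1)}{2}\log\alpha+\log\big(D(\alpha)/D(1)\big)$: the $(1-\alpha)$ coefficient comes out as $-\tfrac{s(r-s)}{2}$, and collecting the $(1-\alpha)^2$ coefficient, including the $\tfrac12\beta_1^2$ generated by exponentiating, yields precisely $\frac{s(r-s)(2s^3r-2s^4-3s^2+1)}{4(4s^2-1)}$. The one genuinely non-routine step is the variance: recognizing that $\sum_{m\in S}m$ has variance equal to a single Jacobi coefficient $b_s^2=h_s/h_{s-1}$ is what produces the clean closed form in \eqref{Frs}, and everything else is bookkeeping. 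A check at $s=1$, where $F_{r,1,0}=1-\alpha^r$ and the three coefficients reduce to $r$, $\binom r2$, $\binom r3$, confirms the computation.
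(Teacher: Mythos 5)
Your proposal is correct, and it verifies against the paper: your cumulant formulas give $\mathbb{E}[|S|]=\tfrac{s(r-1)}{2}$ and $\operatorname{Var}[|S|]=h_s/h_{s-1}=\tfrac{s^2(r^2-s^2)}{4(4s^2-1)}$, which reproduce exactly the paper's $c_1=-\tr(H'H^{-1})=-\tfrac{s(r-1)}{2}$ and $c_2=\operatorname{Var}[|S|]+\mathbb{E}[|S|]^2-\mathbb{E}[|S|]$, and assembling the bracket (including the $\tfrac12\beta_1^2$ from exponentiating and the $\alpha^{-s(s-1)/2}$ expansion) lands on the stated coefficient $\tfrac{s(r-s)(2s^3r-2s^4-3s^2+1)}{4(4s^2-1)}$. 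In substance you and the paper use the same machinery: both identify the $\alpha=1$ Hankel determinant with the product of norms of the discrete Chebyshev (Hahn $Q_n(x;0,0,r-1)$) polynomials to get $C_{r,s}$, and both reduce the second-order coefficient to the single ratio $h_s/h_{s-1}$. Where you differ is in the bookkeeping of the derivatives. The paper works directly with $\tr(H'H^{-1})$, $\tr((H'H^{-1})^2)$, $\tr(H''H^{-1})$, using the explicit reproducing-kernel form of $H^{-1}$ and the three-term recurrence to evaluate $\sum_m m\,p_n^2(m)/h_n$ and $\sum_m m^2 p_n^2(m)/h_n$; you instead pass through the Cauchy--Binet expansion $D(\alpha)=\sum_S\alpha^{|S|}V_S^2$ and treat the problem as computing the first two cumulants of the linear statistic $|S|$ in the determinantal ensemble. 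Your route buys two small economies: the mean falls out of the reflection symmetry $m\mapsto r-1-m$ with no computation, and the variance collapses to one Jacobi coefficient by a single Christoffel--Darboux telescoping instead of a sum of recurrence-coefficient contributions. These are equivalent calculations (your variance identity requires, and correctly uses, that the Christoffel--Darboux kernel is a projection), so the difference is one of organization rather than of idea; one minor notational slip is writing $b_s^2=(k_{s-1}/k_s)^2$ for monic normalization, where the intended quantity is the orthonormal recurrence coefficient $a_s^2=h_s/h_{s-1}$, which is what you actually use.
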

\begin{proof}
At $\alpha=1$ the Hankel determinant in \eqref{newEFP}
appears to be related in the canonical way with an ensemble of Hahn polynomials
\cite{CP-13,CP-15}. At $q=0$ we deal with the special case of the Hahn polynomials
$Q_n(x;0,0,r-1)$; for the notation and basic properties see, e.g., \cite{KLS-10}.
We will use the normalized version of these polynomials
with the coefficient of the highest-order term equal to $1$,
\begin{equation}
p_n(x)=\frac{(1)_n (-r+1)_n}{(n+1)_n}Q_n(x;0,0,r-1)=
\frac{(1)_n (-r+1)_n}{(n+1)_n}
\Fthreetwo{-n}{n+1}{-x}{1}{-r+1}{1},
\end{equation}
where $(a)_n=a(a+1)\cdots(a+n-1)$ is the Pochhammer symbol.
These polynomials satisfy the orthogonality condition
\begin{equation}\label{orthcond}
\sum_{x=0}^{r-1}p_n(x)p_{n'}(x)=h_n\delta_{nn'},\qquad
h_n=\frac{(n!)^4 (r+n)!}{(2n)!(2n+1)!(r-n-1)!}
\end{equation}
and the recurrence relation
\begin{equation}\label{recrel}
xp_n(x)=p_{n+1}(x)+\frac{r-1}{2}p_n(x)+\frac{h_n}{h_{n-1}}p_{n-1}(x),
\qquad \frac{h_n}{h_{n-1}}=\frac{n^2(r^2-n^2)}{4(4n^2-1)}.
\end{equation}
Let us denote the Hankel matrix in the representation \eqref{newEFP}
by $H(\alpha)$; we also write for brevity $H\equiv H(1)$.
At $q=0$ (for the case of $q\ne0$, see \cite{CP-15})
the entries of the matrix $H$ appear to be
the moments of the orthogonality measure in \eqref{orthcond} and therefore
\begin{equation}
\det H = \det_{1\leq j,k\leq s}\left[\sum_{m=0}^{r-1}m^{j+k-2}\right]=
\prod_{j=0}^{s-1} h_j= \prod_{j=0}^{s-1}\frac{(j!)^4 (r+j)!}{(2j)!(2j+1)!(r-j-1)!}.
\end{equation}
Taking into account the value of the prefactor in \eqref{newEFP}, we obtain
the expression \eqref{Crs} for the constant $C_{r,s}$.

Let us now focus on the terms of the $(1-\alpha)$-expansion in \eqref{Frs}. We denote
$H'\equiv \frac{\rmd}{\rmd \alpha} H(\alpha)|_{\alpha=1}$
and $H''\equiv \frac{\rmd^2}{\rmd \alpha^2} H(\alpha)|_{\alpha=1}$.
Using the fact that $\det H(\alpha)=\exp(\tr\log H(\alpha))$ and differentiating,
one can show that the coefficients of the Taylor expansion
\begin{equation}
\det H(\alpha) = \det H
\cdot\left(1+c_1(1-\alpha)+\frac{c_2}{2}(1-\alpha)^2+O\left((1-\alpha)^3\right)\right),
\end{equation}
can be written as
\begin{equation}
c_1=-\tr(H'H^{-1}),\quad
c_2=(\tr(H'H^{-1}))^2 - \tr(H'H^{-1})^2+\tr(H''H^{-1}).
\end{equation}
Now we use the fact that
\begin{equation}
H^{-1}_{jk}=\frac{1}{(j-1)!}
\frac{\partial^{j-1}}{\partial x^{j-1} }
\frac{1}{(k-1)!}
\frac{\partial^{k-1}}{\partial y^{k-1} }
\sum_{n=0}^{s-1} \frac{p_n(x)p_n(y)}{h_n}\bigg|_{x=y=0}.
\end{equation}
Taking into account
that $H_{jk}'=\sum_{m=0}^{r-1}m^{j+k-1}$ and $H_{jk}''=\sum_{m=0}^{r-1}m^{j+k-1}(m-1)$,
one can reduce evaluation of the traces in the expressions above to
the standard sums involving the related polynomials, namely
\begin{equation}
\sum_{m=0}^{r-1}\frac{m p_n^2(m)}{h_n}=\frac{r-1}{2},\qquad
\sum_{m=0}^{r-1}\frac{m^2 p_n^2(m)}{h_n}=\frac{h_{n+1}}{h_n}
+\left(\frac{r-1}{2}\right)^2+
\frac{h_{n}}{h_{n-1}},
\end{equation}
which can be easily derived from the orthogonality condition
\eqref{orthcond} and the recurrence relation \eqref{recrel}.
In this way, for the coefficient $c_1$ we get
\begin{equation}
c_1=-\sum_{n=0}^{s-1}\sum_{m=0}^{r-1}\frac{m p_n^2(m)}{h_n}=-\frac{s(r-1)}{2}.
\end{equation}
For the coefficient $c_2$, after the straightforward but more involved calculations,
we obtain
\begin{equation}
c_2=-\frac{s(r-1)}{2}+\frac{s^2(1-2r+s^2)}{4}+\frac{s^4(r^2-s^2)}{4s^2-1}.
\end{equation}
Finally, taking into account that
\begin{equation}
\frac{1}{\alpha^{s(s-1)/2}}=1+\frac{s(s-1)}{2}(1-\alpha)
+\frac{s(s-1)(s^2-s+2)}{8}(1-\alpha)^2+O\left((1-\alpha)^3\right),
\end{equation}
we arrive at \eqref{Frs}.
\end{proof}

As a direct consequence of \eqref{Frs}, we have the following.
\begin{corollary}
As $\alpha\to1$,
\begin{multline}\label{logFrs0}
\log F_{r,s,0}=\log C_{r,s}+ s^2\log(1-\alpha)
-\frac{s(r-s)}{2}(1-\alpha)
\\
-\frac{s(r-s)\left(7s^2-s r -2\right)}{8(4s^2-1)}(1-\alpha)^2+O
\left((1-\alpha)^3\right).
\end{multline}
\end{corollary}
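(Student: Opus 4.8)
The plan is to obtain \eqref{logFrs0} by simply taking the logarithm of the factorized expansion \eqref{Frs} established in Proposition~\ref{prop:al-to-1} and re-expanding in powers of $(1-\alpha)$. First I would write the bracketed factor in \eqref{Frs} as $1+A(1-\alpha)+B(1-\alpha)^2+O\left((1-\alpha)^3\right)$ with
\[
A=-\frac{s(r-s)}{2},\qquad
B=\frac{s(r-s)\left(2s^3r-2s^4-3s^2+1\right)}{4(4s^2-1)}.
\]
Then $\log F_{r,s,0}$ splits into the logarithm of the monomial prefactor, which is exactly $\log C_{r,s}+s^2\log(1-\alpha)$, plus the logarithm of the bracket.

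Next I would apply the elementary expansion $\log(1+x)=x-\tfrac{1}{2}x^2+O(x^3)$ with the argument $x=A(1-\alpha)+B(1-\alpha)^2+O\left((1-\alpha)^3\right)$. Collecting by order in $(1-\alpha)$, the coefficient of $(1-\alpha)$ is $A=-\tfrac{1}{2}s(r-s)$, which reproduces the linear term of \eqref{logFrs0}, while the coefficient of $(1-\alpha)^2$ is $B-\tfrac{1}{2}A^2$.

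The only real work is verifying the algebraic identity
\[
B-\frac{A^2}{2}=-\frac{s(r-s)\left(7s^2-sr-2\right)}{8(4s^2-1)}.
\]
I would place both terms over the common denominator $8(4s^2-1)$ and factor $s(r-s)$ out of the numerator; the remaining bracket then reduces by direct expansion to $2\left(2s^3r-2s^4-3s^2+1\right)-s(r-s)(4s^2-1)=sr-7s^2+2$, which yields the stated quadratic coefficient. This is the main (and essentially the only) obstacle, but it is routine polynomial algebra with no conceptual difficulty. The $O\left((1-\alpha)^3\right)$ error term is inherited directly from \eqref{Frs}, so no separate estimate is needed.
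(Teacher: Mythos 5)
Your proposal is correct and is exactly what the paper does: the corollary is stated as a direct consequence of \eqref{Frs}, obtained by taking the logarithm and re-expanding in powers of $(1-\alpha)$. The algebraic identity $B-\tfrac{1}{2}A^2=-\tfrac{s(r-s)(7s^2-sr-2)}{8(4s^2-1)}$ checks out, so nothing is missing.
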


To construct the asymptotics
of $\log F_{r,s,0}$ for large $s$ and $r=s/v$, we need the corresponding
behavior of the coefficients of \eqref{logFrs0}. A nontrivial calculation
concerns only
$\log C_{r,s}$. To do that we rewrite \eqref{Crs} with the help of the
Barnes $G$-function \cite{B-1900} as follows:
\begin{equation}\label{CGG}
C_{r,s}=
\frac{\pi^{s+1/2}\, G^2(1/2)G(r+s+1)G(r-s+1)}{2^{s(2s-1)}\,G^2(r+1)G(s+1/2) G(s+3/2)}.
\end{equation}
The Barnes G-function satisfy the relations
\begin{equation}
G(z+1)=\Gamma(z) G(z), \qquad
G(1)=G(2)=G(3).
\end{equation}
As established by Barnes:
\begin{multline}\label{largeBarnes}
\log G(z+1)=\frac{z^2}{2}\log z -\frac{3}{4}z^2
+\frac{\log 2\pi}{2} z-\frac{1}{12}\log z
+\zeta'(-1)
\\
+\sum_{k=1}^{n}\frac{B_{2k+2}}{4k(k+1)}z^{-2k}+O(z^{-2n-2}),
\end{multline}
where $\zeta'(-1)=-0.165142...$ is the derivative of the Riemann function $\zeta(z)$
at $z=-1$, and $B_{2k}$'s are the Bernoulli numbers:
\begin{equation}\label{Bernoulli}
B_0=1,\quad
B_2=\frac{1}{6},\quad
B_4=-\frac{1}{30},\quad
B_6=\frac{1}{42}, \quad
B_8=-\frac{1}{30}, \quad
B_{10}=\frac{5}{66}, \quad \ldots
\end{equation}

Therefore, taking the logarithm of \eqref{CGG} and using \eqref{largeBarnes},
we arrive at the following result.
\begin{proposition}
As $s,r\to \infty$, with $s/r=v$ fixed, $v\in[0,1]$, we have
\begin{multline}\label{logC}
\log C_{r,s}=-
s^2\left(\log4v-\frac{(1+v)^2}{2v^2}\log(1+v)-\frac{(1-v)^2}{2v^2}\log(1-v)\right)
\\
-\frac{1}{12}\log s -\frac{1}{12}\log\frac{(1-v^2)}{2} +\zeta'(-1)
+\sum_{k=1}^{n} \frac{\mathcal{B}_{2k}}{s^2}+O(s^{-2n-2}),
\end{multline}
where the coefficients $\mathcal{B}_{2k}=\mathcal{B}_{2k}(v)$ are
\begin{multline}
\mathcal{B}_{2k}=
\frac{B_{2k+2}}{4k(k+1)}\left(\frac{v^{2k}}{(1+v)^{2k}}
+\frac{v^{2k}}{(1-v)^{2k}}-2v^{2k}\right)
\\
-\sum_{m=0}^{k-1}\frac{B_{2(k-m)+2}}{2^{2m+1}(k-m)(k-m+1)}\binom{2k-1}{2m}
-\frac{4k^2+6k-1}{2^{2k+3}3k(k+1)(2k+1)},
\end{multline}
where $\binom{2k-1}{2m}$ are Binomial coefficients.
For example,
\begin{equation}
\mathcal{B}_2=-\frac{1}{8}\left(\frac{1}{8}-\frac{1+v^2}{15}+
\frac{v^2(1+v^2)}{15(1-v^2)^2}\right),\quad
\mathcal{B}_4=-\frac{v^6(v^6-4v^4+5v^2-10)}{504(1-v^2)^4}+\frac{31}{16128}.
\end{equation}
\end{proposition}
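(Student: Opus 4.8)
The plan is to read off the asymptotics directly from the two displayed inputs: the Barnes-$G$ product \eqref{CGG} for $C_{r,s}$ and the Barnes expansion \eqref{largeBarnes}. (Formula \eqref{CGG} itself follows from \eqref{Crs} by telescoping the products of factorials through $\prod_{j=0}^{s-1}\Gamma(j+a)=G(s+a)/G(a)$ together with the Legendre duplication formula, which accounts for the $2^{s(2s-1)}$ and $\pi^{s+1/2}$ prefactors.) Concretely, I would take the logarithm of \eqref{CGG},
\begin{multline}
\log C_{r,s}=\left(s+\tfrac12\right)\log\pi+2\log G(\tfrac12)-s(2s-1)\log 2 \\
+\log G(r+s+1)+\log G(r-s+1)-2\log G(r+1) \\
-\log G(s+\tfrac12)-\log G(s+\tfrac32),
\end{multline}
substitute $r=s/v$ with $v$ fixed, insert \eqref{largeBarnes} into each $\log G$, and collect the result by decreasing powers of $s$ (allowing factors of $\log s$). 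Here I would use the value $\log G(\tfrac12)=\tfrac1{24}\log2-\tfrac14\log\pi+\tfrac32\zeta'(-1)$, which is the ultimate source of the surviving $\zeta'(-1)$.

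The natural organizing principle is to split the five $\log G$ contributions into two groups. The arguments $r+s=s(1+v)/v$, $r-s=s(1-v)/v$ and $r=s/v$ are exact multiples of $s$; writing $\log z=\log s+\log(\text{const})$, their tails in \eqref{largeBarnes}, whose $s^{-2k}$ coefficient $\tfrac{B_{2k+2}}{4k(k+1)}$ is multiplied respectively by $v^{2k}/(1+v)^{2k}$, $v^{2k}/(1-v)^{2k}$ and $-2v^{2k}$, reproduce the first bracket of $\mathcal B_{2k}$. The two remaining arguments are half-integer shifts, $G(s+\tfrac12)=G((s-\tfrac12)+1)$ and $G(s+\tfrac32)=G((s+\tfrac12)+1)$; for these one must use $\log(s\pm\tfrac12)=\log s+\log(1\pm\tfrac1{2s})$, $(s\pm\tfrac12)^2=s^2\pm s+\tfrac14$, and the binomial expansion of $(s\pm\tfrac12)^{-2j}$, which is the origin of the $\binom{2k-1}{2m}$ coefficients and the $v$-independent constant in $\mathcal B_{2k}$.

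Before reaching the corrections I would verify the leading structure. The $s^2\log s$ coefficient $\tfrac{(1+v)^2}{2v^2}+\tfrac{(1-v)^2}{2v^2}-\tfrac1{v^2}-\tfrac12-\tfrac12$ vanishes, as do the $s\log s$ and the linear-in-$s$ terms; the purely algebraic $s^2$ contributions (those from $-\tfrac34 z^2$ and from $-s(2s-1)\log2$) cancel, leaving the stated $s^2$-coefficient once the $\log v$ pieces collapse. The coefficient of $\log s$ is the subtle point: the $-\tfrac1{12}\log z$ terms give $+\tfrac16$, while the constant part $\tfrac14$ of $(s\pm\tfrac12)^2$ feeds an extra $\tfrac18$ into each half-integer $\tfrac{z^2}2\log z$, contributing $-\tfrac14$, so the total is exactly $-\tfrac1{12}$. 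For the $s^0$ term the finite parts of \eqref{largeBarnes} contribute $-2\zeta'(-1)$ while $2\log G(\tfrac12)$ contributes $+3\zeta'(-1)$ (net $+\zeta'(-1)$), the $\pm\tfrac12\log\pi$ cancel, and the constant parts of $-\tfrac1{12}\log z$ sum to $-\tfrac1{12}\log(1-v^2)$, which together with the $\tfrac1{12}\log2$ from $G(\tfrac12)$ yields $\zeta'(-1)-\tfrac1{12}\log\tfrac{1-v^2}2$.

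The laborious part, and the real obstacle, is assembling $\mathcal B_{2k}$: for each fixed $k$ one must gather every $s^{-2k}$ term from the even tails of the three integer-multiple arguments, from the binomial re-expansion of the half-integer tails $(s\pm\tfrac12)^{-2j}$, and from the $s^{-2k}$ remnants of the $\tfrac{z^2}2\log z$, $\tfrac{\log2\pi}2\,z$ and $-\tfrac1{12}\log z$ pieces evaluated at $z=s\pm\tfrac12$. The reflection $z=s\pm\tfrac12$ cancels all odd powers of $s^{-1}$ between the two half-integer terms, which is what guarantees the stated even-power form $\sum_k\mathcal B_{2k}s^{-2k}$; recognizing the resulting finite sums as $\tfrac{B_{2(k-m)+2}}{2^{2m+1}(k-m)(k-m+1)}\binom{2k-1}{2m}$ and the closed constant $-\tfrac{4k^2+6k-1}{2^{2k+3}3k(k+1)(2k+1)}$ is where the bulk of the work lies, with the explicit $\mathcal B_2$ and $\mathcal B_4$ serving as checks.
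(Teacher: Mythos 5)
Your proposal follows exactly the paper's route: the paper obtains \eqref{logC} precisely by taking the logarithm of the Barnes-$G$ representation \eqref{CGG} and inserting the Barnes expansion \eqref{largeBarnes} into each factor with $r=s/v$, which is what you do (and your bookkeeping of the $\log s$, $s^2$, and constant terms, including $\log G(1/2)=\tfrac{1}{24}\log 2-\tfrac14\log\pi+\tfrac32\zeta'(-1)$, checks out). One slip of phrasing only: the $-2s^2\log 2$ coming from the prefactor $2^{-s(2s-1)}$ does not cancel but is exactly what turns $\log v$ into the stated $\log 4v$; the coefficient you end up writing is nevertheless correct.
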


Note that, besides single logarithmic term, the expansion
for $\log C_{r,s}$ involves only even powers of $1/s$.
The same property appears to be valid for the
explicit terms in \eqref{logFrs0}.

To derive expansions of the EFP in the thermodynamic limit, we start with
analyzing the $\sigma$-form of P6 in the large $s$ limit. We note that
for $q=0$ the $\sigma$-function in \eqref{eq:sigma-frsq} reads
\begin{equation}\label{qzero}
\sigma=\left(-\frac{(1+v)^2}{4v^2}\alpha+\frac{1}{2v}\right)s^2+
\alpha(\alpha-1)\frac{\rmd}{\rmd\alpha} \log F_{r,s,0}.
\end{equation}
Absence of the odd powers of $s$ in the $\sigma$-form of P6
\eqref{sigmaform} together with the expansions \eqref{logFrs0} and \eqref{logC}
suggests that the $\sigma$-function may be searched
in the form of the following asymptotic
ansatz in the decaying powers of $s^2$:
\begin{equation}\label{sigmas}
\sigma= \sigma_2 s^2 +
\sigma_0+\sigma_{-2}s^{-2}+\cdots,\qquad s\to\infty.
\end{equation}
Note that if we succeed in the construction
of the expansion \eqref{sigmas} with the coefficients
$\sigma_{2k}=\sigma_{2k}(\alpha)$, which are analytical functions
of $\alpha$, then the Wazow theorem, see Th.~36.1 in Chap.~IX of  \cite{W-87},
implies that there exists a genuine solution with asymptotics \eqref{sigmas}.
To justify that this solution actually coincides with
the solution given by \eqref{qzero} one can verify
that both solutions have the same
asymptotics as $\alpha\to 1$. Here we present the general scheme of derivation 
of asymptotics; more details are given in Remark~\ref{RemMethod} below. 

The expansion \eqref{sigmas} can be constructed in a standard
way by substituting it in \eqref{sigmaform}. On this way we
first obtain the leading term, by requiring that it reproduces the
asymptotic condition at $\alpha\to1$, and next prove
that the further terms can be obtained recursively.

It happens that the leading term $\sigma_2$ can be found as
a solution of the first-order ODE which is obtained by
noticing that the first term in
the left-hand side of \eqref{sigmaform}, which contains
second-order derivatives, is sub-leading as $s$ large.
Besides the overall factor $\sigma_2'$,
the resulting first-order ODE splits on two
equations
\begin{equation}\label{reducedODE}
\frac{1+v^2}{4v^2}+\sigma_2+(1-\alpha)\sigma_2'=0,\qquad
\frac{(1-v^2)^2}{16v^4} + \sigma_2'
\left(\frac{1+v^2}{4v^2}-\sigma_2+\alpha \sigma_2'\right)=0.
\end{equation}
These equations have the following general solutions
\begin{equation}\label{gensol}
(\sigma_2)_\mathrm{I}=C_\mathrm{I}\,(\alpha-1) -\frac{1+v^2}{4v^2},
\qquad
(\sigma_2)_\mathrm{II}
=C_\mathrm{II}\,\alpha +\frac{1+v^2}{4v^2}+\frac{(1-v^2)^2}{16
C_\mathrm{II}\, v^4},
\end{equation}
respectively.
Here $C_\mathrm{I}$ and $C_\mathrm{II}$ are integration constants,
which may depend on $v$.
Moreover, the second equation has two partial solutions
\begin{equation}\label{partsol}
(\sigma_{2})_\pm=\frac{1+v^2}{4v^2}\pm\frac{1-v^2}{2v^2}\sqrt\alpha.
\end{equation}

Consider now the function $\varphi$, defined by \eqref{s2decay}.
Using \eqref{qzero} and \eqref{sigmas} we find that
\begin{equation}\label{varphiint}
\varphi=\int \left(\sigma_2+\frac{(1+v)^2}{4v^2}\alpha
-\frac{1}{2v}\right)\frac{\rmd \alpha}{\alpha(1-\alpha)}+C,
\end{equation}
where $C=C(v)$ is an integration constant. The function $\varphi$
should obey the following asymptotic condition as $\alpha\to 1$:
\begin{multline}\label{phitoone}
\varphi=-\log(1-\alpha)
+\log4v-\frac{(1+v)^2}{2v^2}\log(1+v)-\frac{(1-v)^2}{2v^2}\log(1-v)
\\
+\frac{1-v}{2v}(1-\alpha)
-\frac{(1-v)(1-7v)}{32v^2}(1-\alpha)^2+ O\left((1-\alpha)^3\right).
\end{multline}

The function $(\varphi)_\mathrm{I}$ which corresponds to solution
$(\sigma_2)_\mathrm{I}$ reads
\begin{equation}
(\varphi)_\mathrm{I}
=-\left(C_\mathrm{I}+\frac{(1+v)^2}{4v^2}\right)\log\alpha +C.
\end{equation}
This function does not have asymptotics \eqref{phitoone},
because of the absence the leading term
$-\log(1-\alpha)$.

In the case of the function
$(\varphi)_\mathrm{II}$, which corresponds to
$(\sigma_2)_\mathrm{II}$, the leading term
$-\log(1-\alpha)$ of the expansion \eqref{phitoone}
dictates the choice
$C_\mathrm{II}=-\frac{1-v^2}{4v^2}$, then we get
\begin{equation}
(\varphi)_\mathrm{II}
=-\log(1-\alpha)-\frac{1-v}{2v}\log\alpha+C.
\end{equation}
As $\alpha\to1$,
\begin{equation}
(\varphi)_\mathrm{II}
=
-\log(1-\alpha)
+C+\frac{1-v}{2v}(1-\alpha)
+\frac{1-v}{4v}(1-\alpha)^2+ O\left((1-\alpha)^3\right).
\end{equation}
Therefore, because of the coefficient of the term $(1-\alpha)^2$,
the asymptotics \eqref{phitoone} cannot be satisfied
for any choice of $C$.

Now consider the functions $(\varphi)_{\pm}$ corresponding
to partial solutions \eqref{partsol},
\begin{equation}
(\varphi)_{\pm}
=-\log\left(1\pm\sqrt\alpha\right)
-\frac{1}{v^2}\log\left(1\mp\sqrt\alpha\right)
+\frac{(1-v)^2}{4v^2}\log \alpha+C.
\end{equation}
Asymptotics $\alpha\to1$ of the function $(\varphi)_{+}$
has the wrong coefficient of the leading term, namely,
$-1/v^2$, compare with \eqref{phitoone}.
For the function $(\varphi)_{-}$, as $\alpha\to1$,
we get
\begin{multline}
(\varphi)_{-}=
-\log(1-\alpha)+ C-\frac{1-v^2}{v^2}\log 2
\\
+\frac{1-v}{2v}(1-\alpha)
-\frac{(1-v)(1-7v)}{32v^2}(1-\alpha)^2+ O\left((1-\alpha)^3\right).
\end{multline}
Therefore, comparing the last asymptotics with \eqref{phitoone},
we observe
that $\varphi$ is given by the function
$(\varphi)_{-}$ with the proper choice of the integration constant $C$,
\begin{multline}\label{phires}
\varphi=-\log\left(\frac{1-\sqrt\alpha}{2}\right)
-\frac{1}{v^2}\log\left(\frac{1+\sqrt\alpha}{2}\right)
+\frac{(1-v)^2}{4v^2}\log \alpha
\\
+\log v-\frac{(1+v)^2}{2v^2}\log(1+v)-\frac{(1-v)^2}{2v^2}\log(1-v).
\end{multline}
Note that $\varphi$ is positive for $\alpha\in (\beta,1)$,
because it can be presented in the form
\begin{equation}\label{phi-int}
\varphi=\int_\beta^\alpha
\left(\frac{1-v}{2v}-\frac{1+v}{2v}\sqrt{\tilde\alpha}\right)^2
\frac{\rmd\tilde\alpha}{\tilde\alpha(1-\tilde\alpha)},
\end{equation}
where $\beta$ is given by \eqref{beta}.

Formula \eqref{phires}, when written in terms of the parameter $u$,
is exactly the expression \eqref{varphi}
and coincides with the leading term of the asymptotic expansion of the
EFP in the disordered regime obtained in \cite{CP-13}.

Thus we have just constructed the term $\sigma_2$ of the expansion
\eqref{sigmas}; the further terms $\sigma_{-2k}$,
$k=0,1,\ldots$, satisfy the recurrence
relation of the following form
\begin{equation}\label{recurr}
\left(v^2 (1+\sqrt{\alpha})^2-(1-\sqrt{\alpha})^2\right)^{3k+1}
\sigma_{-2k}=P(v,\alpha;\sigma_{2},\dots,\sigma_{-2k+2}),
\end{equation}
where $P$ is a polynomial in all its variables. Thus all coefficients
$\sigma_{-2k}$ can be recursively constructed as analytical functions
of $\alpha\in\mathbb{C}$ defined in the disc centered
at $\alpha=1$ and of radius $1-\beta$.

\begin{remark}
Similarly, one can construct expansions of the form \eqref{sigmas}
which correspond to remaining functions in \eqref{gensol} and \eqref{partsol},
namely, to $(\varphi_2)_\mathrm{I}$,
$(\varphi_2)_\mathrm{II}$, and $(\varphi_2)_{+}$.
According to the Wazow theorem they correspond to
different (more general) solutions of the $\sigma$-form of P6.
\end{remark}

\begin{remark}\label{expsmall}
The general solution
of the $\sigma$-form of P6 \eqref{sigmaform} with $q=0$
is determined by two parameters
in the expansion of the form \eqref{logFrs0}, namely,
by the coefficients of the
terms $\log(1-\alpha)$ and $(1-\alpha)^2$.
Our construction shows that if the coefficient of $\log(1-\alpha)$
is fixed as $s^2$, then all solutions whose
coefficients of $(1-\alpha)^2$, as $s\to \infty$ and
$s/r=\text{const.}$, are of the form
$(r-s)(7s-r)/32+o(s^2)$ have the same asymptotic expansion
\eqref{sigmas} with the leading term given by $(\sigma_2)_{-}$.
Therefore their asymptotics as $s\to\infty$ differ by exponentially
small corrections which are defined by the function of $o(s^2)$.
To find these corrections seems
an interesting problem.
\end{remark}

From \eqref{qzero} and \eqref{sigmas} it follows that
\begin{equation}
\log F_{r,s,0}=-s^2 \varphi -\frac{1}{12}\log s
+\sum_{k=0}^{n} \frac{a_{2k}}{s^{2k}}+O(s^{-2n-2}),
\end{equation}
where the coefficients $a_{2k}=a_{2k}(\alpha,v)$
can be obtained by integration of the corresponding
coefficients $\sigma_{-2k}$ and the integration
constants are fixed by expansion \eqref{logC}.
For $a_0$ we find
\begin{equation}
a_0=\frac{1}{8}\log\left(\frac{4v^2\sqrt\alpha}{2(1+v^2)\sqrt\alpha
-(1-v^2)(1+\alpha)}\right)-\frac{1}{12}\log\frac{1-v^2}{2}+\zeta'(-1).
\end{equation}
Similarly, one can obtain further terms; for example, we
find
\begin{multline}
a_2=\frac{(1-v^2)(1-\sqrt\alpha)^2
\left[2(1+\sqrt\alpha)^4v^4+5(1-\alpha)^2v^2-(1-\sqrt\alpha)^4\right]}
{64\left[2(1+v^2)\sqrt\alpha-(1-v^2)(1+\alpha)\right]^3}
\\
-\frac{1}{8}\left(\frac{1}{8}-\frac{1+v^2}{15}+
\frac{v^2(1+v^2)}{15(1-v^2)^2}\right)
\end{multline}
and
\begin{multline}
a_4=-\frac{(1-v^2)(1-\sqrt\alpha)^2}{256\left[2(1+v^2)\sqrt\alpha
-(1-v^2)(1+\alpha)\right]^6}
\Big[8(1+\sqrt{\alpha})^8(1-3\sqrt{\alpha}+\alpha)v^{10}
\\
+10\left(1-\sqrt{\alpha}\right)^2 \left(1+\sqrt{\alpha}\right)^6
\left(5-46\sqrt{\alpha}+5\alpha\right)v^8
\\
-5(1-\alpha)^4\left(11+106\sqrt{\alpha}+11\alpha\right)v^6
\\
+\left(1-\sqrt{\alpha}\right)^6\left(1+\sqrt{\alpha}\right)^2
\left(1+18\sqrt{\alpha}+\alpha\right)v^4
\\
-\left(1-\sqrt{\alpha}\right)^8
\left(5+14\sqrt{\alpha}+5\alpha\right)v^2
+(1-\sqrt{\alpha})^{10}
\Big]
\\
-\frac{v^6(v^6-4v^4+5v^2-10)}{504(1-v^2)^4}+\frac{31}{16128}.
\end{multline}
Calculations show that the functions $a_{2k}$ become
more and more cumbersome as $k$ increases.
Finally, rewriting the above formulas in terms of
the parameter $\vc$ instead of $\alpha$, see \eqref{vc},
we arrive at Theorem~\ref{TDL}.

\begin{remark}\label{RemMethod}
Our justification of ansatz \eqref{sigmas} in fact includes the following steps:
\begin{enumerate}
\item
Appearance of the formal expansion \eqref{sigmas}
as the simplest possible analytic (in $\alpha$) generalization of
the behavior of $F_{r,s,0}$ for large $r,s$ ($s/r=\text{const.}\equiv v$) at  $\alpha=1$.
\item
Substitution of the ansatz into the $\sigma$-form of P6 equation to prove 
that this ansatz solves the equation.
\item
Application of the Wasow theorem (\cite{W-87}, Th.~36.1) to prove that 
there exists at least one genuine solution of P6
that has asymptotics \eqref{sigmas}.
\item
Proof that the solution obtained via the Wasow theorem 
coincides with the function $\sigma$ related to $F_{r,s,0}$, see \eqref{qzero}, 
based on matching of asymptotics expansions as $\alpha\to1$ 
using the double character of asymptotics \eqref{sigmas} 
with respect to $s\to\infty$ and $\alpha\to1$.
\end{enumerate}
\end{remark}

Some of the items of Remark~\ref{RemMethod} require further comments.

Concerning item (2), using mathematical induction and formulas \eqref{partsol} and 
\eqref{recurr} one can prove that the coefficients $\sigma_{-2k}$, $k=0,1,\dots$, 
are rational functions
of $\sqrt\alpha$ with the poles only at the point $\alpha=\beta$.   
Obviously, these functions are holomorphic with respect to $\alpha$ in any 
simply-connected domain in 
$\mathbb{C}\setminus \{0,\beta\}$.

Concerning item (3), the Wasow theorem does not apply to the 
$\sigma$-form of P6 directly, because it deals with the first-order vector 
ODEs resolved with respect to the derivatives, with holomorphic right-hand sides, and 
a small parameter $\epsilon$. 
This theorem applies instead to the corresponding representation 
of P6 as a Hamiltonian system. The $\sigma$-function 
is intimately related to the Hamiltonian, and is given as a
quintic polynomial in terms of the canonical variables
\cite{O-87}. Conversely, the canonical variables can be written as rational functions 
of $\sigma$, $\partial_\alpha\sigma$, $\partial_\alpha^2\sigma$ and $\alpha$
(see, e.g., Sect.~2.1 in \cite{O-87}). Therefore,  
for the canonical variables of the Hamiltonian system there exist
formal power-like expansions in $\epsilon$, where $\epsilon=1/s$. 
The Wasow theorem also deals with two assumptions, (A) and (B). The assumption 
(A) concerns basically the definition of the domain 
(in the variable $\alpha$) of validity of
the asymptotics; in our case it is evident that it is nonempty, and contains
any segment in the interval $(\beta,1)$. The assumption 
(B) concerns nongeneracy of the matrix of the vector ODE, after, possibly, a proper 
transformation of the canonical variables. In our case it follows from the fact 
that our $\epsilon$-expansions of the canonical variables are defined
uniquely as long as the leading term is given. An example of such transformation for 
the first Painlev\'e equation is given by Wasow 
(see\footnote{There is a misprint after Eq.~(36.40) in \cite{W-87}: 
the assumption (B) is meant rather than (A).}
\cite{W-87}, p. 225). Our case is consired in Appendix~\ref{Appendix-A}.

Concerning item (4), the 
matching of asymptotics is made by using the known asymptotics as
$\alpha\to1$ up to the terms of $O\left((1-\alpha)^3\right)$. 
It is known \cite{J-82} that this asymptotics 
defines uniquely the solution of the $\sigma$-form of P6. Therefore, 
by comparing the expansions as $\alpha\to 1$, 
one can verify that the solutions, a priori defined in a different way, 
actually coincide. 
Strictly speaking, the Wazow theorem guarantees existence of at least one solution with 
the asymptotics \eqref{sigmas}. The other possible solutions may differ by terms decaying 
faster that any integer power of $\epsilon$ (e.g., exponentially small terms).
In our case, because of exact coincidence of the coefficients (no free parameters) 
of the terms $(1-\alpha)^k$, $k=0,1,2$, the exponential small corrections (if any) 
does not influence on the matching procedure. In other words, the corresponding monodromy 
data of the solutions coincide.

\subsection{Asymptotic expansion in the ordered regime}
\label{Sec:Order}

To fix integration constants for the terms of the asymptotic
expansion of the EFP in the ordered region, $v\in(0,u)$ (or,
equivalently, $\alpha\in(0,\beta)$), we use the following
result about the EFP in the limit $\alpha\to0$.

\begin{proposition}\label{Propato0}
As $\alpha\to 0$, the EFP behaves as
\begin{equation}\label{frsq-alpha-0}
F_{r,s,q}=1-\binom{r}{s-1}\binom{r+q}{s+q-1}\alpha^{r-s+1}
+O\left(\alpha^{r-s+2}\right),
\end{equation}
where the standard notation for the binomial coefficients have been used,
$\binom{n}{k}\equiv\frac{n!}{k!(n-k)!}$.
\end{proposition}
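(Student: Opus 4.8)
The plan is to read off the $\alpha\to0$ behaviour directly from the multiple-integral representation \eqref{oldEFP}, since Proposition~\ref{Propato0} is meant to feed the Painlev\'e analysis and so cannot itself rely on P6. Because the integrand has the form $\prod_{j<k}(z_j-z_k)^2\prod_j h(z_j)$, Andr\'eief's identity collapses the $s$-fold integral to an $s\times s$ Hankel determinant of residue moments,
\[
F_{r,s,q}=(-1)^{s(s+1)/2}\det_{1\le j,k\le s}\bigl[\mu_{j+k-2}\bigr],\qquad
\mu_m=\res_{z=0}\frac{z^{m}\bigl(1-\alpha(1-z)\bigr)^{r+q}}{z^{r}(z-1)^{s}}.
\]
Writing $1-\alpha(1-z)=(1-\alpha)(1+tz)$ with $t=\alpha/(1-\alpha)$ pulls out a scalar, $F_{r,s,q}=(1-\alpha)^{s(r+q)}G(t)$, where $G(t)=(-1)^{s(s+1)/2}\det[\tilde\nu_{j+k-2}(t)]$ and $\tilde\nu_m(t)=[z^{r-1-m}]\,(1+tz)^{r+q}(z-1)^{-s}$ is an explicit polynomial in $t$; the expansion of $(z-1)^{-s}$ \emph{truncates} the sum over powers of $t$ at $t^{r-1-m}$, and this truncation is the whole source of the nontrivial correction.

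First I would dispose of the leading order $F_{r,s,q}\to1$ by replacing $\tilde\nu_m$ with its untruncated counterpart $\tilde\nu_m^{\infty}(t)=(-1)^{s}[x^{s-1}](1+x)^{s-q-2-m}(1+x+t)^{r+q}$, all powers of $t$ retained. Substituting $y=(1+x)^{-1}$ and applying Andr\'eief's identity in reverse turns $\det[\tilde\nu^{\infty}_{j+k-2}]$ into a Hankel determinant $\det[\rho_{j+k-2}]$ with $\rho_m=[x^{s-1-m}]\,(1+x+t)^{r+q}(1+x)^{-(s+q)}$. Since $\rho_m=0$ for $m\ge s$, this matrix is anti-triangular, so its determinant is $\pm\rho_{s-1}^{\,s}=\pm(1+t)^{s(r+q)}$; tracking the signs gives $(-1)^{s(s+1)/2}\det[\tilde\nu^{\infty}_{j+k-2}]=(1+t)^{s(r+q)}$ exactly, whence $(1-\alpha)^{s(r+q)}(1+t)^{s(r+q)}=1$. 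This is the analytic counterpart of the fact that the unique $\alpha=0$ configuration lies in $\Omega_{r,s,q}$.

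The correction comes entirely from the truncation tail $C_m(t):=\tilde\nu^{\infty}_m-\tilde\nu_m$. Its intermediate terms vanish (their factor $\binom{s+r-2-m-\ell}{s-1}$ has upper index in $[0,s-2]$), leaving powers $t^{\ell}$ with $\ell\ge s+r-1-m$; the smallest exponent over the whole matrix is $t^{r-s+1}$, attained only in the $(s,s)$ entry, where $[t^{r-s+1}]C_{ss}=(-1)^{s}\binom{r+q}{r-s+1}\binom{-1}{s-1}=-\binom{r+q}{r-s+1}$. By multilinearity only the replacement of column $s$ contributes at this order, so
\[
[t^{r-s+1}]\det[\tilde\nu_{j+k-2}]=\binom{r+q}{r-s+1}\,\det_{1\le j,k\le s-1}\!\Bigl[\tbinom{s+r-j-k}{s-1}\Bigr],
\]
the last factor being the $(s,s)$ cofactor of the leading matrix. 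Restoring $(1-\alpha)^{s(r+q)}\to1$ and $t^{r-s+1}\to\alpha^{r-s+1}$ and using $\binom{r+q}{r-s+1}=\binom{r+q}{s+q-1}$, the proposition is reduced to the two binomial-Hankel evaluations
\[
\det_{1\le j,k\le s}\Bigl[\tbinom{s+r-j-k}{s-1}\Bigr]=(-1)^{s(s+1)/2+s},
\qquad
\det_{1\le j,k\le s-1}\Bigl[\tbinom{s+r-j-k}{s-1}\Bigr]=(-1)^{s(s+1)/2+1}\binom{r}{s-1},
\]
the first fixing $G(0)=1$ and the second producing the claimed coefficient. I expect these determinant evaluations (provable by elementary column operations, or by Lindstr\"om--Gessel--Viennot counting of lattice paths) to be the only real work; as a check, both the reduction and the identities reproduce the exact value $F_{r,1,0}=1-\alpha^{r}$ and give $F_{r,2,0}=1-r^{2}\alpha^{r-1}+\cdots$, matching $\binom{r}{s-1}\binom{r+q}{s+q-1}$.
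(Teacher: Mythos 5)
Your argument is correct, but it is a genuinely different route from the paper's. The paper never touches the multiple-integral/Hankel form here: it works with the Fredholm determinant $F_{r,s,q}=\Det(1-\hat K)=\exp\{-\sum_n\frac1n\Tr(\hat K^n)\}$, rescales the double contour integral for $\Tr(\hat K)$ via $\lambda=\alpha\nu$, $w=1/\mu$ to pull out the factor $\alpha^{r-s+1}$, evaluates the two surviving integrals at $\alpha=0$ by residues (giving $(-1)^{s-1}\binom{r}{s-1}$ and $(-1)^{s+q-1}\binom{r+q}{s+q-1}$), and then notes $\Tr(\hat K^k)=O(\alpha^{k(r-s+1)})$ so that higher traces do not contribute; this is about a five-line computation and meshes with the transseries picture the authors exploit later (each $\Tr(\hat K^n)$ generating one series). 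Your approach instead stays with \eqref{oldEFP}, Andr\'eief, and the truncation structure of the moments $\tilde\nu_m(t)$. I checked the key points: the anti-triangularization giving $(1-\alpha)^{s(r+q)}\cdot(-1)^{s(s+1)/2}\det[\tilde\nu^\infty_{j+k-2}]=1$ exactly; the order count showing that among all multilinear column replacements only the $(s,s)$ entry of the single $C$-column in position $s$ reaches order $t^{r-s+1}$ (double replacements are $O(t^{2r-2s+4})$); the coefficient $[t^{r-s+1}]C_{2s-2}=-\binom{r+q}{r-s+1}$; and both binomial-Hankel evaluations on small cases (note the first one is actually already implied by your own untruncated computation at $t=0$, since $\tilde\nu_m(0)=\tilde\nu^\infty_m(0)$ for all $m\le 2s-2\le r+s-2$ when $s\le r$, so only the $(s-1)\times(s-1)$ identity is an additional input; it is indeed classical and LGV-provable). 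What each approach buys: the paper's is shorter, requires no determinant identities, and extends immediately to the whole transseries; yours is more elementary, makes the exact cancellation at $\alpha=0$ transparent, and in principle yields the full polynomial correction term by term, at the cost of importing the binomial-Hankel evaluations.
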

\begin{proof}
Consider the $n=1$ term in the sum \eqref{detexp}, which is just the
trace of the our operator $\hat K$, see Proposition~\ref{prop1},
\begin{equation}
\Tr(\hat K)=\oint_{C_0}\frac{(\lambda-\alpha)^r}
{(\lambda-1)^{r+q}\lambda^s}\frac{d\lambda}{2\pi\rmi}
\oint_{C_\infty}\frac{(w-1)^{r+q}w^s}{(w-\alpha)^r(w-\lambda)^2}\frac{dw}{2\pi\rmi}.
\end{equation}
Change variables:
\begin{equation}
\lambda=\alpha\nu,\qquad w=1/\mu.
\end{equation}
Note that since $0<\alpha<1$ contour $C_0\to C_0$, while $C_\infty\to -C_0$, thus
\begin{equation}
\Tr(\hat K)=\alpha^{r-s+1}(-1)^q\oint_{C_0}\frac{(1-\nu)^r}{(1-\alpha\nu)^{r+q}\nu^s}
\frac{d\nu}{2\pi\rmi}
\oint_{C_0}\frac{(1-\mu)^{r+q}}{(1-\alpha\mu)^r\mu^{s+q}(1-\alpha\nu\mu)^2}
\frac{d\mu}{2\pi\rmi}.
\end{equation}
Therefore as $\alpha\to0$ we find
\begin{equation}
\Tr(\hat K)=\alpha^{r-s+1}(-1)^q\oint_{C_0}\frac{(1-\nu)^r}{\nu^s}
\frac{d\nu}{2\pi\rmi}
\oint_{C_0}\frac{(1-\mu)^{r+q}}{\mu^{s+q}}\frac{d\mu}{2\pi\rmi}
\left(1+O(\alpha)\right).
\end{equation}
Now we calculate the above integrals via residues, that gives
$(-1)^{s-1}\binom{r}{s-1}$ and $(-1)^{s+q-1}\binom{r+q}{s+q-1}$,
respectively,
and so we get the second term in the right-hand side of
\eqref{frsq-alpha-0}.

Analogous consideration shows that
$\Tr(\hat K^k)=O(\alpha^{k(r-s+1)})$ as $\alpha\to0$
and thus the further terms in \eqref{detexp} do not contribute
to the leading order of the Taylor series of $F_{r,s,q}$ at $\alpha=0$.
\end{proof}

Setting $q=0$, we have the following result.
\begin{corollary} \label{Cor:ato0}
As $\alpha\to 0$,
\begin{equation}\label{logato0}
\log(1-F_{r,s,0})= (r-s+1)\log \alpha + 2\log\binom{r}{s-1}
+ O(\alpha)
\end{equation}
and
\begin{equation}\label{logFa0}
\log F_{r,s,0}=-\exp\left\{(r-s+1)\log \alpha + 2\log\binom{r}{s-1} + O(\alpha)\right\}.
\end{equation}
\end{corollary}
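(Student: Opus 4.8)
The plan is to derive the Corollary directly from Proposition~\ref{Propato0} by specialization and two elementary logarithmic manipulations. First I would set $q=0$ in \eqref{frsq-alpha-0}; since $\binom{r+q}{s+q-1}$ collapses to $\binom{r}{s-1}$ in this case, the proposition gives
\[
F_{r,s,0}=1-\binom{r}{s-1}^2\alpha^{r-s+1}+O\left(\alpha^{r-s+2}\right),
\]
and hence, pulling the leading power out of the error term,
\[
1-F_{r,s,0}=\binom{r}{s-1}^2\alpha^{r-s+1}\left(1+O(\alpha)\right).
\]

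To obtain \eqref{logato0} I would simply take the logarithm of the last display. The product becomes the sum $(r-s+1)\log\alpha+2\log\binom{r}{s-1}+\log\left(1+O(\alpha)\right)$, and since $\log\left(1+O(\alpha)\right)=O(\alpha)$ as $\alpha\to0$, this is precisely \eqref{logato0}.

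To obtain \eqref{logFa0} I would write $\log F_{r,s,0}=\log\left(1-(1-F_{r,s,0})\right)$ and expand the logarithm about $1$. Setting $y\equiv 1-F_{r,s,0}$, which tends to $0$ as $\alpha\to0$ because $r-s+1\ge1$, the expansion $\log(1-y)=-y\left(1+O(y)\right)$ yields $\log F_{r,s,0}=-(1-F_{r,s,0})\left(1+O(1-F_{r,s,0})\right)$. Using \eqref{logato0} to write $1-F_{r,s,0}=\exp\left\{(r-s+1)\log\alpha+2\log\binom{r}{s-1}+O(\alpha)\right\}$ and re-exponentiating the correction factor as $1+O(1-F_{r,s,0})=\exp\left\{O(\alpha^{r-s+1})\right\}$, I can absorb this factor into the exponent to reach \eqref{logFa0}.

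The derivation presents no genuine obstacle; the only point requiring care is the final absorption step, where one must observe that the correction produced by expanding $\log(1-y)$ is of order $\alpha^{r-s+1}$, hence at most of order $\alpha$ since $r-s+1\ge1$, so that it is legitimately swallowed by the $O(\alpha)$ already appearing in the exponent of \eqref{logFa0}.
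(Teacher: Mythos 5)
Your proposal is correct and follows exactly the paper's (implicit) route: the paper simply states the corollary as the $q=0$ specialization of Proposition~\ref{Propato0}, and your logarithmic manipulations — including the observation that the correction from expanding $\log(1-y)$ is $O(\alpha^{r-s+1})=O(\alpha)$ because $s\le r$ — are the intended and valid justification.
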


Our next task is to get asymptotics of the logarithm of
the binomial coefficient in \eqref{logato0} as $s\to\infty$, $r=s/v$.
We recall the known asymptotics  of
logarithm of the Gamma-function as $z\to\infty$
(see, e.g., \cite{B-53}),
\begin{multline}
\log\Gamma(z+a)=
\left(z+a-\frac{1}{2}\right)\log z -z
+\frac{1}{2}\log(2\pi)
\\
+\sum_{n=1}^m
\frac{(-1)^{n+1}B_{n+1}(a)}{n(n+1)z^n}+O(z^{-m-1}),
\qquad |\arg z|<\pi,
\end{multline}
where $B_n(a)$ are the Bernoulli polynomials,
\begin{equation}
B_n(a)=\sum_{k=0}^n \binom{n}{k}B_k a^{n-k}.
\end{equation}
Here $B_k$ are the Bernoulli numbers,
$B_1=-1/2$, $B_{2k+1}=0$, $k=1,2,\dots$, and
$B_{2k}$ are given by \eqref{Bernoulli}.

Using the above formula for
asymptotics of logarithm of the Gamma-function we
arrive at the following result.
\begin{proposition}\label{Cor:logrs}
As $s\to\infty$ and $r=s/v$,
\begin{multline}\label{logB}
2\log \binom{r}{s-1} =- 2\left(\log v +\frac{1-v}{v}\log(1-v)\right) s
-\log s + \log\frac{v^2}{2\pi (1-v)^3}
\\
+\sum_{n=1}^m\frac{B_{2n}v^{2n-1}-(B_{2n}+2n)(\frac{v}{1-v})^{2n-1}
-B_{2n}}
{n (2n-1)s^{2n-1}}
\\
+\sum_{k=1}^m\frac{v^{2k}}{k (1-v)^{2k} s^{2k}}+O(s^{-2m-1}).
\end{multline}
\end{proposition}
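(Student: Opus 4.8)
The plan is to write the binomial coefficient as a ratio of three Gamma functions,
\[
\binom{r}{s-1}=\frac{\Gamma(r+1)}{\Gamma(s)\,\Gamma(r-s+2)},
\]
and to feed each factor into the asymptotic expansion of $\log\Gamma(z+a)$ quoted just above the Proposition, choosing in each case a large variable $z$ proportional to $s$ and an integer shift $a$. Concretely, using $r=s/v$ and $r-s=s(1-v)/v$, I would take $z_1=s/v$, $a_1=1$ for $\Gamma(r+1)$; $z_2=s$, $a_2=0$ for $\Gamma(s)$; and $z_3=s(1-v)/v$, $a_3=2$ for $\Gamma(r-s+2)$. The logarithms of the large variables are $\log z_1=\log s-\log v$, $\log z_2=\log s$, and $\log z_3=\log s+\log(1-v)-\log v$, so that all $s$-dependence is made explicit through $\log s$ and the separated constants $-\log v$, $0$, and $\log(1-v)-\log v$.

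The leading behaviour then follows by collecting the non-series parts $(z+a-\tfrac12)\log z-z+\tfrac12\log2\pi$ of the three expansions with signs $(+,-,-)$ and the overall factor $2$. First I would check that the $s\log s$ contributions cancel, since their coefficient is $2\bigl(\tfrac1v-1-\tfrac{1-v}{v}\bigr)=0$. The surviving terms then reproduce the announced prefactors: the coefficient of $s$ collapses to $-2\bigl(\log v+\tfrac{1-v}{v}\log(1-v)\bigr)$; the coefficient of $\log s$ equals $2\bigl[(a_1-\tfrac12)-(a_2-\tfrac12)-(a_3-\tfrac12)\bigr]=-1$; and the constant assembles from $-\log2\pi$ (from the three $\tfrac12\log2\pi$ terms) together with $2\log v-3\log(1-v)$ (from the $(a-\tfrac12)\log(\mathrm{const})$ contributions), giving $\log\frac{v^2}{2\pi(1-v)^3}$.

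The power-series part is where the real work lies. Each factor contributes $2\,(\pm)\sum_{n\ge1}\frac{(-1)^{n+1}B_{n+1}(a)}{n(n+1)}z^{-n}$, and since $z_i^{-n}=c_i^{\,n}s^{-n}$ with $c_1=v$, $c_2=1$, $c_3=\tfrac{v}{1-v}$, these become genuine powers of $1/s$. The simplification rests on three identities for the Bernoulli polynomials at the chosen shifts: $B_{n+1}(0)=B_{n+1}(1)=B_{n+1}$ for $n\ge1$, and $B_{n+1}(2)=B_{n+1}+(n+1)$ for $n\ge1$, the last from $B_m(x+1)-B_m(x)=m\,x^{m-1}$. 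For odd $n=2k-1$ one has $B_{n+1}=B_{2k}\neq0$, and the three factors combine into
\[
\frac{B_{2k}v^{2k-1}-(B_{2k}+2k)\left(\tfrac{v}{1-v}\right)^{2k-1}-B_{2k}}{k(2k-1)s^{2k-1}},
\]
which is exactly the first sum in \eqref{logB}. For even $n=2k$ the genuine Bernoulli numbers $B_{2k+1}$ vanish, see \eqref{Bernoulli}, so only the extra $+(n+1)$ term of $B_{n+1}(2)$ survives; tracking the sign $(-1)^{n+1}=-1$ it yields $\frac{v^{2k}}{k(1-v)^{2k}s^{2k}}$, the second sum. Truncating the $\log\Gamma$ expansions at order $z^{-2m}$ with remainder $O(z^{-2m-1})$ then gives the stated error $O(s^{-2m-1})$.

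The main obstacle is purely the combinatorial bookkeeping: keeping the three sign patterns, the factors $c_i^{\,n}$, and the alternating factor $(-1)^{n+1}$ aligned while invoking the Bernoulli-polynomial shift identities, and in particular verifying that the odd- and even-power contributions separate cleanly into the two sums of \eqref{logB}. I expect no conceptual difficulty beyond this, provided one carries the $\log\Gamma$ expansion to a uniform order $2m$ in all three factors and uses the vanishing of $B_{2k+1}$ throughout.
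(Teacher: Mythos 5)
Your proposal is correct and follows exactly the route the paper intends: writing $\binom{r}{s-1}=\Gamma(r+1)/[\Gamma(s)\Gamma(r-s+2)]$ and applying the quoted $\log\Gamma(z+a)$ expansion with $(z,a)=(s/v,1)$, $(s,0)$, $(s(1-v)/v,2)$, then using $B_{n+1}(0)=B_{n+1}(1)=B_{n+1}$, $B_{n+1}(2)=B_{n+1}+(n+1)$ and the vanishing of odd Bernoulli numbers to separate the odd and even powers of $1/s$ into the two sums of \eqref{logB}. The paper gives no further detail beyond this, so your computation supplies precisely the omitted bookkeeping, and all the cancellations you identify (of $s\log s$ and of the linear terms $-z_1+z_2+z_3$) check out.
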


Corollary \ref{Cor:ato0} and Proposition \ref{Cor:logrs}
imply that the simplest possible asymptotic ansatz
for the logarithm of $F_{r,s,0}$ as $s\to\infty$, $r=s/v$,
reads
\begin{equation}\label{anlogF}
\log F_{r,s,0}=-\exp\left\{-\chi s-\log s
+\sum_{k=0}^{m}\frac{b_k}{s^k}+O(s^{-m-1})\right\}.
\end{equation}
In view of \eqref{qzero}, the asymptotic ansatz
for the $\sigma$-function, which corresponds to \eqref{anlogF},
is
\begin{equation}\label{sigmaomega}
\sigma= \left(-\frac{(1+v)^2}{4v^2}\alpha+\frac{1}{2v}\right)s^2 +
\rme^{-s\omega}+O(\rme^{-2s\omega_0}),
\end{equation}
where the function $\omega$ admits an asymptotic
expansion in the inverse powers of $s$,
\begin{equation}\label{omega}
\omega=\omega_0 + \omega_{1} s^{-1}+ \omega_{2}s^{-2}+\cdots.
\end{equation}

At first glance, the asymptotic ansatz \eqref{sigmaomega}
is considerably different in comparison with the ansatz for
the $\sigma$-function in
the disordered regime, see \eqref{sigmas}. However, it can be viewed
as the same ansatz but with the function $\varphi\equiv 0$,
$\alpha\in(0,\beta)$. Indeed, as it is mentioned in
Remark~\ref{expsmall}, an asymptotic expansion
for the $\sigma$-function
may contain exponentially small corrections, which are not
indicated in \eqref{sigmas}. At the same time, in case
$\varphi\equiv 0$, the
corresponding function $\sigma_2$
is just given by the coefficient
of the first term in \eqref{qzero}. From the point of view
of the solutions of \eqref{reducedODE}, which govern $\sigma_2$,
this expansion corresponds to the unique
case where both general solutions \eqref{gensol} coincide,
$C_\mathrm{I}=C_\mathrm{II}=-(1+v^2)/4v^2$.
It happens that in this very case
$\sigma_{-2k}\equiv 0$, $k=0,1,\dots$.
Comparing \eqref{qzero} and \eqref{sigmaomega} with \eqref{s1decay}
we see that the function $\chi$ is equal to $\omega_0$,
moreover it should be positive for $\alpha\in(0,\beta)$.

Justification of asymptotic
ansatz \eqref{sigmaomega}, \eqref{omega}
can be obtained analogously as it was
done for ansatz \eqref{sigmas}.
However, here we have an independent proof for
ansatz \eqref{anlogF}
based on the saddle-point method applied to the Fredholm
determinant, presented in Appendix~\ref{Appendix-B}. Therefore in what follows
we consider an explicit construction of the coefficients in the
series \eqref{omega}.

From the $\sigma$-form of P6 it follows that the $\omega$-function
in all orders in $1/s$ obeys the following ODE:
\begin{multline}\label{omegaeq}
\alpha^2(\alpha-1)^2\left(s(\omega')^2-\omega''\right)^2
\\
=\left[(\alpha-1)s\omega'+1\right]
\left[\frac{(1+v)^2\alpha-(1-v)^2}{v^2}s\omega'+\frac{(1+v)^2}{v^2}
\right].
\end{multline}
For example, keeping the leading order in this equation, we find that
the function $\chi\equiv\omega_0$ satisfies equation
\begin{equation}\label{chieq2}
\alpha^2(\alpha-1)(\chi')^2=
\frac{(1+v)^2\alpha-(1-v)^2}{v^2}.
\end{equation}
Since $\alpha\in (0,1)$ and $\chi'$ is real, the
right-hand side of \eqref{chieq2} is negative, therefore
$\alpha\in(0,\beta)$, where $\beta<1$ is given in \eqref{beta}.
Using \eqref{logato0} and \eqref{anlogF}, we note that
$\chi$ has the following asymptotic behavior as $\alpha\to 0$,
\begin{equation}\label{chi-init}
\chi=-\frac{1-v}{v}\log\alpha
+2\left(\log v+\frac{1-v}{v}\log(1-v)\right)+O(\alpha).
\end{equation}
Solving \eqref{chieq2} for $\chi'$, we obtain
\begin{equation}\label{chieq}
\chi'=
-\frac{1}{v\alpha}
\sqrt{\frac{(1-v)^2-(1+v)^2\alpha}{1-\alpha}}
=
-\frac{2}{(1-\sqrt{\beta})\alpha}
\sqrt{\frac{\beta-\alpha}{1-\alpha}},
\end{equation}
where sign minus is chosen to ensure
the correct coefficient of the $\log\alpha$ term in \eqref{chi-init}.
Integrating \eqref{chieq} and fixing the integration constant
according to the second term in \eqref{chi-init}, we obtain
\begin{equation}
\chi=\frac{4}{1-\sqrt{\beta}}
\left\{\sqrt{\beta}
\log\left(
\frac{\sqrt{\beta(1-\alpha)}+\sqrt{\beta-\alpha}}{\sqrt{(1-\beta)\alpha}}
\right)
-\log\left(\frac{\sqrt{1-\alpha}+\sqrt{\beta-\alpha}}{\sqrt{1-\beta}}\right)
\right\}.
\end{equation}
Note that this expression can also be rewritten as
\begin{multline}\label{longchi}
\chi
=2\frac{1+\sqrt{\beta}}{1-\sqrt{\beta}}
\log\left(
\frac{\sqrt{\beta}+\alpha+\sqrt{(1-\alpha)(\beta-\alpha)}}
{(1+\sqrt{\beta})\sqrt{\alpha}}\right)
\\
-2\log\left(
\frac{\sqrt{\beta}-\alpha+\sqrt{(1-\alpha)(\beta-\alpha)}}
{(1-\sqrt{\beta})\sqrt{\alpha}}
\right),
\end{multline}
or, noticing that the numerators of arguments of the logarithms are perfect squares, as
\begin{multline}\label{longchi2}
\chi
=4\frac{1+\sqrt{\beta}}{1-\sqrt{\beta}}
\log\left(
\frac{\sqrt{(1+\sqrt{\alpha})(\sqrt{\beta}+\sqrt{\alpha})}
+\sqrt{(1-\sqrt{\alpha})(\sqrt{\beta}-\sqrt{\alpha})}}
{\sqrt{2(1+\sqrt{\beta})\sqrt{\alpha}}}\right)
\\
-4
\log\left(
\frac{\sqrt{(1-\sqrt{\alpha})(\sqrt{\beta}+\sqrt{\alpha})}
+\sqrt{(1+\sqrt{\alpha})(\sqrt{\beta}-\sqrt{\alpha})}}
{\sqrt{2(1-\sqrt{\beta})\sqrt{\alpha}}}\right).
\end{multline}
These expressions imply that $\chi|_{\alpha=\beta}=0$, and furthermore
$\chi>0$ for $\alpha\in(0,\beta)$, since it can be written, similarly to \eqref{phi-int},
as the integral
\begin{equation}
\chi=\frac{2}{1-\sqrt{\beta}}\int_{\alpha}^{\beta}
\frac{1}{\tilde\alpha}\sqrt{\frac{\beta-\tilde\alpha}{1-\tilde\alpha}}\, \rmd \tilde \alpha.
\end{equation}
Using \eqref{longchi} it can be shown that $\chi=2J(t+1)$, where $J(t)$ is 
the upper tail rate function (for $\gamma=1$)
obtained by Johansson, see Eqs. (2.21) and (2.25) in~\cite{J-00}, and
$t=v^{-1}-1$. Rewriting \eqref{longchi2} in terms of $v$ and $\vc$, see \eqref{vc},
we arrive at \eqref{chiuv}.

After we have determined $\omega_0\equiv \chi$, the
first derivatives of the further coefficients
in the expansion \eqref{omega}, namely $\omega_k'$'s,
can be derived uniquely from \eqref{omegaeq}.
Equations \eqref{qzero}, \eqref{anlogF}, and \eqref{sigmaomega}
implies the following equation for the coefficients $b_k$:
\begin{equation}
\log \alpha+\log (1-\alpha)
+\log (-\chi')+\log\left(1+\sum_{k=0}^\infty\frac{b_k'}{(-\chi') s^{k+1}}\right)
+\sum_{k=0}^\infty \frac{b_k}{s^k}
=-\sum_{k=0}^\infty \frac{\omega_{k+1}}{s^{k}}.
\end{equation}
For example, the first three coefficients $b_k$ are
\begin{equation}
\begin{split}
b_0
&=-\omega_{1}-\log (-\chi')-\log \alpha-\log (1-\alpha),
\\
b_1
&=-\omega_2+\frac{b_0'}{\chi'},
\\
b_2
&=-\omega_3+\frac{b_1'}{\chi'}+\frac{(b_0')^2}{2(\chi')^2}.
\end{split}
\end{equation}
The undermined constants in $\omega_k$ are fixed with the help
of the expansions \eqref{logFa0} and \eqref{logB}.
Using this algorithm, we obtain the following expressions:
\begin{equation}\label{coeffb2}
\begin{split}
b_0
&=\log\left(\frac{\alpha v^2}{2\pi\sqrt{1-\alpha}
\left[(1-v)^2-\alpha(1+v)^2\right]^{3/2}}\right).
\\
b_1
&=-\frac{(1+v^4)(1-\alpha)^2+9(v+v^3)(1-\alpha^2)-2v^2(10\alpha^2-11\alpha+10)}
{6\sqrt{1-\alpha}\left[(1-v)^2-\alpha(1+v)^2\right]^{3/2}},
\\
b_2
&=\frac{v^2}
{(1-\alpha)((1-v)^2-(1+v)^2\alpha)^3}\big[
(\alpha^2+8\alpha+1)(1-\alpha)^2(v^4+1)
\\ &\qquad
-4(1-\alpha)(\alpha^3+1)(v^3+v)+6(1-\alpha)^4v^2+4\alpha(\alpha^2+\alpha+1)v^2\big].
\end{split}
\end{equation}

Clearly,
\begin{equation}
\log(1- F_{r,s,0})=-\chi s-\log s
+\sum_{k=0}^{m}\frac{b_k}{s^k}+O(s^{-m-1}),
\end{equation}
that finishes the proof of Theorem~\ref{TDL2}.

It is interesting to note that \eqref{omegaeq}
can be solved explicitly in terms of the hypergeometric function,
see Appendix~\ref{Appendix-C}.

\section{Conclusion}

The goal of this paper is twofold: to elaborate an algorithm for
construction of full asymptotic expansions for a physically interesting quantity
for an integrable model, and to show that the
theory of P6 can serve as an effective tool for this purpose.

The situation is not absolutely standard since we used an ODE with respect to a
continuous variable to find asymptotic expansions with respect to discrete
parameters. At first glance, Toda or discrete Painlev\'e equations might be more
suitable for this problem. Indeed these equations can be used in
obtaining the leading terms of our asymptotic expansions \cite{CP-13},
however, a perspective of obtaining higher order terms (HOTs) of the expansions 
with the help of the difference or differential-difference equations seems unclear.

Our problem looks like a standard object for asymptotic methods based on the
Riemann-Hilbert conjugation problem or isomonodromy deformations (see
Section~\ref{sec:RH}). These methods however deliver the leading terms of the
corresponding asymptotic expansions, while the explicit formulas for HOTs
require much more deliberate and technically complicated
efforts, so that anyway HOTs are easier to derive via substitution of the
asymptotic expansions into the corresponding integrable (differential,
difference, etc) equation.

In this paper we show that in our case not only HOTs but also the leading terms
can be obtained via the substitution of asymptotic expansions into P6. Since we
use the ODE that plays an auxiliary role with respect to discrete
parameters we need initial conditions for determination of all terms of the
asymptotic expansion rather than just for the leading terms, as it usually happens
in the standard approach. To get these initial conditions one needs to know asymptotic
behavior of the $\tau$-function at critical points of P6. In our case it is asymptotic
behavior of $F_{r,s,0}$ as $\alpha\to1$ and $\alpha\to0$,
Propositions~\ref{prop:al-to-1} and
\ref{Propato0}, respectively. The results stated in these propositions are,
in fact, equivalent to the connection formula for the P6 $\tau$-func\-tion. We 
have found no corresponding result in the literature, probably because our
solution is too special from the point of view of the general theory of P6.

Asymptotics as $\alpha\to0$ is obtained in Proposition~\ref{Propato0} with the
the help of the Fredholm kernel $\hat{K}$, which is the principle manifest of
integrability. At the same time the proof of asymptotic expansion as
$\alpha\to1$ in Proposition~\ref{prop:al-to-1} is obtained by an \emph{ad hoc}
observation, which may not occur for some other analogous problem. We used this
observation because it simplifies our derivation, however we could employ a
more deliberate scheme, we are going to outline below, where $\alpha\to1$ asymptotics
is not required. In principle, the connection problem for
$\tau$-function of P6 is a different and external problem  in comparison to the one
we consider here, so that we can just refer to it. However, there is a way to avoid this
problem, or more precisely, to change it to another one: the connection problem
for the ``corresponding'' $\tau$-function of the second Painlev\'e equation.

The key point of this scheme is the representation of $F_{r,s,q}$ as a
determinant of the integrable Fredholm operator $\hat{K}$. The next step is the
derivation of $\sigma$-form of P6. Our problem is to find asymptotics of
$F_{r,s,q}$ for large variables $r$, $s$, and $q$ in the thermodynamic limit
($s/r=v=\text{const}$) for all values of the parameter $\alpha$, $0<\alpha<1$.
As shown in the paper, it is easy to find asymptotics in the ordered regime,
i.e., in the interval $0<v<u=(1-\sqrt{\alpha})/(1+\sqrt{\alpha})$. This
asymptotics requires $\alpha\to0$ asymptotics and the $\sigma$-form of P6 for
$\Det\hat K$. In Appendix~\ref{Appendix-B} we show that in this case we actually can find
asymptotics directly, without any reference to P6, by the help of the
saddle-point method for the $\Tr\hat{K}$.

Then we observe that in the neighborhood $u\sim v$ our asymptotic expansion is
destroyed (see Theorem~\ref{TDL2}) and the so-called ``boundary'' 
variable $z$ can be introduced,
which, as is easy to guess from the structure of HOTs, reads $z=(u-v)s^{2/3}$.
Substituting it into the $\sigma$-form of P6 instead of $\alpha$ one {\it
obviously} finds that $\sigma(\alpha)\to\sigma_2(z)$, where $\sigma_2(z)=\int
y^2(\tilde z)\,d\tilde z$ is a solution of the $\sigma$-form of the second
Painlev\'e equation (P2). It is well-known that
$\sigma_2(z)=-\int y^2(\tilde z)\,d\tilde z$, where $y(\tilde z)$ is the
corresponding solution of the canonical form of P2,
\begin{equation}\label{eq:P2}
y''=2y^3+\tilde zy+C.
\end{equation}

Let us explain why the appearance of P2 is an obvious thing here. As follows from
the calculations presented in Appendix~\ref{Appendix-B} there is a coalescence of
two simple saddle-points in $\Tr\hat K$ at $u=v$. This is equivalent to
coalescence of two simple saddle-points in the steepest-descent method for the
corresponding RH problem or coalescence of four turning points in the isomonodromy
deformation method. This process is described by the solution of the model
RH-problem corresponding to P2 (see details, say, in \cite{K-94}). Surely, one
can deduce it directly from the corresponding differential equations with the
help of the above-mentioned substitution and obtain one of the limits $P6\to
P2$ (see, say, discussion in introduction to \cite{K-06}). How to choose the
proper solution of equation~\eqref{eq:P2} in the framework of this approach?
The solution of P2 should match with the asymptotics in the ordered case: this
means that in terms of variable $z$ it should be exponentially decaying: thus
the constant $C$ in \eqref{eq:P2} vanishes. It is well-known that there exists
only one-parameter family of solutions of P2 regular on the positive real
semi-axis and exponentially decaying there. The value of this parameter should
be determined via the matching procedure in the domain $z\sim +s^\epsilon$,
$\epsilon>0$, with the leading term of asymptotics in the ordered regime, $v<u$,

On the negative semi-axis solutions of this family can be oscillating or possess
infinite number of poles, depending on the value of the parameter of the family.
There exists also the unique solution that separate these two types of asymptotic
behavior, it can be treated as the solution with the only pole at the point
at infinity, it is known as the Hastings-McLeod solution \cite{HMcL-80}. This
solution grows as $\sqrt{-z}/2$ for $z\to-\infty$.

Since we already know asymptotics of $F_{r,s,q}$ in the disordered regime,
which is regular and non-oscillating, it is obvious that the only suitable solution
of P2 is the Hastings-McLeod solution. However from the formal point of view this fact
should be established by applying the matching procedure described above.

Asymptotics of the Hastings-McLeod solution for $z\sim -s^\epsilon$,
$\epsilon>0$, serves as the ``initial data'' for asymptotics of $F_{r,s,q}$ in
the region $v>u$. Obviously, to use the ``P2-asymptotics'' as the
``initial data'' we have to develop this asymptotic expansion up to all
decaying orders of $z$. From our point of view, this expansion is an important
ingredient of the asymptotic scheme based on matching of asymptotic expansions.
In certain cases, the scheme which can serve as an effective alternative to
other asymptotic methods.

At this stage we just explain that the leading terms of the ``P2-asymptotics'' 
and the expansion constructed
in Theorem~\ref{TDL} do match.
For the leading term we have
\begin{equation}\label{eq:P2-matching-left}
\log F_{r,s,q}\sim-\int_{-\infty}^z(z-\tilde z)y^2(\tilde z)\,
\rmd\tilde z\sim-(v-u)^3s^2/12,
\end{equation}
since $y\sim\sqrt{-z/2}$. This shows that the ``Painlev\'e asymptotics''
in the transition domain matches at least with the leading order
of asymptotics obtained in Theorem~\ref{TDL}. Above we have given an explanation
of appearance of P2 purely in terms of the theory of differential equations and
nonlinear special functions.
For the corresponding matrix model the result is well-known via the appearance of
the ``Airy-kernel'' of the associated integrable integral Fredholm operator
$\hat{K}$ \cite{TW-94}.

Simultaneously, the last asymptotic relation in \eqref{eq:P2-matching-left}
explains the physically interesting phenomenon of the third-order phase transition
mentioned in Introduction: the leading term of the asymptotic
expansion of $\log F_{r,s,q}$ (denoted as $\varphi$ in Theorem~\ref{TDL}) vanishes
at $v=u$, together with the first two derivatives (with respect to $v$ or $u$),
whilst the third derivative does not. In other words, interpreting
$\varphi$ as the change of the free energy per volume in the thermodynamic
limit, $-v^2\varphi/(1+v)^2\equiv\lim_{s,r\to\infty}\log F_{r,s,q}/(r+s)^2$,
one concludes that \eqref{eq:P2-matching-left} implies $\varphi\equiv0$ for $v<u$,
thus reproducing the result of \cite{CP-13} about the
leading term of asymptotics described by the function $\varphi$.

\appendix
\section{The Wasow theorem and the $\sigma$-form of P6}
\label{Appendix-A}

The $\sigma$-forms of Painlev\'e equations does not fit the setup 
of the Wasow theorem [Th.~36.1 in Chap.~IX of  \cite{W-87}], because they are quadratic 
with respect to the second derivative of the $\sigma$-function. Nevertheless, in many 
cases this theorem can be applied to these equations due to the presence of the 
Hamiltonian structure.   

Here we give some more details to the general comments 
given at the end of Sect.~4.2 that support item (3) of Remark 4.3. 
Namely, we show here that the Wasow theorem applies to the  representation 
of P6 as a Hamiltonian system, and due to the one-to-one 
correspondence between the canonical variables and $\sigma$-function  
(as proven by Okamoto \cite{O-87}), it implies the validity of the similar result 
for the $\sigma$-form of P6. We first consider the P6 as a Hamiltonian system in 
a general setup, and next proceed to our case.  

Since here we deal with P6 as a Hamiltonian system, we will rely upon 
results obtained by Okamoto \cite{O-87}. To keep contact with our discussion in the main 
text, we make the following identification 
between parameters $\nu_1,\dots,\nu_4$ of Jimbo-Miwa 
and those $b_1,\dots,b_4$ of Okamoto:
\begin{equation}\label{nus-and-bs}
\begin{split}
\nu_1&=\frac{1}{2}(\theta_\alpha+\theta_\infty)=b_3,
\\
\nu_2&=\frac{1}{2}(\theta_\alpha-\theta_\infty)=b_4,
\\
\nu_3&=-\frac{1}{2}(\theta_0+\theta_1)=-b_1,
\\
\nu_4&=-\frac{1}{2}(\theta_0-\theta_1)=-b_2.
\end{split}
\end{equation}
Note that the parameters $b_1,b_2,b_3,b_4$ in \eqref{nus-and-bs} and the $\sigma$-function 
by Jimbo-Miwa, which we use all over the paper, are in fact 
those denoted in  \cite{O-87} as $b_1^+,b_2^+,b_3^+,b_4^+$ and $h^+$, respectively
(see Eqs. (C.55)--(C.61) of \cite{JM-81}, and Eq. (1.13) and Props. 1.6 and 1.8 
of \cite{O-87}). This means that the canonical variables $q$ and $p$ appearing below 
are related to those of P6 given in Sect.~3.1\footnote{The parameters of P6 
used by Jimbo-Miwa in \cite{JM-81} and by Okamoto in \cite{O-87} are related as 
$\theta_0=\varkappa_0$, $\theta_1=\varkappa_1$, $\theta_\alpha=\theta$, 
and $\theta_\infty=\varkappa_\infty+1$.
} 
by a birational canonical transformation (see Eq. (1.12) of \cite{O-87}).

Given the set of parameters $b_1,\dots, b_4$ and the 
$\sigma$-function, 
the canonical coordinate $q$ and momentum $p$ 
are given by the expressions  
\begin{equation}\label{qandp}
\begin{split}
q&=\frac{1}{2A}[(b_3+b_4)B+(\sigma'-b_3b_4)C],
\\	
p&=\frac{1}{2Aq(q-1)}[-(\sigma'-w_2)B+(w_1\sigma'-w_3)C],
\end{split}
\end{equation}	
where 
\begin{equation}\label{ABC}
A=(\sigma'+b_3^2)(\sigma'+b_4^2),\quad
B=\alpha(\alpha-1)\sigma''+s_1\sigma'-s_3,\quad
C=2(\alpha\sigma'-\sigma)-s_2.
\end{equation}
Here, $s_k$ and $w_k$, $k=1,2,3$, are the $k$-th fundamental symmetric 
polynomials of $\{b_1,b_2,b_3,b_4\}$ and $\{b_1,b_3,b_4\}$, respectively.  
The corresponding system of Hamiltonian equations reads
\begin{equation}\label{Hameqs}
\begin{split}
q'&=\frac{q(q-1)(q-\alpha)}{\alpha(\alpha-1)}
\left(2p-\frac{\theta_0}{q}-\frac{\theta_1}{q-1}
-\frac{\theta_\alpha}{q-\alpha}\right),
\\
p'&=\frac{1}{\alpha(\alpha-1)}\bigg\{
\left[-3q^2+2(\alpha+1)q-\alpha\right]p^2
\\ &\quad
+\left[\theta_0(2q-1-\alpha)+\theta_1(2q-\alpha)+\theta_\alpha(2q-1)\right]p-
\frac{(\theta_0+\theta_1+\theta_\alpha)^2-\theta_\infty^2}{4}
\bigg\}.
\end{split}
\end{equation}
Note, that eliminating $p$, one finds that $q$ solves the canonical form of P6
with $a=\tfrac{1}{2}\theta_\infty^2$, $b=-\tfrac{1}{2}\theta_0^2$,
$c=\tfrac{1}{2}\theta_1^2$, and $d=-\tfrac{1}{2}\theta_\alpha(\theta_\alpha+2)$,
in agreement with the birational canonical transformation 
(which can be identified as the change $\theta_\alpha\mapsto\theta_\alpha+1$, 
$\theta_\infty\mapsto\theta_\infty+1$ in the set of monodromy 
parameters of P6 given in Sect.~3.1).

Now we are ready to consider our case. 
In Sect.~4.2 we impose the following relations on the parameters: 
\begin{equation}
\nu_1=\nu_3=-\frac{r+s}{2},\qquad
\nu_2=-\nu_4=-\frac{r-s}{2}.
\end{equation}
Hence,  
$b_3=-b_1=-\nu_1$ and $b_4=b_2=\nu_2$, and so we have 
\begin{align}
s_1&=2\nu_2,& 
s_2&=-\nu_1^2+\nu_2^2,&
s_3&=-2\nu_1^2\nu_2,
\\
w_1&=\nu_2,& 
w_2&=-\nu_1^2,&
w_3&=-\nu_1^2\nu_2,
\end{align}
We find it convenient to set
\begin{equation}
\nu_1=sv_1,\qquad \nu_2=sv_2, 
\end{equation}
where
\begin{equation}
v_1\equiv-\frac{1+v}{2v},\qquad
v_2\equiv-\frac{1-v}{2v},\qquad
v=\frac{s}{r}.
\end{equation}
We recall that $s$ is a large parameter, while $v_1$ and $v_2$
are finite.

Let us map the system \eqref{Hameqs} into the form considered by Wasow.
Consider the formal asymptotic expansion for the 
$\sigma$-function \eqref{sigmas}, which is uniquely defined by its first term, 
\begin{equation}
\sigma=s^2\left(\frac{v_1^2+v_2^2}{2}-2v_1v_2\sqrt{\alpha}\right)
+O(1),\qquad s\to\infty.
\end{equation}
Using it and equations \eqref{qandp} and \eqref{ABC}, 
one finds the corresponding formal expansions for $q$ and $p$ in 
$1/s$. The first terms reads
\begin{equation}\label{qpexpansion}
\begin{split}
q&=-\sqrt{\alpha}+s^{-1}\frac{(\alpha-1)\sqrt{\alpha}(v_1+v_2)}
{4(v_1\sqrt{\alpha}-v_2)(v_2\sqrt{\alpha}-v_1)}+O(s^{-2}),
\\
p&=s\frac{v_1\sqrt{\alpha}-v_2}{\sqrt{\alpha}(\sqrt{\alpha}+1)}+
\frac{(\sqrt{\alpha}-1)\left[(v_1+2v_2)\sqrt{\alpha}+v_2\right]}
{4\sqrt{\alpha}(\sqrt{\alpha}+1)(v_2\sqrt{\alpha}-v_1)}+O(s^{-1}).
\end{split}
\end{equation}
Introduce a 2-component vector function $\vec y=\vec y(\alpha)$, 
\begin{equation}
\vec y=
\begin{pmatrix}
y_1 \\ y_2
\end{pmatrix},\qquad
q=y_1,\qquad p=s y_2,
\end{equation}
where this transformation is motivated by the formal expansion
\eqref{qpexpansion}. Taking into account that in our case 
$\theta_0=-\theta_\infty=-\nu_1+\nu_2$ and
$\theta_1=-\theta_\alpha=-\nu_1-\nu_2$, 
we can write the Hamiltonian system \eqref{Hameqs} in the following form:
\begin{equation}\label{Wasow-vector}
\epsilon \frac{\rmd}{\rmd \alpha}\vec y = \vec f(\alpha;\vec y),
\qquad
\vec f(\alpha;\vec y)=
\begin{pmatrix}
f_1(\alpha;\vec y) \\ f_2(\alpha;\vec y)
\end{pmatrix},
\end{equation}
where $\epsilon\equiv 1/s$ and 
\begin{equation}
\begin{split}
f_1(\alpha;\vec y)&=\frac{y_1(y_1-1)(y_1-\alpha)}{\alpha(\alpha-1)}
\left(2y_2+\frac{v_1-v_2}{y_1}-\frac{(v_1+v_2)(\alpha-1)}{(y_1-1)(y_1-\alpha)}\right),
\\
f_2(\alpha;\vec y)&=\frac{y_2}{\alpha(\alpha-1)}
\big\{
\big[-3y_1^2+2(\alpha+1)y_1-\alpha\big]y_2
-2(v_1-v_2)y_1+2v_1\alpha-2v_2
\big\}.
\end{split}
\end{equation}
The Wasow theorem deals with the system
of the form \eqref{Wasow-vector}\footnote{Note, that here 
we meet a special case of the Wasow theorem, where 
$\vec f(\alpha,\vec y)$ is independent of $\epsilon$.
}
and contains two additional conditions, called in \cite{W-87}
assumption (A) and assumption (B). We are going to verify
that these conditions are valid in our case.

To verify assumption (A), see p. 218 in \cite{W-87}, we first assume that 
$\alpha\in\overline D$ where $\overline D$ is any closed disk in 
$\mathbb{C}\setminus \{0,\beta,1,\infty\}$. The case of our particular interest
is where $\alpha$ belongs to any closed segment in the interval $(\beta,1)$. 
Assumption (A) requires also existence of a holomorphic vector function $\vec \phi$ on 
$\overline D$ such that 
\begin{equation}\label{phi}
\vec f(\alpha;\vec \phi)=0.
\end{equation}
In our case 
\begin{equation}\label{phifun}
\vec \phi=
\begin{pmatrix}
\phi_1 \\ \phi_2
\end{pmatrix},\qquad
\phi_1=
-\sqrt{\alpha},\qquad
\phi_2= 
\frac{v_1\sqrt{\alpha}-v_2}{\sqrt{\alpha}(\sqrt{\alpha}+1)}.
\end{equation}
Note that $\vec \phi$ is holomorphic for $\alpha\in \overline D$. 
Thus, assumption (A) is verified.

To verify assumption (B), see p. 219 in \cite{W-87},
we define a new vector function 
\begin{equation}
\vec y_*=\vec y -\vec \phi,
\end{equation}
and rewrite \eqref{Wasow-vector} as follows:
\begin{equation}\label{neweq}
\epsilon \frac{\rmd}{\rmd \alpha}\vec y_* = 
-\epsilon \frac{\rmd}{\rmd \alpha}\vec \phi
+\mathcal{A}\vec y_*+\vec g(\alpha;\vec y_*).
\end{equation}
Equation \eqref{neweq}
exhibits the constant (with respect to $\vec y_*$), linear and 
nonlinear terms of the system of equations 
\eqref{Wasow-vector} about 
the zero-order (in $\epsilon$) solution 
$\vec \phi$; the vector function $\vec g(\alpha;\vec y_*)$ satisfies
$\partial_{y_{*,k}}\vec g(\alpha;\vec y_*)|_{\vec y_*=0}=0$, $k=1,2$. 
Equation \eqref{neweq} is a special case of Eqn.~(36.9) in \cite{W-87}.
The matrix $\mathcal{A}$ reads:
\begin{equation}
\mathcal{A}=
\begin{pmatrix}
-2\dfrac{v_1\sqrt{\alpha}-v_2}{\alpha(1-\sqrt{\alpha})}
& 2\dfrac{1+\sqrt{\alpha}}{1-\sqrt{\alpha}}
\\[10pt]
-2\dfrac{\big[v_1\alpha(2+\sqrt{\alpha})-v_2(1+2\sqrt{\alpha})\big]
(v_1\sqrt{\alpha}-v_2)}{\alpha^2(1-\alpha)(1+\sqrt{\alpha})^2}
&
2\dfrac{v_1\sqrt{\alpha}-v_2}{\alpha(1-\sqrt{\alpha})}
\end{pmatrix}.
\end{equation}
The determinant evaluates as 
\begin{equation}
\det \mathcal A = 
4\frac{(v_1\sqrt{\alpha}-v_2)(v_2\sqrt{\alpha}-v_1)}{\alpha^{3/2}(1-\alpha)^2}.
\end{equation}
Recall that the disordered regime which we consider in Sect. 4.2 corresponds
to $\alpha\in(\beta,1]$, where $\beta=(\tfrac{1-v}{1+v})^2=v_2^2/v_1^2$. Thus, 
$\det \mathcal{A}\ne 0$ for all values of $\alpha$ of this regime, and  
assumption (B) of the Wasow theorem is fulfilled.

Hence, the Wasow theorem applies to the 
system \eqref{Wasow-vector}, since both assumptions (A) and (B) 
of the Wasow theorem are satisfied in 
the disordered regime. 
Therefore, from the Wasow theorem it follows that there exists a genuine solution 
$q$, $p$ of the system \eqref{Hameqs} with the asymptotic expansions \eqref{qpexpansion}. 
These expansions are uniform on any closed segment of $(\beta,1)$ and all power-terms 
in $\epsilon$ can be constructed uniquely via the explicitly given zero-order
solution \eqref{phifun}. Due to the one-to-one correspondence
between the canonical variables 
$q$, $p$ and the $\sigma$-function, see Eqn.~(1.6) and Prop.~1.2 in \cite{O-87}, 
there exist genuine solution $\sigma$-function with the formal expansion \eqref{sigmas}.

Note that the switch to the Hamiltonian system $q$ and $p$ has mostly 
theoretical rather than practical sense since from the construction above 
it is not obvious that the resulting expansion for the $\sigma$-function 
contains only even powers of $s$. However, this fact is obvious 
due to the $\sigma$-form of P6 \eqref{sigmaform}.

\section{Saddle-point approach to asymptotic expansion of the EFP in the ordered regime}
\label{Appendix-B}

As in the main text we assume here that  $q=0$, which is not crucial for the
approach discussed below. As mentioned in the main part of the paper the
complete asymptotic expansion of $F_{r,s,q}$ can be presented as transseries
with a finite number (actually, of $s$) series. The transseries
expansion of $\log F_{r,s,q}$ consists of infinite number of series; the $n$-th
series in this expansion is generated by the term $\Tr(\hat K^n)$. Here we
consider only construction of the first series of the complete transseries
expansion, i.e., the series generated by $\Tr(\hat K)$; the other series can be
studied analogously, however the corresponding calculations are more involved as
they are related with the saddle-point method for integrals with $n+1$ variables.

The main object of our studies here is the following integral:
\begin{equation}\label{eq:trace-K-nu-mu-q=0}
\Tr(\hat K)=-\frac{\alpha^{r-s+1}}{(2\pi)^2}
\oint_{C_0}\oint_{C_0}\frac1{(1-\alpha\nu\mu)^2}\frac{(1-\nu)^r}{(1-\alpha\nu)^r}
\frac{(1-\mu)^r}{(1-\alpha\mu)^r}\frac{d\nu\,d\mu}{\nu^s\mu^s}.
\end{equation}
We rewrite it in the following way
\begin{equation}\label{eq:trace-K-nu-mu-S}
\Tr(\hat K)=
-\frac{\alpha^{r-s+1}}{(2\pi)^2}\oint_{C_0}\oint_{C_0}e^{sS(\nu,\mu)}f(\nu,\mu)d\nu\,d\mu,
\end{equation}
where
\begin{equation}\label{eq:S-def}
S(\nu,\mu)\equiv S(\nu)+S(\mu), \qquad
S(\nu)=
\frac{1}{v}\log\frac{1-\nu}{1-\alpha\nu}-\log(-\nu),
\end{equation}
and
\begin{equation}\label{eq:f-def}
f(\nu,\mu)\equiv\frac1{(1-\alpha\nu\mu)^2},
\end{equation}
Note that in the definition of $S(\nu,\mu)$ we used the fact
that $s$ is an integer and we recall that $v=s/r$.

Since the variables $\nu$ and $\mu$ are separated in the function $S(\nu,\mu)$, one can
successively apply the usual saddle-point method
for a single variable with respect to each of these two variables. However,
it is easier to employ the saddle-point method for multiple integrals. We follow below
the version of this method developed by Fedoryuk \cite{F-77}.

Saddle points are defined by the system of equations
$\frac\partial{\partial\nu} S=\frac\partial{\partial\mu} S=0$.
There are four saddle points with the coordinates
$(\nu_-,\mu_-)$, $(\nu_-,\mu_+)$, $(\nu_+,\mu_-)$, and $(\nu_+,\mu_+)$, where
\begin{equation}
\nu_\pm=\mu_\pm=-\frac{1-\alpha}{4\alpha v}
\left(\sqrt{1-v\frac{1+\sqrt{\alpha}}{1-\sqrt{\alpha}}}
\pm\sqrt{1-v\frac{1-\sqrt{\alpha}}{1+\sqrt{\alpha}}}\right)^2.
\end{equation}
In the following we assume that
\begin{equation}\label{eq:u-def}
0<v<u\equiv\frac{1-\sqrt{\alpha}}{1+\sqrt{\alpha}}
\end{equation}
This inequality, together with the condition $0<\alpha<1$, implies that
\begin{equation}
\nu_+<-1/{\alpha}^{1/2}<\nu_-<0.
\end{equation}
To  prove it one has to notice that $\nu_+\nu_-=1/\alpha$ where both factors
in the left-hand side are negative. Using this relation we find that
\begin{equation}\label{eq:S+S}
S(\nu_-)+S(\nu_+)=-\frac{1-v}v\log\alpha>0.
\end{equation}
If we denote $y=-\nu$ and consider $\tfrac{d}{dy}S(-y)$ for $y>0$, then
it is easy to notice that $\tfrac{d}{dy}S(-y)<0$,
for $0<y<y_-$ and $y>y_+$, and  $\tfrac{d}{dy}S(-y)>0$, for $y_-<y<y_+$, where
$y_\pm=-\nu_\pm$. Therefore,
\begin{equation}
S(\nu_+,\mu_+)>S(\nu_+,\mu_-)=S(\nu_-,\mu_+)>S(\nu_-,\mu_-).
\end{equation}
In other words, the saddle point surface should not contain the points $(\nu_+,\mu_+)$,
$(\nu_-,\mu_+)$, and $(\nu_+,\mu_-)$.

Thus, asymptotic expansion consists of the contribution related with the saddle point $(\nu_-,\mu_-)$.
We use the general formula for
this contribution obtained by Fedoryuk \cite{F-77},
which in our case can be written as follows,
\begin{equation}\label{eq:S-gen-asympt}
\Tr(\hat K)=\frac{\alpha^{r-s+1}}{2\pi}\frac{e^{2sS(\nu_-)}}{s\,S_{\nu\nu}''(\nu_-)}
\sum_{k=0}^{\infty}R_k(s;\alpha,v),
\end{equation}
where
\begin{equation}
R_k(s;\alpha,v)=\frac1{k!}\left(\frac{\Delta}
{2s\,S_{\nu\nu}''(\nu_-)}\right)^k\left(f(\nu,\mu)e^{sS_1(\nu,\mu;\nu_-,\mu_-)}\right)
\Big|_{(\nu,\mu)=(\nu_-,\mu_-)},
\end{equation}
and $\Delta$ is the Laplacian,
\begin{equation}
\Delta=\frac{\partial^2}{\partial\nu^2}+\frac{\partial^2}{\partial\mu^2}.
\end{equation}
The function $S_1(\nu,\mu;\nu_-,\mu_-)$ is
\begin{equation}
S_1(\nu,\mu;\nu_-,\mu_-)=S(\nu,\mu)-S(\nu_-,\mu_-)-
\frac{1}{2}S_{\nu\nu}''(\nu_-)\left((\nu-\nu_-)^2+(\mu-\mu_-)^2\right),
\end{equation}
where
\begin{equation}
S_{\nu\nu}''(\nu_-)=\frac{(1-v)\nu_{-}+1+v}{\nu_{-}^2(1-\nu_{-})}.
\end{equation}
It is important to mention that \eqref{eq:S-gen-asympt} produces
exponential decay for large $s$. To prove this,
we rewrite identity \eqref{eq:S+S} as
\begin{equation}
S(\nu_-)+S(\nu_+)+\frac{1-v}{v}\log\alpha=0
\end{equation}
and employ the inequality $S(\nu_-)<S(\nu_+)$, that yields
\begin{equation}
2S(\nu_-)+\frac{1-v}{v}\log\alpha<0.
\end{equation}

It is interesting to note that $S(\nu_-)>0$. To show this let us consider $S(-y)$ for positive $y$.
Since for $0<y<1$ both terms (see the second equation in \eqref{eq:S-def})
are positive, $S(-y)>0$ for these values of $y$. At the same time
\begin{equation}
S(-y)=-\log y-\frac{1}{v}\log\alpha+ O\left(1/y\right)<0,\qquad
y\to\infty,
\end{equation}
so that all positive zeros of $S(-y)$ should be greater than 1.
On the other hand, it is easy to see that $S(1/\sqrt{\alpha})>0$,
therefore one zero is greater than $y_+$ and the others (if any) should be
located in the interval $(1,1/\sqrt{\alpha})$.
To prove that actually $S(-y)>0$ for $y\in(1,1/\sqrt{\alpha})$ consider
the following change of variables, $y=1/\alpha^\epsilon$, where
$0<\epsilon\leq1/2$. In this notation $S(-y)$ reads
\begin{equation}
S(-y)=\frac{1}{v}\log\frac{1+\alpha^\epsilon}{1+\alpha^{1-\epsilon}}
+\frac{1-v}{v}\log\frac1{\alpha^\epsilon}>0.
\end{equation}

Now we study expansion \eqref{eq:S-gen-asympt} in more detail.
Let us note that \eqref{eq:S-gen-asympt}
does not represent an asymptotic series because of differentiation
of the function $\exp(sS_1)$ with respect to the large parameter $s$. To get the
asymptotic series one has to rearrange summation
in a proper way, such that the series would represent an expansion
in the decaying powers of $s$. Let us write,
\begin{equation}
R_k(s;\alpha,v)=\sum_{l=0}^{\left[\frac{2k}3\right]}
\frac{R_{k,l}(\alpha,v)}{s^{k-l}},
\end{equation}
where $\left[\frac{2k}3\right]$ is the entire part of $\frac{2k}3$. Note that
\begin{equation}
R_0(s;\alpha,v)=R_{0, 0}(\alpha,v)=f(\nu_-,\mu_-).
\end{equation}
Thus the asymptotic expansion of $\Tr(\hat K)$ reads
\begin{equation}\label{eq:S-gen-asympt1}
\Tr(\hat K)=\frac{\alpha}{2\pi}\frac{f(\nu_-,\mu_-)}{S^{''}_{\nu\nu}(\nu_-)}\,
\exp\left\{s\left(2S(\nu_-)+\frac{1-v}v\log\alpha\right)-\log s\right\}
\sum_{p=0}^{\infty}\frac{\hat b_p}{s^p},
\end{equation}
where
\begin{equation}\label{eq:ap-def}
\hat b_p=\frac1{f(\nu_-,\mu_-)}\sum_{l=0}^{2p}R_{p+l,l}(\alpha,v).
\end{equation}
In particular,
\begin{align}
\hat b_0&=1,
\\
\hat b_1&=\frac{R_{1,0}(\alpha,v)+R_{2,1}(\alpha,v)+R_{3,2}(\alpha,v)}{f(\nu_-,\mu_-)},
\\
\hat b_2&=\frac{R_{2,0}(\alpha,v)+R_{3,1}(\alpha,v)+R_{4,2}(\alpha,v)
+R_{5,3}(\alpha,v)+R_{6,4}(\alpha,v)}{f(\nu_-,\mu_-)}.
\end{align}

To evaluate $\hat b_1$, we note that
\begin{align}
R_{1,0}(\alpha,v)&=\frac{\Delta f(\nu_-,\mu_-)}{2S''(\nu_-)},
\\
R_{2,1}(\alpha,v)&=
-\frac{1}{4S''(\nu_-)^2}\left(S^{(4)}(\nu_-)+2S^{(3)}(\nu_-)D\right)f(\nu_-,\mu_-),
\\
R_{3,2}(\alpha,v)&=\frac5{12}\frac{S'''(\nu_-)^2f(\nu_-,\mu_-)}{S''(\nu_-)^3},
\end{align}
where
\begin{equation}
D=\frac{\partial}{\partial\nu}+\frac{\partial}{\partial\mu},
\end{equation}
and we use the convention that $Df(\nu_-,\mu_-)\equiv Df(\nu,\mu)
\big|_{\nu=\nu_-,\mu=\mu_-}$. Using the above formulas, we obtain
\begin{equation}
\hat b_1=-\frac{(1-\alpha)^2(v^4+1)+9(1-\alpha^2)(v^3+v)-2(10\alpha^2-11\alpha+10)v^2}
{6\sqrt{1-\alpha}((1-v)^2-(1+v)^2\alpha)^{3/2}}.
\end{equation}
Similarly, to compute $\hat b_2$, we note that
\begin{align*}
R_{2,0}(\alpha,v)&
=\frac{\Delta^2 f(\nu_-,\mu_-)}{8S''(\nu_-)^2},
\\
R_{3,1}(\alpha,v)&
=-\frac{1}{24S''(\nu_-)^3}\Big(S^{(6)}(\nu_-)+3S^{(5)}(\nu_-)D+9S^{(4)}(\nu_-)\Delta
\\  & \qquad
+2S^{(3)}(\nu_-)\big(6\Delta D-D^3\big)\Big)f(\nu_-,\mu_-),
\\
R_{4,2}(\alpha,v)&
=\frac{1}{96S''(\nu_-)^4}\Big(19S^{(4)}(\nu_-)^2
+28S^{(3)}(\nu_-)S^{(5)}(\nu_-)
\\  & \qquad
+76S^{(3)}(\nu_-)S^{(4)}(\nu_-)D
+4S^{(3)}(\nu_-)^2\big(17\Delta+3D^2\big)\Big)f(\nu_-,\mu_-),
\\
R_{5,3}(\alpha,v)&
=-\frac5{48}\frac{S^{(3)}(\nu_-)^2}
{S''(\nu_-)^5}\left(11S^{(4)}(\nu_-)+8S^{(3)}(\nu_-)D\right)f(\nu_-,\mu_-),
\\
R_{6,4}(\alpha,v)&=\frac{205}{288}\frac{S^{(3)}(\nu_-)^4f(\nu_-,\mu_-)}{S''(\nu_-)^6}.
\end{align*}
Using these formulas we obtain
\begin{multline}
\hat b_2
=\frac1{72(1-\alpha)[(1-v)^2-(1+v)^2\alpha]^3}
\big[(1-\alpha)^4(v^8+1)
\\
+18(1+\alpha)(1-\alpha)^3(v^7+v)
+(113\alpha^2+782\alpha+113)(1-\alpha)^2(v^6+v^2)
\\
-18(1-\alpha^2)(35\alpha^2-36\alpha+35)(v^5+v^3)
\\
+12(83(\alpha^4+1)-194(\alpha^3+\alpha)+321\alpha^2)v^4\big].
\end{multline}
This result can be also written as follows:
\begin{equation}
\hat b_1=b_1,\qquad
\hat b_2=\frac{\hat b_1^2}2+b_2.
\end{equation}
Here $b_1$ and $b_2$ are given by \eqref{coeffb2}, where they are rewritten in
terms of $u$ (see Equation~\eqref{eq:u-def}).

\section{Explicit form for the exponentially small correction in the ordered regime}
\label{Appendix-C}

As explained in Appendix~\ref{Appendix-B} the complete asymptotic expansion in
the ordered regime can be written in terms of transseries. The first series is
defined by $\Tr\hat K$ and given by the double
integral~\eqref{eq:trace-K-nu-mu-q=0}. At the same time our construction of
asymptotics  presented in Section~\ref{Sec:Order} makes it possible to obtain another
representation for this series. We recall that in terms of the
$\sigma$-function the corresponding series is written in \eqref{sigmaomega} as
$e^{-s\omega}$, where the function $\omega$ is a solution of
ODE \eqref{omegaeq}.
The general solution of \eqref{omegaeq} can be found
explicitly, namely, it has the following form
\begin{equation}\label{eq:omega-prime}
s\omega'=\frac{z^2-\tilde{a}}{z^2-\tilde{b}-(z^2-\tilde{a})\alpha},\qquad
\tilde{a}=\left(\frac{1+v}v\right)^2,\quad
\tilde{b}=\left(\frac{1-v}v\right)^2,
\end{equation}
where the function $z$ reads
\begin{equation}\label{eq:w-log-psi}
z=\pm\frac{2\alpha}s\frac{\rmd}{\rmd\alpha}\log\psi,
\end{equation}
with
\begin{multline}
\psi=C_1\alpha^{s(1-v)/2v}
\Ftwoone{s/v}{-s}{1+s(1-v)/v}{\alpha}
\\
+C_2
\alpha^{-s(1-v)/2v}
\Ftwoone{s}{-s/v}{1-s(1-v)/v}{\alpha}.
\end{multline}
Here $C_1$ and $C_2$ are the constants of integration, which, in fact, are
functions of $s$ and $v$.

Now we have to choose the solution which corresponds to $F_{r,s,0}$. As follows
from \eqref{qzero} and \eqref{sigmaomega},
\begin{equation}\label{eq:F-rs0-int}
\log F_{r,s,0}=-\int_0^\alpha \rme^{-s\omega}\frac{\rmd\alpha}{\alpha(1-\alpha)}
+O\left(\rme^{-2s\omega}\right).
\end{equation}
Using asymptotics as $\alpha\to0$ of the Gauss hypergeometric function and
comparing it with the expansion~\eqref{logFa0} we find that $C_2=0$ and can fix
the constant of integration while finding $\omega$ from the first
equation in \eqref{eq:omega-prime}. Before presenting our final result we give an
intermediate formula for $s\omega'$ which can be obtained by manipulation with
\eqref{eq:omega-prime} and \eqref{eq:w-log-psi} at $C_2=0$,
\begin{equation}
-s\omega'=\frac{\rmd}{\rmd\alpha}
\log\left[\alpha^{s\frac{1-v}v}
\left(z^2-\tilde{b}-(z^2-\tilde{a})\alpha\right)
\Ftwoone{s/v}{-s}{1+s(1-v)/v}{\alpha}
\right].
\end{equation}
After integration and substitution of $s\omega$ into
\eqref{eq:F-rs0-int} with yet indeterminate constant of integration,
application of the formulas for differentiation of the hypergeometric function
\cite{B-53}, determination of the integration constant as explained above, and
some further simplifications, we arrive at the following representation of
$\log F_{r,s,0}$ in the ordered regime,
\begin{multline}\label{eq:F-rs0-new-int}
\log F_{r,s,0}=-s^2\binom{r}{s}^2
\int_0^\alpha
\bigg\{
\frac{1}{1-\alpha}
\left[\Ftwoone{s/v}{-s}{1+s(1-v)/v}{\alpha}\right]^2
\\
-\frac{s(1-v)}{s(1-v)+v}
\Ftwoone{s/v}{-s}{s(1-v)/v}{\alpha}
\Ftwoone{1+s/v}{1-s}{2+s(1-v)/v}{\alpha}
\bigg\}
\alpha^{s(1-v)/v}\,\rmd\alpha
\\
+O\left(\rme^{-2s\omega}\right).
\end{multline}
To see that this integral, in fact, is exponentially decaying as $s\to\infty$,
one has to apply the saddle-point method to the corresponding integral
representation of the hypergeometric function. Contrary to considerations
of Appendix~\ref{Appendix-B}, the asymptotic analysis of the
representation \eqref{eq:F-rs0-new-int} requires just the
saddle-point method applied to a single integral. Note that the correction term in
\eqref{eq:F-rs0-new-int} has the order of the square of the first term.

\bibliography{pvi_bib}
\end{document}